\newtheorem{theorem}{Theorem}[section]
\newtheorem{lemma}[theorem]{Lemma}
\newtheorem{claim}[theorem]{Claim}
\newtheorem{corollary}[theorem]{Corollary}
\newtheorem{definition}[theorem]{Definition}
\newcommand\numberthis{\addtocounter{equation}{1}\tag{\theequation}}
\newcommand{\N}{\mathbb{N}}
\providecommand{\calT}{{\mathcal T}{}}
\newcommand{\Inf}{\mbox{{\sf Info}}}
\newcommand{\Par}{\mbox{{\sf Parity}}}
\newcommand{\sr}{round\xspace}
\newcommand{\srs}{rounds\xspace}
\newcommand{\emptyword}{\lambda}
\newcommand{\CC}{\mathsf{CC}}
\newcommand{\ca}{\mathsf{c_A}}
\newcommand{\cb}{\mathsf{c_B}}
\newcommand{\ch}{\mathsf{c_{ch}}}
\newcommand{\RC}{\mathsf{RC}}
\newcommand{\erasure}{\bot}
\newcommand{\silence}{\square}
\title{Interactive coding resilient to an unknown number of erasures}
\author{
	Ran Gelles\thanks{Faculty of Engineering, Bar-Ilan University. \texttt{ran.gelles@biu.ac.il}} 
\and 
	Siddharth V.\@ Iyer\thanks{University of Washington. \texttt{siyer@cs.washington.edu}. Part of this work done while at Bar-Ilan University and a student at Birla Institute of Technology and Science, Pilani.}
}
\date{}
\begin{document}
\maketitle

\begin{abstract}
We consider distributed computations between two parties carried out over a noisy channel
that may erase messages. 
Following a noise model proposed by Dani et al. (2018), the noise level observed by the parties 
during the computation in our setting is arbitrary and a priori unknown to the parties.

We develop 
\emph{interactive coding schemes} that adapt to the actual level of noise
and correctly execute any two-party computation. Namely, in case the channel erases~$T$ 
transmissions, the coding scheme will take $N+2T$ transmissions using an alphabet of size~$4$ (alternatively, using $2N+4T$ transmissions over a binary channel) 
to correctly simulate any binary protocol that takes $N$ transmissions assuming a 
noiseless channel. 
We can further reduce the communication to~$N+T$ 
by relaxing the communication model 
and allowing parties to remain silent rather than forcing them to communicate 
in every round of the coding scheme.

Our coding schemes are efficient, deterministic, have linear overhead both in their communication and round complexity, and succeed (with probability~1) regardless of the number of erasures~$T$.
\end{abstract}

\section{Introduction} 

Consider two remote parties that use a communication channel in order
to perform some distributed computation. One main obstacle they may face is 
noise added by the communication channel, corrupting their messages and ruining the computation. In the early 90's, Schulman~\cite{schulman92,schulman96}
initiated the field of \emph{interactive coding} where the parties use
coding techniques in order to complete their computation correctly despite
possible communication noise.\looseness=-1

Channels may introduce different types of noise.
Among the common noise types are \emph{substitution} noise,
where the channel changes the content of messages (e.g., it flips communicated bits),
\emph{insertions and deletions}, where the channel may introduce a new message or completely remove
a transmission, and \emph{erasures} where the channel erases transmissions, i.e., replacing them with
an erasure mark.
Throughout the last several years many interactive coding schemes
were developed, allowing parties to perform computations over the various channels and noise types,
e.g., \cite{BR14,EGH16,SW17}, see related work below and \cite{gelles17} for a survey.

Naturally, some bounds on the noise must be given. As a trivial example, it is clear that if \emph{all} the transmissions are  corrupted, there is no hope to complete any distributed computation. 
Braverman and Rao~\cite{BR14} showed that for substitution noise, a noise fraction of $1/4$ is maximal. That is, as long as the noise is limited to corrupt less than 
$1/4$ of the communication, there are coding schemes that will succeed in producing the correct output. 
However, if the noise exceeds this level,
\emph{any} coding scheme for a general computation is bound to fail.
Similarly, noise rate of $1/4$ is maximal (and achievable) for insertion and deletion noise~\cite{BGMO17,SW17}, and
noise rate of $1/2$ is maximal  (and achievable) for erasure noise~\cite{FGOS15,EGH16}.

All the above schemes must know \emph{in advance} the amount of noise they are required to withstand.
For instance, the scheme of~\cite{BR14} is given some parameter~$\rho<1/4$, which stands for the  (maximal) fraction of noise that instance will be able to handle.
Given this parameter and the length of the (noiseless) computation to be performed~$N$, 
the coding scheme determines how many transmissions it should 
take in order to perform the computation, say, $\tilde N = \tilde N(\rho,N)$ transmissions.
It is then guaranteed that as long as the noise corrupts at most $\rho\tilde N$ transmissions, the computation succeeds.

In a recent work, Dani et al.~\cite{DHMSY18} considered the case where the noise amount may be arbitrary and \emph{a priori} unknown to the coding scheme. That is, the channel may corrupt up to~$T$ transmissions, where $T\in\mathbb{N}$ is some fixed amount of noise which is independent of the other parameters of the scheme and the computation to be performed. 
The work of Dani et al.~\cite{DHMSY18} considered substitution noise and showed a scheme that succeeds
with high probability and, if the channel corrupts~$T$ transmissions during the execution of the scheme, 
then the scheme will take $N+O(T+\sqrt{T N \log N})$ transmissions to conclude.

Again, some limitations must be placed on the noise. 
Indeed, assume that an uncorrupted computation on the input $(x,y)$ terminates in~$\tilde N$ rounds, and assume~$T>\tilde N$. Then, a substitution noise can always make Alice (wlog) believe that Bob holds $y$, by corrupting the messages Alice receives in the first $\tilde N$ rounds. Note that Alice then terminates with the wrong output, i.e., the output for~$(x,y)$.
Dani et al.~\cite{DHMSY18} dealt with the above impossibility  by assuming that the parties 
have access to some shared randomness and that the adversary is \emph{oblivious} to that randomness. This allowed them to employ cryptographic tools in order to guarantee the authentication of communicated messages.

\medskip

In this work we focus on \emph{erasure noise}, where the channel either transmits the message as is, 
or  erases the message and outputs a special~$\bot$ symbol that indicates this event. 
Erasure noise naturally appears in many practical situations, e.g., when an Ethernet packet gets corrupted yet this corruption is detected by the CRC checksum mechanism~\cite{802.3}. In this case the packet is considered invalid and will be dropped. This situation is equivalent to an erasure of that transmission.
We note that this type of noise is weaker than substitution noise. This allows us to obtain coding schemes without further assumptions, mainly, 
without the need for shared randomness and without requiring the adversary to be oblivious.

Our main result is an efficient and deterministic coding scheme
that withstands 
an arbitrary and \emph{a priori} unknown $T$ erasures with probability~$1$.
\begin{theorem}[main, informal]\label{thm:main}
Given any two-party binary interactive protocol $\pi$ of length~$N$,
there exists an efficient and deterministic noise-resilient protocol~$\Pi_4$ of length~$N+2T$ 
that simulates~$\pi$ over a 4-ary channel in the presence of an arbitrary and a priori unknown number~$T$ of erasures. 

Alternatively, there exists an efficient, deterministic, \emph{binary} noise-resilient protocol~$\Pi_2$ of length~$2N+4T$ that simulates~$\pi$ assuming an arbitrary and a priori unknown number~$T$ of erasures.
It holds that
\[
\CC(\Pi_4)=\CC(\Pi_2) = 2N+4T.
\]
\end{theorem}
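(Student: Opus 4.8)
Throughout I take $\pi$ to be in the standard alternating one-bit-per-round form (which is without loss of generality, or may simply be assumed of the model), so that the parties always agree on who is to speak. The plan is to build $\Pi_4$ as an adaptive, pointer-based simulation of $\pi$ on this fixed alternating schedule, bound its length by a drift argument, and then read off $\Pi_2$ and the communication complexities. Each party $P\in\{A,B\}$ maintains a simulated transcript $\tau_P$ of $\pi$ (initially empty), which it will only ever extend or roll back, never corrupt; the invariant to keep is that $\tau_A$ and $\tau_B$ are prefixes of one another and that their common part is a correct prefix of the execution of $\pi$. In its turn $P$ sends one $4$-ary symbol, i.e.\ two bits: a \emph{data} bit --- the next bit of $\pi$ that $P$ would send given $\tau_P$ if it is $P$'s turn in $\pi$, or else a repeat of the last bit $P$ sent --- together with a one-bit \emph{sync tag} (such as the parity of $|\tau_P|$). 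On receiving a symbol the receiver either extends $\tau_P$ by the data bit, when the tag is the one it expects for the next position; or recognizes a \emph{stale} symbol, when the tag is one step behind (meaning the other party has not yet seen the receiver's most recent bit), and reacts by re-sending that bit rather than advancing; or receives $\erasure$, in which case it makes no change and, as the next speaker, transmits a status symbol that implicitly requests the missing bit. The essential point --- and the reason no shared randomness or obliviousness is needed, unlike for substitution noise --- is that an erasure is \emph{announced} to the receiver, so one control bit per round already suffices to detect and undo the bounded desynchronization a burst of erasures can cause.

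The core of the argument is a potential/drift analysis. I would let $\Phi(t)$ be the length of the longest common prefix of $\tau_A$ and $\tau_B$ after round $t$ that is a correct prefix of $\pi$'s execution, adjusted by a bounded term recording whether a re-send is currently pending. The two facts to prove, by a case analysis over the state of the two parties and the action of the channel in a single round, are: (i) in any round not erased by the channel, $\Phi$ increases by at least $1$; and (ii) in any round erased by the channel, $\Phi$ decreases by at most $1$. Given (i)--(ii), if the execution lasts $R$ rounds of which $T$ were erased then $\Phi(R)\ge (R-T)-T=R-2T$, while $\Phi$ is bounded above by $N$ (up to the bounded adjustment) and the simulation is complete once $\Phi=N$; hence $R\le N+2T$, the claimed length, and at that moment both parties hold the true length-$N$ transcript of $\pi$ and output accordingly. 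Since $T$ is unknown the parties cannot fix $R$ in advance, so completion must be detected internally --- a party halts once it holds a length-$N$ transcript that it has confirmed with the other, using the same re-send mechanism for the confirmation --- and one has to check that this handshake is reached by both parties and that a late erasure cannot trigger a premature, incorrect halt.

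The binary protocol $\Pi_2$ is the bit-by-bit implementation of the same scheme: each logical round of $\Pi_4$ becomes two binary rounds, one carrying the data bit and one the sync bit, and a party that receives $\erasure$ on either bit treats the whole logical round as erased. A single bit erasure then spoils at most one logical round; redoing that logical round and the recovery it triggers costs at most four binary rounds, so the same drift argument gives length at most $2N+4T$. For the communication complexities, $\Pi_4$ sends $N+2T$ symbols from an alphabet of size $4$, i.e.\ $2(N+2T)=2N+4T$ bits, and $\Pi_2$ sends $2N+4T$ bits directly, so $\CC(\Pi_4)=\CC(\Pi_2)=2N+4T$. Efficiency and determinism are clear from the construction: each round costs one step of simulating $\pi$ together with $O(1)$ bookkeeping, and no randomness is used.

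I expect the main obstacle to be precisely the case analysis behind (i)--(ii): specifying exactly how the single sync bit is set, and exactly how each party updates or rolls back its transcript on each kind of received symbol (fresh, stale, or $\erasure$), so that the drift bounds hold against \emph{adversarial} erasure patterns --- in particular bursts that erase a data symbol together with the status symbol it would trigger, leaving the two parties unsure of each other's pointer. The delicate points are the pointer bookkeeping (a party must not treat a bit it has merely \emph{sent} as agreed until its receipt is implied) and arguing that whatever confusion a burst creates is always resolved within the two rounds per erasure that the potential accounts for, rather than cascading into a longer stall.
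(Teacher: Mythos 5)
Your simulation engine is essentially the one the paper builds: a parity bit per transmission lets each party detect desynchronization, transcripts are only extended or rolled back never corrupted, and a progress/drift argument shows that each erasure costs at most two extra transmissions, giving $N+2T$. The $\Pi_2$ reduction and the communication counts are also as in the paper. (The drift bookkeeping differs cosmetically: in the paper's scheme the transcripts are monotone --- the rollback happens within a round --- so the potential never actually decreases; progress simply stalls for any round containing an erasure. Your ``decreases by at most $1$'' framing is a harmless overestimate.)

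The genuine gap is termination. You write that ``a party halts once it holds a length-$N$ transcript that it has confirmed with the other, using the same re-send mechanism for the confirmation,'' i.e.\ a handshake. With unbounded, a-priori-unknown noise such a handshake provably cannot work: this is exactly the coordinated-attack obstruction, and the paper devotes Appendix~\ref{app:modeldetails} (Lemma~\ref{lem:termination}, Lemma~\ref{lem:NoiseAfterTerm}) to showing that no protocol can guarantee both parties halt in the same round, and indeed that if noise after one party's termination were ``free'' the other party could be kept alive forever. Any confirmation message can be erased, so the confirmer never knows the confirmation arrived, and the regress does not bottom out. The paper escapes this by making the protocol \emph{asymmetric} and by introducing a \emph{silence} symbol: Bob is arranged to always be at least as advanced as Alice (Lemma~\ref{lem:shiftsync}), so Alice can safely halt unilaterally once $r_A=N/2$; a halted party implicitly transmits $\silence$, which is still erasable but is unambiguous when it does get through, and Bob halts upon hearing it. The post-termination transmissions by Bob (at most $1+T''$ of them, where $T''$ is the noise after Alice quits) are then charged separately in the length bound $t_A+t_B-1\le N+2T'+T''\le N+2T$. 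Without the silence mechanism and the one-sided ``Bob is always ahead'' invariant, your construction has no sound way to stop, and your length bound only covers the time until $\Phi$ reaches $N$, not until both parties actually terminate.
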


Since~$T$, the amount of noise, is unknown to begin with, the length of the coding scheme must 
adapt to the actual noise that the parties observe. 
Such coding schemes are called \emph{adaptive}~\cite{AGS16}.
Adaptivity raises several issues that must be dealt with appropriately.
The main issue is \emph{termination}. Since the coding scheme adapts its length to the observed noise,
and since the different parties observe different noise patterns, their termination is
not necessarily synchronized. 
As a matter of fact, obtaining synchronized termination is impossible. This can be seen as a variant of the famous ``coordinated attack problem''~\cite{HM90}, where reaching full synchronization between two parties is known to be impossible.  See Appendix~\ref{app:modeldetails} for an elaborated discussion about unsynchronized termination.

Unsynchronized termination means that one party may terminate while the other party 
continues to send (and receive) transmissions as dictated by its protocol.
In this case, the communication model should 
specify what happens in those rounds where only one party is active and the other has terminated. 
Specifically, it should specify what messages the active party receives in this case.

In our setting we define a special symbol we call \emph{silence} (cf.~\cite{AGS16}). 
We assume that silence is (implicitly) communicated by a terminated party. That is, the still-active party hears silence in every round it is set to listen and the other party has already terminated.
We note that silence is corruptible---the channel may erase silent transmissions, and the active party will see an erasure mark instead. On the other hand, these implicit silent transmissions are not considered part of the communication of the protocol (i.e., we do not count them towards the communication complexity). 
Hearing a silence is a univocal indication that the other side has terminated, allowing the other party to terminate as well and bypassing the impossibility of synchronized termination (Lemma~\ref{lem:termination}).

\medskip

In the setting of Theorem~\ref{thm:main} we do not allow the parties to remain silent before termination---in each timestep a party is set to speak it must communicate a valid message. However, in today's networks, especially in networks with multiple parties, it is very common that parties send 
messages only if they have information to send, and keep silent otherwise. 

Our second result extends Theorem~\ref{thm:main} to the setting where parties are allowed 
to either speak or remain silent at every timestep (called the AGS setting hereinafter, see~\cite{AGS16}). In this case, termination becomes even more tricky. Recall that the coding scheme of Theorem~\ref{thm:main} uses 
silence as an indicator for termination.
We can do the same in the AGS setting and avoid using silence throughout the protocol, keeping it as an indicator towards termination only. However, this would effectively reduce the AGS setting to the one of Theorem~\ref{thm:main}, and lead to a suboptimal scheme.

Instead, we take a different approach that requires parties to remain silent during the protocol in certain cases.
This has the effect of reducing the communication at the expense of 
not being able to identify termination at times. 
In particular, one of the parties may remain active indefinitely.
However, that party will remain silent after the other party has terminated, and moreover, it will hold the correct output of the computation. We call this situation \emph{semi-termination} (see also Definition~\ref{def:semi-term}).
Our result for this setting is as follows. 
\begin{theorem}[AGS setting, informal]\label{thm:main2}
Given any binary two-party interactive protocol~$\pi$ of length~$N$,
there exists an efficient and deterministic protocol~$\Pi_4$ in the AGS setting with semi-termination, of length~$N+4T$ sending at most $N+T$ symbols from alphabet of size~$4$,  
that simulates~$\pi$ assuming an arbitrary and a priori unknown number~$T$ of erasures.

Alternatively, There exists an efficient and deterministic protocol~$\Pi_1$ in the AGS setting with semi-termination,
with length $4N+16N$ that communicates at most~$N+T$ unary (i.e., non-silent) symbols and
simulates $\pi$ assuming
an arbitrary and a priori unknown number~$T$ of erasures.
\end{theorem}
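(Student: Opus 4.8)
The plan is to start from the $4$-ary scheme $\Pi_4$ of Theorem~\ref{thm:main} and modify it so that, rather than always transmitting, a party stays silent in rounds where its "real" message would be redundant. Recall that in the standard (non-AGS) scheme each party alternately sends one of $4$ symbols encoding (i) the next bit of $\pi$ it wishes to reveal and (ii) a one-bit acknowledgment / parity used to detect and recover from erasures. The key observation is that in an erasure-free execution the acknowledgment stream is entirely predictable: once both parties are "in sync", the receiving party does not actually need to hear a confirmation, only to be alerted when something went wrong. So I would designate silence as the default "everything is fine, proceed" signal, and reserve the $4$ non-silent symbols for the cases where a party must (a) send a fresh bit of $\pi$, or (b) announce that it observed an erasure and is requesting a resend. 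Concretely, only the party that is currently "ahead" (the one whose turn it is to contribute the next bit of the simulated transcript) sends a non-silent symbol; the other party is silent unless it needs to flag an erasure. This immediately gives the $N+T$ bound on the number of non-silent symbols: $N$ of them carry the bits of $\pi$ (one per simulated round), and each of the $T$ erasures triggers at most one extra non-silent "resend request" symbol. The round complexity blows up by a constant factor because every erasure costs a few rounds of silent handshaking to resynchronize, giving length $N+4T$.

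The main steps, in order, are: (1) formally describe the modified encoder/decoder, making precise which of the $4$ symbols is sent in each case and what a party does upon hearing silence versus an erasure mark versus a genuine symbol; (2) define a potential function — essentially the number of correctly and commonly-decoded bits of $\pi$ minus a penalty for "pending" erasures — and show that in every round with no erasure the potential does not decrease, and strictly increases whenever a fresh bit is communicated, while each erasure decreases it by at most a bounded amount; (3) conclude from the potential argument that after at most $N+4T$ rounds the faster party has decoded all of $\pi$; (4) handle semi-termination: show that once the slower party (Bob, say) has all $N$ bits it terminates, after which Alice — who by construction only ever sends non-silent symbols when she is "ahead" of what she has confirmed Bob received — will send only silence, and, crucially, already holds the correct output, matching Definition~\ref{def:semi-term}. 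Finally, (5) derive the binary variant $\Pi_1$: replace each $4$-ary non-silent symbol by its $2$-bit encoding over a unary (silence-or-one) channel. Two bits per symbol times the blown-up length gives the stated round bound, while the count of non-silent (unary "$1$") symbols can be kept at $N+T$ by a careful unary encoding that spends one symbol per $4$-ary symbol rather than two — this is where a short dedicated argument is needed.

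The hard part, I expect, is step (4) together with the subtle asymmetry it forces on the protocol. In the non-AGS scheme both parties could always detect termination because a terminated party emits silence and silence was otherwise never sent; here silence is overloaded as the "proceed" signal, so a still-active party cannot distinguish "partner terminated" from "partner has nothing new to say". The proof must therefore show that this ambiguity is harmless: the only party that can be left active indefinitely is the one that is provably already holding the correct transcript of $\pi$, and that party's continued transmissions are all silent and hence cost nothing toward the communication complexity. Making this rigorous requires pinning down an invariant of the form "whenever Alice is about to send a non-silent symbol, Bob has not yet terminated" — i.e., a non-silent symbol is sent only when there is still un-agreed-upon content — and verifying it is preserved across erasures, which interfere precisely with the synchronization that the invariant rests on. A secondary nuisance is bookkeeping the constants: checking that the silent resynchronization handshake after each erasure really fits in $4$ extra rounds (and $16$ in the binary case, matching the "$4N+16N$" in the statement, which I read as a typo for $4N+16T$) rather than some larger constant.
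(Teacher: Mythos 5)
Your proposal inverts the role of silence relative to the paper, and this inversion actually breaks the $N+T$ communication bound. The paper's scheme (Algorithms~4 and~5 in Section~\ref{sec:adaptive:coding}) uses silence to mean ``\emph{I just received an erasure, please resend}'': the party that detects an erasure goes silent, and the other party, hearing silence, retransmits its previous message. You propose the opposite convention -- silence means ``everything is fine, proceed,'' with the four non-silent symbols reserved both for fresh bits of $\pi$ and for explicit \emph{resend requests}. Trace what that costs: when Alice's bit is erased, Bob sends a non-silent resend request (one extra symbol), Alice then retransmits her bit (a second extra symbol), and only then does Bob answer with his bit. That is two extra non-silent transmissions per erasure, giving $N+2T$, not $N+T$. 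The paper's choice is precisely what makes the $+1$-per-erasure accounting work: the party that sees the erasure pays nothing (it stays silent), so only the retransmission is charged. The claim that the $N+T$ bound ``immediately'' follows is therefore wrong under your scheme, and even under the paper's scheme it is \emph{not} immediate -- the paper needs the non-trivial invariant of Lemma~\ref{lem:costbound}, which tracks $|\calT^B(i)|+\ch(i)$ against $\ca(i)+\cb(i)$ inductively, including the delicate case where a silence is itself erased and a superfluous retransmission occurs.

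Your description also doesn't fit the alternating structure cleanly: ``only the party that is currently ahead sends, the other is silent'' would leave silence unused in any noise-free execution (both parties always have a fresh bit), so the ``default OK'' semantics never actually comes into play -- the only place silence is needed is exactly as an erasure indicator, which is the paper's convention. On the positive side, you correctly identify the termination ambiguity (the still-active party cannot tell ``partner terminated'' from ``partner silent''), and your resolution -- guarantee that the indefinitely-active party already holds the correct transcript and thereafter emits only silence -- is exactly the paper's semi-termination argument (Lemma~\ref{lem:end1}). Your step (5) for the unary alphabet should not be a $2$-bit binary encoding (that would cost up to two unary symbols per $4$-ary symbol); the paper's block encoding places a single ``1'' in one of four positions per block, so each non-silent $4$-ary symbol costs exactly one unary symbol and $\CC(\Pi_1)=\CC^{sym}(\Pi)$ (Theorem~\ref{thm:final2}). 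You are right that ``$4N+16N$'' in the statement is a typo for $4N+16T$.
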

In this setting, it is significant to bound the round complexity 
since silence is not counted towards the communication complexity, yet it can be used to transfer information. 
As stated in the above theorem, the round complexity of the resilient protocols~$\Pi_4$ and~$\Pi_1$ until the event of semi-termination, is linear in the round complexity of the noiseless protocol~$\pi$, namely, $\RC(\Pi_4)=O(N+T)$ and $\RC(\Pi_1) =O(N+T)$.

\subparagraph{Organization.}
In the next part we overview the techniques used to obtain Theorem~\ref{thm:main} and Theorem~\ref{thm:main2}. Section~\ref{sec:model} formally defines the communication and noise model, and fixes some notations used throughout.
In Section~\ref{sec:sim} we describe the noise resilient protocol of Theorem~\ref{thm:main} and analyze its correctness. 
The coding scheme of Theorem~\ref{thm:main2} along with its correctness analysis appear in Section~\ref{sec:adaptive}. 
Finally, in Appendix~\ref{app:modeldetails} we discuss the issue of unsynchronized termination and prove that it is impossible to obtain synchronized termination when the noise is unbounded.  

\subsection{Coding Schemes Overview}
Erasure noise has two attractive properties we utilize towards our scheme.
The first is that, if there was a corruption, the \emph{receiver} is aware of this event; the second property is
that, if there was no corruption, the received message is the \emph{correct} one.
Our scheme follows a technique by Efremenko et al.~\cite{EGH16}, where the parties
simulate the noiseless protocol~$\pi$ bit by bit. As long as there is no noise, they can carry
out the computation identically to~$\pi$. However, in the case of an erasure,
the receiver needs to signal the other side that it did not receive the last message
and request a retransmission. The main problem is that this request message may get
erased as well, making both sides confused regarding to what should be sent next.

In~\cite{EGH16} this issue is solved by extending each message by two bits that indicate
the round currently being simulated. 
It is proven that the parties may simulate different rounds, however, the
discrepancy in the round number is bounded by~$1$. Hence, a round number modulus~$3$
is required in~\cite{EGH16} to indicate whether Alice is ahead, Bob is ahead, or they are at the 
same round.

Our scheme combines the above technique with a challenge-response technique employed 
by Dani et al.\@ in~\cite{DHMSY18}, in order to obtain resilience against an unbounded number of erasures.
 Our coding scheme is \emph{not} symmetric, but rather
Alice always begins a round by sending a ``challenge'' message, followed by Bob replying with a response.
Alice then determines whether the challenge-response round was successful:
if both messages were not erased, Alice would see the correct response from Bob and would deduce
that both messages were received correctly. If Alice received an erasure, or if she received
the wrong response, she would deduce that an erasure has occurred during this round and the round should be re-simulated.

Bob, similarly but not identically,  gets the challenge message from Alice and verifies that it
belongs to a new round (that is, the challenge differs from the previous round).
If this is the case, he replies with the next bit. Otherwise, i.e., if Alice's challenge was erased, or if 
Bob receives the challenge of the previous round, he replies with the response of the \emph{previous} round.
Note that if Alice did send a new challenge and it was erased, she will now get the response of the previous
round and realize there is a mismatch.

It is not too difficult to see that the ``challenge'' suffices to be a single bit---the current simulated round number in~$\pi$, modulus~$2$ (which we call the \emph{parity} from now on). 
Since the scheme is not symmetric, it can never happen that Alice has advanced to the next round
while Bob has not. The other case, where Bob is ahead of Alice by a single round is still possible.
Therefore, one bit of information suffices to distinguish between these two cases.

In more details, Alice begins a round by sending Bob the next bit of the simulation of~$\pi$, along with 
the parity of the round number (in~$\pi$) of that bit. If Bob receives this message and the parity
matches the round number he expects, he records the bit sent by Alice and replies with the next bit of~$\pi$ using the same parity. If this message reaches Alice correctly, she knows the round is over and advances to the next round.
If Bob does not receive Alice's message (i.e., it gets erased), or if the parity is incorrect, Bob replies with the bit of the previous round along with the parity of that round. Similarly, if Alice receives a message with a wrong parity or an erased message, 
she keeps re-simulating the same round, until she gets the proper reply from Bob.

Note that a single erasure delays the progress of the simulation by a single round (2~transmissions). 
However, once there is a round in which both messages are not erased, the simulation correctly
continues, and the succeeding two bits of~$\pi$ are correctly simulated. That is, as long as there is noise,
the simulation just hangs, and when the noise ceases, the simulation continues from exactly the same
place it stopped.

Once Alice completed simulating the last round of~$\pi$, she quits. Recall that Bob is always ahead of Alice,
thus if Alice completes the simulation, so does Bob. After Alice terminates, Bob receives silence unless erased by
the channel. When Bob hears a silence, he learns that Alice has terminated and quits the scheme as well.
The noise may delay Bob's time of termination by corrupting  the silence, 
however, once the noise is over, Bob will learn that the simulation has completed and will quit as well.

\paragraph*{Coding schemes for parties who may keep silent.}

Our second scheme works in the communication model where parties are allowed to remain silent if they wish (the AGS setting).
The main advantage in being able to remain silent is allowing the parties to communicate information in an optimized manner 
which reduces their communication complexity. 
Specifically, consider the above idea of challenge-response, 
where a party replies with the wrong response
in order to indicate there was an error and that a round should be re-simulated.
In the AGS setting we can instead keep silent in order to signal this retransmission request.

The idea is as follows. Similar to the scheme above Alice begins a round by sending
her bit to Bob (along with the parity). If the transmission is received correctly, Bob replies with his next bit.
If Bob receives an erasure instead, he remains silent. This signals Alice
that an error has happened and that she should re-simulate the last round.
Similarly, if Alice sees an erasure she keeps silent.
In all other cases, i.e., receiving a silence or receiving a wrong parity, 
 the parties re-transmit their last message as before.

The effect of a party keeping silent for asking a retransmission is reducing the communication
complexity. Note that each erasure causes the recipient to remain silent for one single round,
instead of sending a message that indicates an erasure. 
Then, the simulation continues from the point it stopped. 
On the other hand, the analysis becomes slightly more difficult
in this case since silence may be erased as well, causing the other side to remain silent
and signal there was an erasure. 
This may cause the first party to repeat its message
while the other side should have actually resent its message in order to advance the simulation.
Luckily, this issue does not falsify the correctness of the simulation---as a result of sending the parity, the extra retransmission is simply ignored. 
Furthermore, such a superfluous transmission does not increase the communication overhead 
since it can only happen when multiple erasures have occurred in the same round or in consecutive rounds.

When silence has a meaning of requesting a retransmission, we cannot use it anymore to indicate termination.
Note that whatever Alice sends Bob to inform him she is going to terminate may get erased,
so if Alice terminates Bob will not be aware of this fact and will remain active. If Alice waits to hear a confirmation from Bob
that he received the indication and learned that Alice is about to terminate---this confirmation may get erased.
Bob never learns if Alice has received his acknowledgment or not; then, if Bob assumes that 
Alice has terminated and terminates himself, 
it will be Alice who hangs in the protocol waiting for Bob's confirmation, and so on.
Our approach to this conundrum is to allow the parties not to terminate as long as
there exists a point in time beyond which the parties remain silent and both hold the correct output.
In our scheme, Alice will actually terminate once she learns that the simulation is done%
\footnote{It is possible to let Alice send Bob a ``termination message'' 
right before she quits in order to signal Bob she is about to terminate.
In case this message is not erased, Bob will terminate as soon as he hears this special message. 
Otherwise, he will keep running the scheme but remain silent. 
In either case, the protocol satisfies the semi-termination requirements.}
(recall that if Alice completed the simulation then Bob completed it as well, but the other direction does not
necessarily hold). 
Bob's actions in the final part of the protocol are slightly different from his normal behaviour. 
Once Bob completed simulating~$\pi$, but he is unaware whether or not Alice completed simulating~$\pi$,
he keeps silent unless he hears a message from Alice that re-simulates the final round.
In this case, he replies with his final bit. In all other cases he remains silent.

\subsection{Related Work}
The field of interactive coding was initiated by the seminal work of Schulman~\cite{schulman92,schulman96} focusing on two parties that
communicate over a binary channel with either substitution noise (random 
or adversarial) or with erasure noise. Followup work (for substitution noise)
developed coding schemes with optimal resilience~\cite{BR14,BE17,GH14}
efficiency~\cite{schulman03,braverman12,GMS14,BKN14,GH14,GHKRW18},
or good rate~\cite{KR13,haeupler14,GHKRW18}.  
Coding schemes for different channels and noise types were developed in~\cite{pankratov13,EGH16,GH17} for channels with feedback,
in~\cite{BGMO17,SW17,HSV18,EHK18} for insertions and deletions noise,
and in~\cite{BNTTU14,LNSTYY18} for quantum channels.
Interactive coding over channels that introduce erasure noise was explored in~\cite{schulman96,pankratov13,FGOS15,GH17,EGH16,AGS16}. In particular, Efremenko et al.~\cite{EGH16} developed
efficient coding schemes for optimal erasure rates, and Gelles and Haeupler~\cite{GH17} developed efficient coding schemes with optimal rate assuming a small fraction of erasures. Adaptive models were considered in~\cite{AGS16,GHS14,GH14}.
See~\cite{gelles17} for a survey on the field of interactive coding.

\smallskip
Closest to our work is the work of Dani et al.~\cite{DHMSY18} who considered the case of an arbitrary noise amount
that is unknown to the scheme. Their coding scheme assumes substitution noise which is oblivious to the randomness used by the parties as well as to the bits communicated through the channel. An AMD code~\cite{CDFPWS08} is used to fingerprint each transmitted message, allowing the other side to detect corruptions with high probability.
Aggrawal et al.~\cite{ADHS18} use similar techniques to develop a robust
protocol for message transfer, assuming bi-directional channel that suffers from an arbitrary (yet finite) and unknown amount of bit flips. Aggarwal et al.~\cite{ADHS17} extended this setting to the multiparty case, 
where $n$~parties, rather than two, perform a computation over a noisy network with an arbitrary and unknown amount of noise.

\section{Model Definition}\label{sec:model}
\subparagraph*{Standard notations.}
For an integer $i\in \N$ we denote by $[i]$ the set $\{1,2,\ldots, i\}$. All logarithms are taken to base~2.
The concatenation of two strings $x$ and $y$ is denoted $x\circ y$. We let $\emptyword$ denote the empty string.

\subparagraph*{Interactive computation.}
In our setting, two parties, Alice and Bob, possess private inputs $x\in \{0,1\}^n$ and $y\in \{0,1\}^n$, respectively, and
wish to compute some predefined function $f(x,y)$.
The computation is performed by exchanging messages over a channel with fixed alphabet~$\Sigma$.
The computation is specified by a synchronous \emph{interactive protocol}~$\pi$.
An interactive protocol $\pi=(\pi_A,\pi_B)$ is a pair of algorithms 
that share a common clock, and specify, for each timestep and each one of the parties, the following details: 
(1) which party speaks and which party listens in this time-step; 
(2) if the party is set to speak, which symbol to communicate;
(3) whether the party terminates in this timestep, and if so, what the output is.

Without loss of generality, we assume that the (noiseless) protocol~$\pi$ is alternating, i.e., Alice and Bob speak in an alternating manner,
Alice speaks in odd rounds and Bob in even rounds. 
If this is not the case, it can be made so while increasing the communication by a factor of at most~2.
We define a \textit{round} of the noiseless protocol to be a sequence of two time steps in which two consecutive messages are sent: the first is sent by Alice and the second by Bob. For example, the first round consists of the first message sent by Alice and the subsequent message from Bob. In general for $r\geq 1$, after $r-1$ rounds have elapsed, the $r^{th}$ round consists of the $r^{th}$ message sent by Alice and Bob's subsequent message. For the sake of convenience, we assume that the last message of the protocol is sent by Bob; this can be ensured by padding the protocol by at most one bit. Since the protocol is alternating we can also think of rounds being associated with timesteps. More formally, in the $r^{th}$ round, Alice and Bob send messages corresponding to the $(2r-1)^{th}$ and the $2r^{th}$ timestep respectively. We say that a protocol has length $N$ if Alice and Bob exchange $N$ messages in the protocol.

Given a specific input~$(x,y)$ the transcript $\pi(x,y)$ is the concatenation of all the messages received during 
the execution of~$\pi$ on $(x,y)$.

\subparagraph*{Erasures.}
The communication channel connecting Alice and Bob is subject to \emph{erasure} noise. 
In each timestep, the channel accepts a symbol $s\in \Sigma$ and  outputs either $s$ or a special erasure mark ($\bot \notin \Sigma$). 
The noise is assumed to be worst-case (adversarial), where up to $T$ symbols may be replaced with erasure marks.
The value of $T$ is unbounded and \emph{unknown} to the parties---their protocol should be resilient to any possible $T\in \N$.
A \emph{noise pattern} is a  bit-string $E \in\{0,1\}^*$ that indicates erasures in a given execution instance of the protocol. Specifically, there is an erasure in the $i$-th timestep if and only if $E_i=1$. 
Given a specific instance where both parties terminate before its~$s$-th transmission, the number of erasures that are induced by $E$ on that instance is the Hamming weight of $E_1,\ldots,E_s$, i.e., the number of 1's in the bit-string. We sometime allow $E$ to be of infinite length, however our protocols will be resilient only against noise patterns with bounded amount of noise (i.e., if $E$ is infinite, then its suffix is required to be the all-zero string; we call such noise \emph{finite} noise). 
 
\subparagraph*{Coding scheme: Order of speaking, silence and termination.}
A \emph{coding scheme} is a protocol~$\Pi$ that takes as an input another protocol~$\pi$ (that assumes noiseless channels) and simulates~$\pi$ over a noisy channel.
By saying that $\Pi$ \emph{simulates}~$\pi$ over an erasure channel with (unbounded) $T$ corruptions, we mean the following.
For any pair of inputs~$(x,y)$ given to Alice and Bob,  
after executing $\Pi$ in the presence of $T$ erasures, 
Alice and Bob will produce the transcript~$\pi(x,y)$.
The above should hold for \emph{any}  $T\in \mathbb{N}$, independent of~$\pi,x,y$ and unknown to~$\Pi$ (i.e., to the parties).

We assume that the protocol~$\Pi$, at any given timestep, exactly one party can be the sender and the other party is the receiver.
That is, it is never the case that both parties send a symbol or both listen during the same timestep. 
On the other hand we do not assume that the parties terminate together, and it is possible that one party
terminates while the other does not. In this case, whenever the party that hasn't yet terminated is set to listen, it hears some default symbol~$\silence$, which we call \emph{silence}. 

There are several ways to treat silence. One option, taken in~\cite{AGS16}, is to treat
the silence similar to any other symbol of~$\Sigma$. That is, as long a party has not terminated and is set to speak, it can either send a symbol or remain silent (``send~$\silence$'') at that round. We take this approach in the scheme of Section~\ref{sec:adaptive}.
A different approach would be to require the parties to speak a valid symbol from~$\Sigma$ while 
they haven't terminated. That is, a party cannot
remain silent if it is set to speak; this prevents the parties from using silence as a means of
 communicating information during the protocol. 
Once a party terminates, and only then, silence is
being heard by the other party. 
This mechanism makes it easier for the parties to coordinate their termination.
In particular, the event of termination of one party transfers (limited) information to the other party. 
We take this approach in the scheme of Section~\ref{sec:scheme}.

Another subtlety %
stems from the fact that the length of the protocol 
is not predetermined.  
That is, the length of the protocol depends on the actual noise in the specific instance.
Such protocols are called adaptive~\cite{AGS16}.
In this case it makes sense to measure properties of the protocol
with respect to a specific instance. 
For instance, 
given a specific instance of the protocol $\Pi$ on inputs $(x,y)$ with some given noise pattern,
the communication complexity %
 $\CC(\Pi(x,y))$ is the number of symbols \emph{sent} by both parties in the specific instance. %
 The communication is usually measured in bits by multiplying the number of symbols by $\log|\Sigma|$.
The noise in a given instance is defined to be the number of corrupted transmissions until both parties have terminated, \emph{including} corruptions that occur after one party has terminated and the other party has not.
Corruptions made after both parties have terminated cannot affect the protocol, and we can assume such corruptions never happen.
\section{A coding scheme for an unbounded number of erasures}\label{sec:sim}\label{sec:scheme}
In this section we provide a coding scheme that takes as an input any noiseless protocol~$\pi$
and simulates it over a channel that suffers from an unbounded and unknown 
number of erasures~$T$. The coding scheme uses an alphabet~$\Sigma$ of size~4; in this setting parties are not allowed 
to be silent (unless they quit the protocol) and in every round they must send a symbol from~$\Sigma$.

\subsection{The Coding Scheme}
The coding scheme for Alice and Bob, respectively, is depicted in Algorithms~\ref{alg:sima} and~\ref{alg:simb}. 

Inspired by the simulation technique of~\cite{EGH16},
our simulation basically follows the behavior of the noiseless protocol~$\pi$ step by step, where Alice and Bob speak in alternating timesteps. 
In each timestep, the sending party tries to extend the simulated transcript by a single bit.
To this end, the parties maintain partial transcripts $\calT^{A}$ for Alice (and $\calT^{B}$ for Bob) 
which is the concatenation of the information bits of $\pi$ that the parties have simulated so far and are certain of. 
Then, if Alice is to send a message to Bob, she generates the next bit of $\pi$ given her partial simulated transcript, i.e., $\pi(x \mid \calT^{A})$, and sends this information to Bob.

In addition to the information bit, Alice also sends a parity of the round number she is currently simulating.
That is, Alice holds a variable $r_A$ which indicates the round number she is simulating. Recall that each
round contains two timesteps, where Alice communicates in the first timestep and Bob in the second.
Alice sends Bob the next bit according to her $\calT^{A}$ and waits for Bob's reply to see if this round was successfully simulated. If Bob's reply indicates the same parity (i.e., the same round), Alice knows her message arrived to Bob correctly and hence the round was correctly simulated. In this case Alice increases $r_A$. Otherwise, she assumes there was a corruption and she keeps $r_A$ as is; this causes the same round of~$\pi$ to be re-simulated in the next round of the simulation protocol.

Bob holds a variable $r_B$ which again holds the (parity of) the latest round in~$\pi$ he has simulated. In a somewhat symmetric manner (but not identical to Alice!), he expects to receive from Alice the bit of the \emph{next} round of~$\pi$, $r_B+1$. If this is the case he responds with his bit of that same round, or otherwise he re-transmits his bit of round $r_B$. 

Our coding scheme assumes a channel with alphabet $\Sigma = \{0,1\}\times\{0,1\}$, where every non-silent message can be interpreted as $m=(\Inf,\Par)$, where $\Inf \in \{0,1\}$ is the information bit (simulating~$\pi$) and $\Par \in \{0,1\}$ is the parity of the round of~$\pi$ simulated by the sender. 

The above continues until Alice has simulated the entire transcript of~$\pi$, i.e, when $r_A$ reaches the number of rounds in~$\pi$. At this point, Alice exits the protocol. %
Bob, however, cannot tell whether Alice has completed the simulation or not and waits until he sees a silence, which indicates that Alice has terminated, only then does he exit the protocol. As regards the correctness, we prove that in any round of the simulation, Bob has seen at least as much of the noiseless protocol that Alice herself has seen. In particular, when Alice exits, Bob must have seen the entire transcript of the noiseless protocol.

\IncMargin{1em}
\begin{algorithm}[h]
\caption{Simulation over Erasure Channel with Unbounded Noise (Alice's~side)}
\label{alg:sima}
\small
\DontPrintSemicolon
\KwData{An alternating binary protocol $\pi$ of length $N$ and an input $x$.}
\BlankLine
\nlset{A.1}\label{line:inita}\emph{Initialize $\calT^{A} \gets \emptyword$, $r_A \gets 0$. 
}\;
\nlset{A.2}\While{$(r_A < \frac{N}{2})$}{ 
\BlankLine
	\tcp*[l]{Send a Message (odd time-step)}
	\nlset{A.3}\label{line:inca}$r_A \gets r_A+1$\;
	\nlset{A.4}$b_{send} \gets \pi(x \mid \calT^{A})$\;
	\nlset{A.5}\label{line:apnda}$\calT^{A} \gets \calT^{A}\circ b_{send}$\;
	\nlset{A.6}\label{line:sa}send $(b_{send},r_A \mod 2)$\;

	\BlankLine\BlankLine
	\tcp*[l]{Receive a Message (even time-step)}
		\nlset{A.7}Obtain $m' = (b_{rec},r_{rec})$\;
		\nlset{A.8}\label{line:uncora}
		\eIf{$m' \neq \bot$ \textbf{and} $r_{rec}=r_A~\bmod 2$}
		{
			\nlset{A.9}$\calT^{A}\gets \calT^{A}\circ b_{rec}$\;
		}
		{
			\nlset{A.11}\label{line:dela}Delete last symbol of $\calT^{A}$\;
			\nlset{A.12}\label{line:deca1}$r_A \gets r_A-1$\; 
		}
}
\nlset{A.13}\label{line:outA}Output $\calT^{A}$\;
\end{algorithm}
\DecMargin{1em}

\IncMargin{1em}
\begin{algorithm}[h]
\caption{Simulation over Erasure Channel with Unbounded Noise (Bob's side)}\label{alg:simb}\DontPrintSemicolon
\small
\KwData{An alternating binary protocol $\pi$ of length $N$ and an input $y$.}
\BlankLine
\nlset{B.1}\label{line:initb}\emph{Initialize $\calT^{B} \gets \emptyword$, $r_B\gets 0$, $err \gets 0$, 
$m \gets (0,0)$}\;
\nlset{B.2}\While(\tcp*[f]{while Silence isn't heard}){$m' \neq \square$}{
\BlankLine
\tcp*[l]{Receive a Message (odd time-step)}
		\nlset{B.3}Obtain $m' = (b_{rec},r_{rec})$\;
		\nlset{B.4}\label{line:uncor}\uIf{$m' = \bot$\textbf{ or } $r = r_B~\bmod 2$}{
			\nlset{B.5}\label{line:err2}$err \gets 1$ \tcp*[f]{error detected}\;
		}
		\nlset{B.6}\Else{	%
			\nlset{B.7}$\calT^{B}\gets \calT^{B}\circ b_{rec}$ \;
			\nlset{B.8}\label{line:err0}$err\gets 0$\;
		}

\BlankLine\BlankLine
\tcp*[l]{Send a Message (even time-step)}
		\nlset{B.9}\uIf{$err = 0$}{
			\nlset{B.10}$b_{send} \gets \pi(y \mid \calT^{B})$\;
			\nlset{B.11}$\calT^{B} \gets \calT^{B}\circ b_{send}$\;
			\nlset{B.12}\label{line:incb}$r_B \gets r_B+1$\;
			\nlset{B.13}\label{line:sb1}$m \gets (b_{send},r_B \bmod 2)$\;				
		}
		\nlset{B.14}\Else (\tcp*[f] {$err=1$: keep $r_B$ and $m$ unchanged.}){
		\nlset{B.15}\label{line:sb2}send $m$\;
		}
}	
\nlset{B.16}\label{line:outB}Output $\calT^{B}$\;
\end{algorithm}
\DecMargin{1em}
\subsection{Analysis}\label{sec:corr}

\subparagraph*{Preliminaries.} 
Recall that in our terminology a round consists of two timesteps, where at every timestep one party sends one symbol from~$\Sigma$. 
Alice sends symbols in odd timesteps and Bob in even ones. The above applies for both the noiseless protocol~$\pi$ (where the alphabet is binary) and for the coding scheme~$\Pi$ given by Algorithms~\ref{alg:sima} and~\ref{alg:simb} (where $\Sigma=\{0,1\}\times\{0,1\}$).

We think of the communication transcript as a string obtained by the concatenation of symbols sent during the course of the protocol run. 
Given a noiseless protocol~$\pi$ and inputs $x,y$, 
we denote by $m^{\pi}_{A}(i)$ (respectively, $m^{\pi}_{B}(i)$) the message sent by Alice (respectively, Bob) in the $i$-th round in the noiseless protocol~$\pi$. Let $\calT^{\pi}(r)$ be the transcript of the players after $r$ rounds in $\pi$. The length of~$\pi$ is denoted~$N$ and without loss of generality we will assume that~$N$ is even.

We start our analysis by fixing a run of the coding scheme~$\Pi$, 
specified by fixing an erasure pattern and inputs~$(x,y)$. 
Let $k$ be the number of timesteps in the given run, and note that $k$ is always odd.
We define \srs in the coding scheme $\Pi$ in the same way as we define rounds in the noiseless protocol. 
That is, for $i \in \{1,2,\ldots, \lceil\frac{k}{2}\rceil\}$, \sr $i$ (in the coding scheme) corresponds to the timesteps $2i-1$ and $2i$. Throughout this section by \sr $i$ we refer to the round in the coding scheme unless it is specified otherwise. 
For the sake of clarity, we maintain that \sr $i$ begins at the start of the $(2i-1)$-th timestep and ends at the ending of the $2i$-{th} timestep. Since $\Pi$ is alternating, Alice begins a \sr by sending a message and the \sr ends after Alice receives Bob's response. 

We define $t_A$ (and $t_B$)
to be the termination round of Alice (and respectively, Bob). 
Observe that $t_A <  t_B = \lceil\frac{k}{2}\rceil$ since Bob terminates only once he hears silence, which can happen only if Alice has already terminated in a previous round. 
For any round~$i$ and variable $v$ 
we denote by $v(i)$ the value of~$v$ at the beginning of round~$i$.
In particular, 
$r_A(i)$ and $r_B(i)$ are the values of $r_A$ and $r_B$ at the beginning of round~$i$.
Since $r_A$ is not defined after Alice exits in Algorithm~\ref{alg:sima},  
for all $t_A< i\leq t_B$, we set $r_A(i)$ to be the value of~$r_A$ at the \emph{end} of round $t_A$, that is, its value just before Alice quits. We similarly define $\calT^{A}(i)$ for $t_A < i\leq t_B$. 
Finally, let $m_A(i)=(b_A(i), \rho_A(i))$ and $m_B(i)=(b_B(i), \rho_B(i))$ be the messages sent by Alice and Bob (see Lines~\ref{line:sa} or~\ref{line:sb1} or~\ref{line:sb2}) respectively in \sr $i$.

\newpage

\subparagraph*{Technical lemmas and proof of correctness.} 
The main technical claim of this part is Lemma~\ref{lem:shiftsync}, which we will now prove. 
We begin with the simple observation that, in every round, the parties' transcripts (and the respective round number the parties believe they simulate) either increase by exactly the messages exchanged during the last round, or they remain unchanged. 
\begin{lemma}\label{lem:nd}
For any $i \in [t_B]$, the following holds.
\begin{enumerate}
\item $r_A(i+1) \in \{r_A(i),r_A(i)+1\}$ and $\calT^{A}(i+1) \in \left\{\calT^{A}(i) , \calT^{A}(i)\circ b_A(i)\circ b_B(i)\right\}$ and, 
\item $r_B(i+1) \in \{r_B(i),r_B(i)+1\}$ and $\calT^{B}(i+1) \in \left\{\calT^{B}(i),\calT^{B}(i)\circ b_A(i)\circ b_B(i)\right\}$. 
\end{enumerate}
Furthermore, $r_A$ ($r_B$) changes if and only if $\calT^{A}$ ($\calT^{B}$) changes.
\end{lemma}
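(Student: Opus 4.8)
The plan is to prove Lemma~\ref{lem:nd} by a direct case analysis on what each party does during a single \sr~$i$, reading off the behavior from Algorithms~\ref{alg:sima} and~\ref{alg:simb}. The key observation is that within one \sr, each of the variables $r_A$, $r_B$, $\calT^{A}$, $\calT^{B}$ is modified by at most a bounded, explicitly described set of operations, so we just need to track all branches.

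\textbf{Alice's side.} During \sr~$i$ Alice executes Lines~\ref{line:inca}--\ref{line:deca1} (assuming $i \le t_A$; if $t_A < i \le t_B$ then by our convention $r_A$ and $\calT^{A}$ are frozen, so the claim is trivial). First she sets $r_A \gets r_A+1$ (Line~\ref{line:inca}) and appends $b_{send} = b_A(i)$ to $\calT^{A}$ (Lines~\ref{line:apnda}, recalling $b_{send}$ is exactly $b_A(i)$ by definition of $m_A(i)$). Then, upon receiving Bob's reply $m'$, there are two branches. If $m' \neq \bot$ and the parity matches (Line~\ref{line:uncora}), she appends $b_{rec}$ to $\calT^{A}$; here I must argue $b_{rec} = b_B(i)$. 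This follows because the only message Bob sends in \sr~$i$ is $m_B(i) = (b_B(i), \rho_B(i))$, and since we are on an erasure channel a non-$\bot$ received symbol equals the sent symbol. In this branch the net effect is $r_A(i+1) = r_A(i)+1$ and $\calT^{A}(i+1) = \calT^{A}(i)\circ b_A(i)\circ b_B(i)$. In the other branch (Lines~\ref{line:dela}--\ref{line:deca1}) she deletes the last symbol of $\calT^{A}$ (undoing the earlier append of $b_A(i)$) and decrements $r_A$, giving $r_A(i+1) = r_A(i)$ and $\calT^{A}(i+1) = \calT^{A}(i)$. These are exactly the two options in item~1, and in both cases $r_A$ changes iff $\calT^{A}$ changes.

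\textbf{Bob's side.} Similarly, during \sr~$i$ (for $i < t_B$, so the loop body executes fully) Bob runs Lines~\ref{line:uncor}--\ref{line:sb2}. On receiving $m'$, if $m' = \bot$ or the parity indicates a repeated round (Line~\ref{line:uncor}) he sets $err \gets 1$, in which case the ``else'' branch (Line~\ref{line:sb2}) only resends the old $m$ and touches neither $r_B$ nor $\calT^{B}$, so $r_B(i+1) = r_B(i)$ and $\calT^{B}(i+1) = \calT^{B}(i)$. If instead the reception succeeds with the expected parity, he appends $b_{rec}$ to $\calT^{B}$ (Line~\ref{line:uncor}--\ref{line:err0}); again $b_{rec} = b_A(i)$ because Alice's only message in \sr~$i$ is $m_A(i)$ and non-erased symbols arrive intact. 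Then with $err = 0$ he appends $b_{send} = b_B(i)$ and increments $r_B$ (Lines~\ref{line:incb}--\ref{line:sb1}), giving $r_B(i+1) = r_B(i)+1$ and $\calT^{B}(i+1) = \calT^{B}(i)\circ b_A(i)\circ b_B(i)$. Again these match item~2, and $r_B$ changes iff $\calT^{B}$ changes.

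\textbf{Main obstacle.} The analysis itself is routine branch-chasing; the only subtlety is making sure the ``appended'' bits are exactly $b_A(i)$ and $b_B(i)$ as defined, rather than some other values. For Alice the bit she appends in the receive phase is whatever Bob sent in \sr~$i$ (modulo erasure), and for Bob the bit he appends in the receive phase is whatever Alice sent in \sr~$i$; identifying these with $b_B(i)$ and $b_A(i)$ respectively is immediate from the definitions of $m_A(i)$, $m_B(i)$ together with the fact that an erasure channel either delivers a symbol verbatim or outputs $\bot$. One should also note the boundary case $t_A < i \le t_B$ for Alice separately, where the statement holds vacuously by the freezing convention and by the fact that Bob's behavior in that range still only appends $b_A(i)\circ b_B(i)$ or nothing. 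With these points flagged, the lemma follows by exhausting the (at most two) branches on each side.
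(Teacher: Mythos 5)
Your proof is correct and takes essentially the same approach as the paper: a direct branch-by-branch reading of Algorithms~\ref{alg:sima} and~\ref{alg:simb}, tracking how each party's round counter and transcript change within a single \sr, and noting that the only two outcomes for each party are ``nothing changes'' or ``both $r$ and $\calT$ advance by exactly one round's worth.'' Your extra remark on why $b_{rec}$ must equal $b_A(i)$ or $b_B(i)$ (erasure channels deliver verbatim or give $\bot$) is left implicit in the paper's write-up but is the same underlying fact.
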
 
\begin{proof}
Consider Algorithm~\ref{alg:simb}. It is immediate that $r_B$ can increase by at most~1 in every round.
In \sr $i$ Bob starts by either appending $b_{rec}=b_A(i)$ to $\calT^{B}(i)$ and setting $err=0$; or keeping $\calT^{B}(i)$ unchanged and setting $err=1$. In the former case, Bob appends $b_{send}=b_B(i)$ to $\calT^{B}(i)\circ b_A(i)$ and increases $r_B$. 
Otherwise, he keeps $\calT^{B}(i)$ unchanged, and $r_B$ remains the same as well.

In Algorithm~\ref{alg:sima} Alice may decrease her $r_A$, but note that she always begins \sr $i \le t_A$ by increasing it (Line~\ref{line:inca}) and appending $b_{send}=b_A(i)$ to $\calT^{A}(i)$ (Line~\ref{line:apnda}). She then either decreases $r_A$ back to what it was and in this case she erases $b_A(i)$ (see Lines~\ref{line:deca1} and~\ref{line:dela}), or she keeps it incremented and appends $b_{rec}=b_B(i)$ to $\calT^{A}(i)\circ b_A(i)$.  For $i>t_A$, i.e., after Alice terminates, the above trivially holds.
\end{proof}
\begin{corollary}\label{cor:trnoinc}
For any two \srs $i\leq j$, $\calT^{A}(i)=\calT^{A}(j)$ if and only if $r_A(i)=r_A(j)$. Similarly, $\calT^{B}(i) = \calT^{B}(j)$ if and only if $r_B(i)=r_B(j)$.
\end{corollary}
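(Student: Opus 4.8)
The plan is to derive Corollary~\ref{cor:trnoinc} as a direct consequence of Lemma~\ref{lem:nd}, using the ``Furthermore'' clause (that $r_A$ changes iff $\calT^{A}$ changes, and likewise for $r_B$) together with a simple monotonicity argument. First I would prove the statement for Alice; the statement for Bob is identical with the obvious substitutions, so I would just remark that the same argument applies.

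The key observation is that by Lemma~\ref{lem:nd}, for each single round~$\ell$ with $i \le \ell < j$, either $r_A(\ell+1) = r_A(\ell)$ and $\calT^{A}(\ell+1) = \calT^{A}(\ell)$, or $r_A(\ell+1) = r_A(\ell)+1$ and $\calT^{A}(\ell+1) = \calT^{A}(\ell)\circ b_A(\ell)\circ b_B(\ell)$ — in particular $\calT^{A}(\ell)$ is always a proper prefix of $\calT^{A}(\ell+1)$ in the second case, so $|\calT^{A}|$ is non-decreasing in the round index and increases by exactly~$2$ precisely in those rounds where $r_A$ increases (by~$1$). Chaining over $\ell = i, i+1, \ldots, j-1$ gives $r_A(j) - r_A(i) = \#\{\ell : i \le \ell < j,\ r_A \text{ increases at }\ell\}$ and $|\calT^{A}(j)| - |\calT^{A}(i)| = 2\cdot\#\{\ell : i \le \ell < j,\ \calT^{A}\text{ changes at }\ell\}$, and by the ``if and only if'' clause these two counts are equal. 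Hence $r_A(i) = r_A(j)$ iff no increase happens in any intermediate round, iff $\calT^{A}$ never changes in an intermediate round, iff $\calT^{A}(i) = \calT^{A}(j)$ — where the last equivalence uses that $\calT^{A}$ is monotone under prefix (so it is constant on $[i,j]$ exactly when its endpoints agree). I would spell out the forward direction ($\calT^{A}(i)=\calT^{A}(j) \Rightarrow r_A(i)=r_A(j)$) by noting that if $r_A$ had strictly increased somewhere in between, then $|\calT^{A}|$ would have strictly increased there and, by monotonicity, $|\calT^{A}(j)| > |\calT^{A}(i)|$, contradicting $\calT^{A}(i)=\calT^{A}(j)$; the reverse direction ($r_A(i)=r_A(j)\Rightarrow \calT^{A}(i)=\calT^{A}(j)$) is symmetric, using that $r_A$ is likewise non-decreasing over the interval $[i,j]$ once we restrict attention to $i \le t_A$ and the extended definition for $i > t_A$ where everything is constant.

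I do not expect any real obstacle here — this is a bookkeeping corollary. The only point that needs a little care is the range of rounds: for $i > t_A$ the variables $r_A$, $\calT^{A}$ are defined by the convention fixed in the preliminaries (their values frozen at the end of round~$t_A$), so both sides of the equivalence are trivially satisfied there, and for $i \le t_A < j$ one simply splits the interval at $t_A$; likewise for Bob the relevant range is all of $[t_B]$ and no such splitting is needed. A clean way to phrase the whole thing is: define $D_A(i) \defeq r_A(i) - r_A(1)$ and note $|\calT^{A}(i)| = 2 D_A(i)$ for all $i \in [t_B]$ (which follows from Lemma~\ref{lem:nd} by induction on~$i$, since $\calT^{A}(1)=\emptyword$ and $r_A(1)=0$), and both $r_A$ and $|\calT^{A}|$ are non-decreasing; then $\calT^{A}(i)=\calT^{A}(j)$ iff $|\calT^{A}(i)|=|\calT^{A}(j)|$ (using prefix-monotonicity for one direction and equality of lengths being necessary for the other) iff $D_A(i)=D_A(j)$ iff $r_A(i)=r_A(j)$. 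Presenting it via this $|\calT^{A}| = 2(r_A - r_A(1))$ identity is probably the most economical write-up, and I would lead with it.
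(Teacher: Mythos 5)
Your proposal is correct and follows essentially the same approach as the paper: both rest on the observation that $r_A$ and $\calT^{A}$ are non-decreasing and, by the ``Furthermore'' clause of Lemma~\ref{lem:nd}, change simultaneously, so each is constant on $[i,j]$ exactly when the other is. Your reformulation via the identity $|\calT^{A}(i)| = 2\,r_A(i)$ is a slightly more explicit bookkeeping of the same argument rather than a different route.
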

\begin{proof}
Note that $\calT^{A},r_A$ are non-decreasing. Lemma~\ref{lem:nd} proves that these two variable increase simultaneously, which proves this corollary. The same holds for $\calT^{B},r_B$.
\end{proof}
\goodbreak
\begin{lemma}\label{lem:shiftsync}
For $i\in [t_A+1]$, one of the following conditions holds:
\begin{enumerate}
\item\label{it:one}$r_B(i) = r_A(i)$ and $\calT^{B}(i) = \calT^{A}(i)$ or,
\item\label{it:two}$r_B(i) = r_A(i)+1$ and $\calT^{B}(i) = \calT^{A}(i)\circ b_A(i-1)\circ b_B(i-1)$.
\end{enumerate}
Furthermore, in either case, $\calT^{A}$ and $\calT^{B}$ are prefixes of $\calT^{\pi}$.%
\end{lemma}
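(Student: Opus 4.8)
The plan is to prove Lemma~\ref{lem:shiftsync} by induction on $i$, simultaneously establishing the dichotomy (conditions~\ref{it:one}/\ref{it:two}) and the ``furthermore'' claim that $\calT^A, \calT^B$ are prefixes of $\calT^\pi$. The base case $i=1$ is immediate: all of $r_A(1), r_B(1)$ are $0$ and $\calT^A(1)=\calT^B(1)=\emptyword$, so condition~\ref{it:one} holds and both transcripts are (trivially) prefixes of $\calT^\pi$. For the inductive step, I assume the statement holds at round $i \le t_A$ and analyze what happens during \sr $i$, passing to round $i+1$.

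The key structural observation I would exploit is that in \sr $i \le t_A$, Alice always sends $m_A(i) = (b_A(i), r_A(i)+1 \bmod 2)$ where $b_A(i) = \pi(x \mid \calT^A(i))$ (she increments $r_A$ and appends before sending), while Bob's behavior depends on whether he detects an error. The whole argument splits on two things: whether Alice's message $m_A(i)$ reaches Bob uncorrupted, and whether Bob's reply $m_B(i)$ reaches Alice uncorrupted. In Case~\ref{it:one} of the induction hypothesis ($r_B(i)=r_A(i)$, $\calT^B(i)=\calT^A(i)$), Alice's outgoing parity is $r_A(i)+1 \equiv r_B(i)+1 \pmod 2$, which is exactly the parity Bob expects for a ``new round''; so if Alice's message arrives, Bob appends $b_A(i)$, sets $err=0$, computes $b_B(i)=\pi(y \mid \calT^B(i)\circ b_A(i))$, appends it, and increments $r_B$ — and crucially his reply carries parity $r_B(i)+1 \equiv r_A(i)+1$, matching what Alice expects. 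Then if Bob's reply also arrives, both advance and we land in Case~\ref{it:one} at $i+1$ with $\calT^A(i+1)=\calT^B(i+1)=\calT^A(i)\circ b_A(i)\circ b_B(i)$; if Bob's reply is erased (or — a subtlety — never matches because Bob detected an error), Alice rolls back via Lines~\ref{line:dela},~\ref{line:deca1}, and we need to check we land in a valid configuration at $i+1$, which turns out to be Case~\ref{it:two} when Bob advanced but his reply was lost, or Case~\ref{it:one} when Bob also didn't advance. In Case~\ref{it:two} of the hypothesis ($r_B(i)=r_A(i)+1$), Bob is already one round ahead; here Alice's outgoing parity $r_A(i)+1 \equiv r_B(i) \pmod 2$ equals $r_B \bmod 2$, so by Line~\ref{line:uncor} Bob detects an ``error'' (this is the $r = r_B \bmod 2$ branch) and simply re-sends his stored $m = m_B(i-1) = (b_B(i-1), r_B(i) \bmod 2)$; since $r_B(i) = r_A(i)+1$, this parity matches what Alice expects, so if it arrives Alice advances into Case~\ref{it:one} at $i+1$, and if it's erased Alice rolls back and stays in Case~\ref{it:two}.

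To make the transcript-prefix claim go through alongside the dichotomy, the point is that whenever a party appends a bit it is the ``correct'' next bit of $\pi$ given its current transcript — Alice appends $\pi(x \mid \calT^A)$, Bob appends $b_A(i)$ (Alice's bit) followed by $\pi(y \mid \calT^B)$. Using the induction hypothesis that $\calT^A(i), \calT^B(i)$ are prefixes of $\calT^\pi$ agreeing up to the shift described in the active case, one checks that $b_A(i) = m^\pi_A(r_A(i)+1) = m^\pi_A(r_B(i))$ is the genuine next Alice-message of $\pi$, and similarly $b_B(i) = m^\pi_B(r_B(i))$; hence $\calT^A(i)\circ b_A(i)\circ b_B(i)$ and $\calT^B(i)\circ b_A(i)\circ b_B(i)$ remain prefixes of $\calT^\pi$ (indeed of $\calT^\pi(r_B(i))$). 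Lemma~\ref{lem:nd} and Corollary~\ref{cor:trnoinc} are used to constrain the bookkeeping — in particular that $r$ and $\calT$ move together, so the only configurations reachable are the ones enumerated.

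The main obstacle I anticipate is the careful case analysis around Bob's error-detection logic when Alice's message is corrupted versus when it arrives with the ``wrong'' (previous-round) parity, and making sure the roll-back on Alice's side in Case~\ref{it:two} doesn't desynchronize the parity argument. A second delicate point is one the overview flags implicitly: since Bob stores and possibly re-sends $m_B(i-1)$, I must track that the stored message's parity is always consistent with $r_B$'s current value, which requires knowing that $r_B$ changed exactly when $m$ was last updated — again Lemma~\ref{lem:nd}. I would organize the step as four sub-cases (hypothesis Case~\ref{it:one} with/without Alice's message arriving; hypothesis Case~\ref{it:two} with/without Bob's resent message arriving), and within each, a further split on whether Bob's reply reaches Alice, verifying in each leaf that one of~\ref{it:one},~\ref{it:two} holds at $i+1$ and the prefix property is preserved.
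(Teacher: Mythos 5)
Your proposal follows essentially the same route as the paper: induction on the round number, splitting on whether the hypothesis holds via Item~\ref{it:one} ($r_A(i)=r_B(i)$) or Item~\ref{it:two} ($r_B(i)=r_A(i)+1$), analyzing the erasure sub-cases within each, and invoking Lemma~\ref{lem:nd} to couple the evolution of $r_A,r_B$ with $\calT^A,\calT^B$. The ``second delicate point'' you flag about tracking that Bob's stored message stays consistent with $r_B$ is precisely what the paper's Claim~\ref{clm:ra_unchanged} resolves (showing $r_A(i)=r_A(i-1)$ in Case~2, hence $b_A(i)=b_A(i-1)$), so you have identified the right obstacle and the right tool.
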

\begin{proof}
We prove the lemma by induction on the round~$i$. 
In the first \sr, $r_A(1) = r_B(1)=0$ and $\calT^{B}(1)=\calT^{A}(1)=\emptyword$ (see Lines~\ref{line:inita},~\ref{line:initb}), thus Item~\ref{it:one} is satisfied. Note that $\calT^{A}$ and $\calT^{B}$ are trivially the prefixes of $\calT^{\pi}$. 
Now assume that the conditions hold at (the beginning of) round~$i\le t_A$ and consider what happens during this round. Note that both parties are active during round~$i$.

\begin{description}
\item [Case 1: $r_A(i) = r_B(i)$.]
Suppose Alice receives an (uncorrupted) message from Bob that carries the parity $r_A(i)+1$  (i.e., when Alice executes Line~\ref{line:uncora} and not the else part of Lines~\ref{line:dela}--\ref{line:deca1}). 
This means that Bob sends a message with parity $r_A(i)+1= r_B(i)+1$. Since $r_B(i)=r_A(i)$, this can only happen if Bob executed Lines~\ref{line:sb1} and~\ref{line:incb}, hence, $r_B(i+1) = r_B(i)+1$. %
If Alice receives a corrupted message, she decrements $r_A$ (that she had increased at the beginning of the round) and also deletes the last message from her transcript. Bob however may have received Alice's message correctly and in that event, he will increment his value of $r_B$ and update his transcript.

Note that if $r_B(i+1) = r_A(i+1)$, then by the induction hypothesis and Lemma~\ref{lem:nd} both transcripts either remained the same in round~$i$ (so they are still the same at the beginning of round~$i+1$), or both transcripts increased by appending $b_A(i)\circ b_B(i)$ to each, so they are still equal. Here, $b_A(i)=\pi(x \mid \calT^{A}(i))$ and $b_B(i)= \pi(y \mid \calT^{B}(i)\circ b_A(i))$. Similarly, if $r_B(i+1) = r_A(i+1)+1$, Lemma~\ref{lem:nd} establishes that $\calT^{A}(i)$ hasn't changed in round~$i$, while $\calT^{B}(i+1) = \calT^{B}(i)\circ b_A(i)\circ b_B(i)$, which by the induction hypothesis gives Item~\ref{it:two}. %
Since $b_A(i)$ and $b_B(i)$ are the correct continuations of $\calT^{A}(i)$ from $\pi$, the above discussion proves that $\calT^{A}(i+1)$ and $\calT^{B}(i+1)$ are prefixes of $\calT^{\pi}$.

\item [Case 2: $r_B(i)=r_A(i)+1$.] 
In this case, whether Bob receives an erasure or an uncorrupted message, he sets $err=1$.
Indeed, if Alice's message is not erased, then the parity Bob receives equals his saved parity 
(since Alice holds $r_A(i)=r_B(i)-1$ and she increases it by one in Line~\ref{line:inca} before sending it to Bob).
In both cases Bob does not change $r_B$, i.e.,  $r_B(i+1)=r_B(i)$ and he sends the message $m_B(i) = (b_B,r_B~\bmod 2)$ (see Lines~\ref{line:err2} and~\ref{line:sb2}) from his memory. Consequently, Bob's transcript doesn't change in round $i$. So, $\calT^{B}(i+1)=\calT^{B}(i)=\calT^{A}(i)\circ b_A(i-1)\circ b_B(i-1)$.

If Alice receives an erasure she sets $r_A(i+1)$ to be same as $r_A(i)$ (see Line~\ref{line:deca1}) and sets $\calT^{A}(i+1)=\calT^{A}(i)$. Since, $r_A(i+1) = r_A(i)$ and $r_B(i+1) = r_B(i)$ the claim holds. However, if Alice receives an uncorrupted message, she notices that $r_{rec}  = (r_A(i) + 1) \bmod 2$ (see Line~\ref{line:uncora}) and she does not decrement $r_A$. In this case, $r_A(i+1) = r_A(i)+1= r_B(i) = r_B(i+1)$ and $\calT^{A}(i+1)=\calT^{A}(i)\circ b_A(i)\circ b_B(i)$.
Now we prove that $b_A(i)=b_A(i-1)$ and $b_B(i)=b_B(i-1)$, thus in the former case, $\calT^{B}(i+1) = \calT^{A}(i+1)\circ b_A(i)\circ b_B(i)$ or $\calT^{A}(i+1)=\calT^{B}(i+1)$. in the latter. 
Since in round $i$, Bob sets $err=1$ we have that $m_B(i)=m_B(i-1)$ whence $b_B(i)=b_B(i-1)$.
To prove that the same holds for $b_A$ we will need the following simple claim.
\begin{claim}\label{clm:ra_unchanged} 
$r_A(i)=r_A(i-1)$.
\end{claim}

\begin{proof}
Supposing round $i-1$ satisfies Item~\ref{it:one}, we must have that $r_A(i)=r_A(i-1)$. If this were not true, then 
$r_B(i)=r_A(i)+1=(r_A(i-1)+1)+1=r_B(i-1)+2$ which is a contradiction of Lemma~\ref{lem:nd}. 
On the other hand, if round $i-1$ satisfies Item~\ref{it:two}, then we know $r_B$ cannot increase in round $i-1$, whence, $r_B(i)=r_B(i-1)$. Using this we have, $r_A(i-1)+1=r_B(i-1)=r_B(i)=r_A(i)+1$. 
\end{proof}
Following the above claim, $r_A(i-1)=r_A(i)$, and we get that 
$\calT^{A}(i-1)=\calT^{A}(i)$ (Lemma~\ref{lem:nd}). 
Therefore, $b_A(i-1)=\pi(x\mid \calT^{A}(i-1))=\pi(x\mid \calT^{A}(i))=b_A(i)$.
From the induction hypothesis, $\calT^{B}(i+1)=\calT^{B}(i)$ is a prefix of $\calT^{\pi}$. From the above discussion, we know that either $\calT^{A}(i+1)=\calT^{A}(i)\circ b_A(i)\circ b_B(i)=\calT^{B}(i+1)$ or $\calT^{A}(i+1)=\calT^{A}(i)$. In the former case, it is clear that $\calT^{A}(i+1)$ is a prefix of $\calT^{\pi}$ whereas in the latter case, since $\calT^{A}(i)$ is a prefix of $\calT^{\pi}$ (by the induction hypothesis) so is $\calT^{A}(i+1)$. \qedhere
\end{description}
\end{proof}
The following lemma implies that after Alice terminates, no matter what erasures Bob sees, his values of $r_B$ and $\calT^{B}$ do not change.
\begin{lemma}\label{lem:end}
For~$i$ such that $t_A < i\leq t_B$, \sr $i$ satisfies Item~\ref{it:one} of Lemma~\ref{lem:shiftsync} and $r_A(i)=\frac{N}{2}$.
\end{lemma}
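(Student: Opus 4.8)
The plan is to establish both conclusions — that $r_A(i)=\tfrac N2$ and that round $i$ satisfies Item~\ref{it:one} of Lemma~\ref{lem:shiftsync} — first at the transition round $i=t_A+1$, and then to carry them forward to every round $i$ with $t_A<i\le t_B$ by a short induction.

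First I would pin down $r_A$ at the moment Alice terminates. Since Alice executes round~$t_A$, the while condition holds at its start, i.e.\ $r_A(t_A)<\tfrac N2$; during round~$t_A$ the only change that increases $r_A$ is the single increment of Line~\ref{line:inca}, which brings it to $r_A(t_A)+1\le \tfrac N2$, and this is the largest value $r_A$ takes that round. As Alice does leave the loop, this value must be $\ge\tfrac N2$, hence equals exactly $\tfrac N2$ and the round is not rolled back. By the definition of $r_A(i)$ for $i>t_A$ (the value of $r_A$ just before Alice quits) we get $r_A(i)=\tfrac N2$ for all $t_A<i\le t_B$, and likewise $\calT^{A}(i)$ is frozen at its value at the end of round~$t_A$. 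By Corollary~\ref{cor:trnoinc} and Lemma~\ref{lem:nd} every unit of $r_A$ accounts for exactly two bits of $\calT^{A}$, so this frozen value has length $N$; since by Lemma~\ref{lem:shiftsync} it is a prefix of $\calT^{\pi}$ and $|\calT^{\pi}|=N$, it equals the full transcript $\calT^{\pi}$.

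Next I would treat round $t_A+1$, which exists because $t_A<t_B$. Lemma~\ref{lem:shiftsync} applies to $i=t_A+1$, so this round satisfies Item~\ref{it:one} or Item~\ref{it:two}, and in either case $\calT^{B}(t_A+1)$ is a prefix of $\calT^{\pi}$. Item~\ref{it:two} is impossible: it would force $\calT^{B}(t_A+1)=\calT^{A}(t_A+1)\circ b_A(t_A)\circ b_B(t_A)=\calT^{\pi}\circ b_A(t_A)\circ b_B(t_A)$, which is strictly longer than $\calT^{\pi}$ and hence not a prefix of it. So Item~\ref{it:one} holds, giving $r_B(t_A+1)=r_A(t_A+1)=\tfrac N2$ and $\calT^{B}(t_A+1)=\calT^{A}(t_A+1)$. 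I expect this step — ruling out Item~\ref{it:two} via the prefix/length argument — to be the only genuinely content-bearing point; the rest is bookkeeping about the order of speaking and termination.

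Finally I would run the induction over rounds $t_A+1\le i<t_B$, with the base case supplied above. Assume round~$i$ satisfies Item~\ref{it:one} with $r_B(i)=r_A(i)=\tfrac N2$ and $\calT^{B}(i)=\calT^{A}(i)$. During round~$i$ Alice has already terminated (as $i>t_A$), so in Bob's receive step the channel carries silence, which reaches him as $\silence$ or, if erased, as $\bot$. Since $i<t_B$ Bob does not terminate in round~$i$, so he cannot have received $\silence$; hence he receives $\bot$, sets $err\gets 1$ (Lines~\ref{line:uncor}--\ref{line:err2}), and merely retransmits the stored message $m$ (Line~\ref{line:sb2}) without modifying $r_B$ or $\calT^{B}$. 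Thus $r_B(i+1)=r_B(i)=\tfrac N2$ and $\calT^{B}(i+1)=\calT^{B}(i)$, while $r_A(i+1)=\tfrac N2$ and $\calT^{A}(i+1)=\calT^{A}(i)$ remain frozen; so round~$i+1$ again satisfies Item~\ref{it:one} with $r_A=\tfrac N2$, which closes the induction and proves the lemma for all $t_A<i\le t_B$.
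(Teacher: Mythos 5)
Your proof is correct, and the key step takes a genuinely different route from the paper's. To establish that round $t_A+1$ satisfies Item~\ref{it:one}, the paper first notes $r_B(t_A)\in\{N/2-1,N/2\}$ via Lemma~\ref{lem:shiftsync}, and then reaches back into the internal case analysis of that lemma's \emph{proof} (``we know from the proof of case~1/case~2 \ldots'') to argue that $r_B(t_A+1)=N/2$ in either case, which forces Item~\ref{it:one}. You instead compute $|\calT^{A}(t_A+1)|=N=|\calT^{\pi}|$ from $r_A(t_A+1)=N/2$ and Lemma~\ref{lem:nd}/Corollary~\ref{cor:trnoinc}, and rule out Item~\ref{it:two} directly: it would make $\calT^{B}(t_A+1)$ a prefix of $\calT^{\pi}$ of length $N+2$, which is absurd. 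This buys a cleaner, more modular argument that relies only on the \emph{statement} of Lemma~\ref{lem:shiftsync} (including its prefix clause) rather than on its proof, and it exposes the real reason Item~\ref{it:two} is impossible at termination time. The remaining propagation to all $i$ with $t_A<i\le t_B$ is the same idea in both proofs (Alice's state is frozen, Bob only ever sees $\bot$ or $\square$ and so never increments $r_B$); the paper states this directly while you package it as a short induction, a purely cosmetic difference.
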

\begin{proof}
Recall that $t_A < t_B$. 
Since Alice exits in \sr $t_A$, it must hold that at the end of this round $r_A=\frac{N}{2}$, yet,
$r_A(t_A) = \frac{N}{2}-1$ for otherwise, Alice would have terminated in the end of \sr $t_A-1$. 
Via Lemma~\ref{lem:shiftsync} we know that $r_B(t_A) \in \{\frac{N}{2}-1,\frac{N}{2}\}$. If $r_B(t_A)=\frac{N}{2}-1$ we know from the proof of case~1 in Lemma~\ref{lem:shiftsync} that since $r_A$ increases in round $t_A$, $r_B$ also increases. 
In the other case, namely, when $r_B(t_A)=\frac{N}{2}$, 
we know from the proof of 
Lemma~\ref{lem:shiftsync} (case~2) that $r_B$ does not change, i.e., 
$r_B(t_A+1)=r_B(t_A) = \frac{N}{2}$. 

After Alice exits, Bob can either hear silence or an erasure and therefore for all \srs $i > t_A$, Bob sets $err=1$ and consequently $r_B(i)$ is never incremented. It follows that for all $i > t_A$, $r_B(i) = \frac{N}{2}$. The second part of the claim follows from Corollary~\ref{cor:trnoinc} and Lemma~\ref{lem:shiftsync}, since $\calT^{A}(t_A+1) = \calT^{B}(t_A+1)$ and since $r_A,r_B$ do not change anymore.
\end{proof}

We are now ready to prove the main theorem and show that we can correctly simulate the noiseless protocol~$\pi$ under the specifications of Theorem~\ref{thm:main}. We first state Lemma~\ref{lem:progress} which will be help us bound the communication and also show that Alice eventually terminates.

\begin{lemma}\label{lem:progress}
In any round $i\in [t_A]$ where there are no erasures at all, $r_A(i+1)=r_A(i)+1$.
\end{lemma}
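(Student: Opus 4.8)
\textbf{Proof proposal for Lemma~\ref{lem:progress}.}

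The plan is to trace through a single round $i \in [t_A]$ in which the channel erases nothing, and show that Alice must increment $r_A$ and keep the increment. Since round $i \le t_A$, both parties are active. At the start of the round Alice executes Line~\ref{line:inca}, so immediately $r_A$ becomes $r_A(i)+1$; the only way this increment is undone is if Alice later enters the else-branch on Lines~\ref{line:dela}--\ref{line:deca1}, which happens precisely when the message $m'$ she receives from Bob is either $\bot$ or carries a parity $r_{rec} \ne r_A \bmod 2$ (where $r_A$ here is the already-incremented value, i.e.\ $r_A(i)+1$). Since there are no erasures in round $i$, $m' \ne \bot$, so it suffices to show Bob's reply carries parity $(r_A(i)+1) \bmod 2$.

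To pin down Bob's reply, I would invoke Lemma~\ref{lem:shiftsync} (applicable since $i \le t_A$, so $i \in [t_A]\subseteq[t_A+1]$) and split on its two cases for round $i$. In Case~1, $r_B(i) = r_A(i)$: Alice sends parity $(r_A(i)+1)\bmod 2$, and since there are no erasures Bob receives it uncorrupted; because $r_B(i) = r_A(i)$, the received parity $(r_A(i)+1)\bmod 2$ differs from $r_B(i)\bmod 2$, so Bob does \emph{not} enter the $err\gets 1$ branch — he takes Lines~\ref{line:err0}--\ref{line:sb1}, increments $r_B$ to $r_B(i)+1 = r_A(i)+1$, and sends $m_B(i)$ with parity $(r_A(i)+1)\bmod 2$. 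That is exactly the parity Alice's (incremented) $r_A$ expects, so Alice does not decrement, giving $r_A(i+1) = r_A(i)+1$. In Case~2, $r_B(i) = r_A(i)+1$: I claim this case is actually impossible in a round with no erasures, \emph{provided} we also track where round $i$ came from — but more simply, one checks that even in this case Bob's outgoing parity is $r_B(i)\bmod 2 = (r_A(i)+1)\bmod 2$ (whether he updates or resends, by the analysis in Case~2 of Lemma~\ref{lem:shiftsync} his message has parity $r_B(i) \bmod 2$ since he keeps $m$ unchanged and $m$ was last set with parity $r_B(i)\bmod 2$), which again matches the parity Alice expects after her increment. So in both cases Alice receives a matching uncorrupted parity and retains the increment.

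The main obstacle is bookkeeping the parity arithmetic carefully: one must be consistent about whether ``$r_A$'' refers to $r_A(i)$ or to the value $r_A(i)+1$ after Line~\ref{line:inca}, and similarly must use the invariant that the cached message $m$ in Bob's memory always carries parity $r_B \bmod 2$ at its current value — this last fact follows by a trivial induction from Lines~\ref{line:sb1} and the fact that $r_B$ and $m$ are only jointly updated. Once these conventions are fixed, the argument is a short case check driven entirely by Lemma~\ref{lem:shiftsync} and the code of Algorithms~\ref{alg:sima} and~\ref{alg:simb}. I would therefore structure the write-up as: (i) note $r_A$ is incremented at Line~\ref{line:inca}; (ii) reduce ``the increment survives'' to ``Bob's reply has parity $(r_A(i)+1)\bmod 2$ and is uncorrupted''; (iii) dispatch the two cases of Lemma~\ref{lem:shiftsync}.
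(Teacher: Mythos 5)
Your proof is correct and follows the paper's approach: split on the two cases of Lemma~\ref{lem:shiftsync} and trace Algorithms~\ref{alg:sima}--\ref{alg:simb} to see that Alice's increment of $r_A$ at Line~\ref{line:inca} is not undone. One note: your parenthetical claim that Case~2 ($r_B(i)=r_A(i)+1$) is ``actually impossible in a round with no erasures'' is false (it arises whenever Bob's reply was erased in an earlier round while Alice's message got through), but you set it aside, and your actual argument there---Bob's cached $m$ carries parity $r_B(i)\bmod 2$, which matches Alice's incremented $r_A$---is correct and is exactly what the paper's Case~2 analysis in Lemma~\ref{lem:shiftsync} establishes.
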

\begin{proof}
Note that both parties are still active in round~$i$.
Consider the two cases of Lemma~\ref{lem:shiftsync}. If $\calT^{A}(i)=\calT^{B}(i)$ and both messages of round $i$ are not erased, we showed that $r_A$ increases (case~1).
Similarly, if $\calT^{B}(i) = \calT^{A}(i)\circ b_A(i-1)\circ b_B(i-1)$ and no erasures occur, Alice extends her transcript and $r_A$ increases again (case~2).
\end{proof}

\begin{theorem}\label{thm:final}
Let $\pi$ be an alternating binary protocol and $T\in \mathbb{N}$ be an arbitrary integer. 
There exists a coding scheme $\Pi$ over a $4$-ary alphabet such that for any instance of~$\Pi$ that suffers at most $T$ erasures overall, Alice and Bob both output $\calT^{\pi}$. 
The simulation~$\Pi$ communicates at most $\CC(\pi)+2T$ symbols, and has $\CC(\Pi)\leq 2\CC(\pi)+4T$.  
\end{theorem}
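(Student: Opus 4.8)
The plan is to prove the theorem in two stages: first that the simulation is \emph{correct} (both parties halt and output $\calT^{\pi}$), and then that the stated bounds on the number of transmitted symbols and on the bit complexity hold. The correctness part is largely a repackaging of the lemmas already proved; the genuinely new work is establishing termination and doing the transmission count.

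For correctness I would first argue that Alice halts. Since the noise is finite there is a last round containing an erasure, and after that round Lemma~\ref{lem:progress} forces $r_A$ to increase in every round, so $r_A$ reaches $\tfrac N2$ after finitely many more rounds and Alice exits at some round $t_A<\infty$. Once Alice has exited, in every later round Bob hears $\silence$ unless the channel erases it; by finiteness of the noise this eventually occurs, so Bob halts as well (the termination guarantee of Lemma~\ref{lem:termination}). For the outputs: when Alice exits, $r_A=\tfrac N2$, and by the ``$r_A$ changes iff $\calT^{A}$ changes, and then by exactly $b_A(i)\circ b_B(i)$'' clause of Lemma~\ref{lem:nd}, each increment of $r_A$ lengthens $\calT^{A}$ by two bits, so $|\calT^{A}|=N$; together with the ``prefix of $\calT^{\pi}$'' clause of Lemma~\ref{lem:shiftsync} and $|\calT^{\pi}|=N$ this gives $\calT^{A}=\calT^{\pi}$. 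By Lemma~\ref{lem:end}, round $t_A+1$ satisfies Item~\ref{it:one} of Lemma~\ref{lem:shiftsync}, so $\calT^{B}(t_A+1)=\calT^{A}(t_A+1)=\calT^{\pi}$, and since $r_B$—hence, by Corollary~\ref{cor:trnoinc}, $\calT^{B}$—is frozen from that point on, Bob also outputs $\calT^{\pi}$.

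For the communication bound I would partition the erasures of the fixed instance into the $T_1$ erasures occurring in rounds $1,\dots,t_A$ and the $T_2$ occurring in rounds after $t_A$, so $T_1+T_2\le T$. Over rounds $1,\dots,t_A$ the quantity $r_A$ climbs monotonically (Lemma~\ref{lem:nd}) from $0$ to $\tfrac N2$, so exactly $\tfrac N2$ of these rounds increment $r_A$ and the remaining $t_A-\tfrac N2$ do not; by the contrapositive of Lemma~\ref{lem:progress} each non-incrementing round contains an erasure, hence $t_A-\tfrac N2\le T_1$, i.e. $t_A\le \tfrac N2+T_1$. In each of the rounds $1,\dots,t_A$ both parties are active and each transmits one symbol, contributing $2t_A$ symbols. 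After round $t_A$ Alice is silent (an implicit $\silence$, not counted toward $\CC$), and inspecting Algorithm~\ref{alg:simb} shows Bob transmits in such a round only when he has just received $\bot$ (an erased $\silence$) rather than $\silence$ itself; each such transmission is thus charged to a distinct erasure in $T_2$, so Bob sends at most $T_2$ symbols after round $t_A$. Summing, the instance transmits at most $2t_A+T_2\le 2(\tfrac N2+T_1)+T_2=N+2T_1+T_2\le N+2T=\CC(\pi)+2T$ symbols; since $|\Sigma|=4$, each symbol is $2$ bits, giving $\CC(\Pi)\le 2(N+2T)=2\CC(\pi)+4T$.

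The main obstacle is the post-termination accounting: one must take care that the implicit $\silence$ transmissions are excluded from $\CC$, that Bob's extra transmissions after round $t_A$ are each paid for by a distinct erasure (which requires pinning down exactly when Bob reaches the $err=1$ send step versus exiting the loop upon hearing $\silence$), and that the off-by-one at the final odd timestep—where Bob hears $\silence$ and halts without sending—does not push the count above $N+2T$. The only other mildly delicate point is the equality $|\calT^{A}|=N$ at Alice's exit, which I would derive from Lemma~\ref{lem:nd} rather than by re-tracing Algorithm~\ref{alg:sima}.
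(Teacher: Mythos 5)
Your proof is correct and follows essentially the same route as the paper: correctness is assembled from Lemmas~\ref{lem:nd}, \ref{lem:shiftsync}, \ref{lem:end}, and Corollary~\ref{cor:trnoinc}; Alice's termination follows from finiteness of the noise plus Lemma~\ref{lem:progress}; and the communication bound comes from splitting the erasures before and after $t_A$, bounding $t_A\le N/2+T_1$ via the contrapositive of Lemma~\ref{lem:progress}, and charging each of Bob's post-$t_A$ transmissions to a distinct erasure of Alice's silence. The only blemish is the parenthetical citation of Lemma~\ref{lem:termination} as a ``termination guarantee''---that lemma is the appendix impossibility result about coordinated termination and says nothing of the sort; the sentence is correct as argued (Bob halts once an unerased silence arrives, which happens since the noise is finite), so the reference should simply be dropped or replaced by a direct appeal to finiteness of the noise.
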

\begin{proof}

Lemma~\ref{lem:shiftsync} guarantees that at every given round, Alice and Bob hold a correct prefix of~$\calT^{\pi}$. Moreover, we know that by the time Alice terminates, her transcript (and hence Bob's transcript) is of length at least~$N$, which follows from Lemma~\ref{lem:end} and Lemma~\ref{lem:nd}, i.e., from the fact that $r_A=N/2$ at termination, and that every time $r_A$ increases by one, the length of $\calT^{A}$ increases by two. Finally, note that if the number of erasures is bounded by $T$, then Alice will eventually reach termination because, after $T$ erasures, $r_A$ increases by one in every round (Lemma~\ref{lem:progress}), until it reaches $N/2$ and Alice terminates. 

Finally, we need to prove that the communication behaves as stated. 
Assume $T'$ erasures happen up to round~$t_A$ and $T''=T-T'$ erasures happen in rounds $[t_A+1,t_B]$.
Since every round without erasures advances $r_A$ by one (again from Lemma~\ref{lem:progress}), and since when Alice terminates we have $r_A=N/2$ (Lemma~\ref{lem:end}), then $t_A \le T'+N/2$. Furthermore, after Alice terminates it takes one unerased (odd) round to make Bob terminate as well. Hence, at every round after $t_A$ and until Bob terminates, Alice's silence must be erased. It follows that $t_B-t_A \le 1+T''$.
Thus, $t_B \le 1+ T'' + t_A \le 1+ T + N/2$. 

Every round $i\in[t_A]$ consists of two transmissions, while every round $t_A <i < t_B$ contains only a single transmission---Bob's transmission, excluding round $t_B$ where Bob hears silence and terminates without sending a message. The total number of transmissions is then,
\[
t_A + t_B - 1 \le N + 2T' + T'' \le N+2T.
\]

Recall that $|\Sigma|=4$, hence 
$\CC(\Pi) \le (N + 2T)\log 4 = 2(\CC(\pi)+2T)$.
\end{proof}

\subsection{Noise Resilience and Code Rate} 
We can compare the above result to the case where the noise is bounded as a fraction of the symbols communicated in the protocol~\cite{schulman96,FGOS15,EGH16}. 
In our setting, the noise amount can be arbitrary. In order to compare it to the bounded-noise model,
we ask the following question. Assume an instance of $\Pi$ with large amount of noise~$T$ ($T\gg N)$.
Then, what fraction does the noise make out of the entire communication.

As a corollary of Theorem~\ref{thm:final} it is easy to see that the fraction of noise is lower bounded by $\frac{T}{N+2T}$ whose limit, as $T$ tends to infinity, is $1/2$. Indeed, $1/2$ is an upper bound on the noise fraction in the bounded-noise setting~\cite{FGOS15}.\footnote{Note that a noise level of~$1/2$ is also achievable for interactive coding over erasure channels in the bounded-noise setting, for alphabets of size at least~$4$~\cite{EGH16,FGOS15}. The maximal noise for erasure channels with \emph{binary} or \emph{ternary} alphabet is still open.} 
Furthermore, if the noise is bounded to be a $\delta$-fraction of the total communication, for some $\delta< 1/2$ then, $T \le \delta \cdot 2t_B \le \delta(N+2T)$ and so $T\le \frac{\delta N}{1-2\delta }$. This implies a maximal asymptotic code rate of $1/2$. Indeed,
\[R = \frac{\CC(\pi)}{\CC(\Pi)} \ge \frac{1}{\log|\Sigma|}\cdot\frac{N}{N+2T} \ge \frac{1}{2}\cdot\frac{N}{N+2\frac{\delta N}{1-2\delta }}\ge \frac{1}{2}(1-2\delta) = \frac12 - \delta.\]
As $T$ is unbounded relative to $\CC(\pi)$ and we can potentially get a zero rate, a more interesting measure is the ``waste'' factor, i.e., how much the communication of $\Pi$ increases per single noise, for large $T$. In our scheme it is easy to see that each corruption delays the simulation by one round, that is, it wastes two symbols (4 bits). 
This implies a waste factor of~$4$ bits per corruption,
$\omega =\lim_{T\to \infty}\frac{\CC(\Pi)}{T} = \frac{2N+4T}{T} = 4$.

Finally, we mention that
our result extends to binary alphabet by naively encoding each symbols as two bits (this also proves the second part of Theorem~\ref{thm:main}).
However, this results in a reduced tolerable noise rate of~$\tfrac14$. 
Similar to the scheme in~\cite{EGH16}, the noise resilience can be improved to~$\tfrac13$
by encoding each symbol via an error correcting code of cardinality~$4$ and distance~$\frac{2}{3}$, e.g.,
\{000,011,110,101\}. In this case, two bits must be erased in order to invalidate a round.
Each round (two timesteps) consists the transmission of six bits. Hence, the obtained resilience is $2/6=1/3$, similar to the best known resilience in the bounded-noise setting with binary alphabet~\cite{EGH16}.

\section{A coding scheme in the AGS adaptive setting}\label{sec:adaptive}
In this section we provide an adaptive coding scheme in the AGS setting
that simulates any noiseless protocol~$\pi$ and is resilient to an unbounded and unknown 
amount of erasures~$T$.

The setting of this section is based on the adaptive setting described in Section~\ref{sec:model}
with the following main difference: at any given round, parties are allowed to either remain silent
\emph{or} send a symbol from $\Sigma$. The other party is assumed to listen and it either
hears silence, a symbol from~$\Sigma$, or an erasure mark in case the channel corrupted the transmission.

The (symbol) communication complexity of a protocol in this setting, $\CC^{sym}(\Pi)$, is defined to be the number symbols the parties communicate, i.e., the number of 
timesteps in which the sender did not remain silent and transmitted a symbol from~$\Sigma$. 
We begin with an alphabet of size~4 and then in Section~\ref{sec:unary} we discuss how to reduce the alphabet to being unary---either a party speaks (sends energy) or it remains silent (no energy). 
The (symbol) communication complexity then portrays the very practical quantity of ``how much energy'' the protocol costs. In this case,  
we define $\CC(\Pi) = \CC^{sym}(\Pi)$.

Another difference from the previous setting regards termination. In this section, we do not require the parties to terminate and give output. Instead we only require that at some point the parties compute the correct output and that the communication is bounded. Formally, this property of semi-termination is defined as follows.
\goodbreak
\begin{definition}\label{def:semi-term}
We say that an adaptive protocol has semi-termination if there exists a round $t$ after which both the following conditions hold:
\begin{enumerate}
\item Both parties have computed the correct output and,
\item Both parties remain silent indefinitely (whether they terminate or not)
\end{enumerate}
\end{definition}

The round complexity of an instance, $\RC(\Pi)$ is the number of timesteps the instance takes until 
the parties terminate. In case of semi-termination, $\RC(\Pi)$ is the number of timesteps until the point in time $t$ where the semi-termination conditions hold according to Definition~\ref{def:semi-term}.

\subsection{The coding scheme}\label{sec:adaptive:coding}
The adaptive coding scheme for this setting is described in Algorithms~\ref{alg:sima1} and~\ref{alg:simb1}. We now give an overview and  in Section~\ref{sec:adaptive:analysis} we prove the correctness of the scheme and analyze its properties such as communication complexity, etc. As mentioned earlier, Algorithms~\ref{alg:sima1} and~\ref{alg:simb1}
assume a channel alphabet of size~$4$ whose size we reduce in Section~\ref{sec:unary}.

The coding scheme is based on the scheme of Section~\ref{sec:sim}.
Similar to the scheme in Section~\ref{sec:sim}, 
we maintain that Alice and Bob communicate only in odd and even timesteps, respectively.  
Therefore, in odd (even) timesteps, Alice (Bob) may either send a message or remain silent. 
Whenever a player chooses to speak, they will send a message of the form $(\Inf, \Par) \in \{0,1\}\times\{0,1\}$. Here, $\Inf$ is the information bit specified by the noiseless protocol ($\pi$), based on the input and the information received by the player so far. Formally, Alice and Bob maintain the ``received information'' in the form of a partial transcript---$\calT^{A}$ or $\calT^{B}$---which is the concatenation of the information bits exchanged until that point in the protocol. 
$\Par$ is the parity of the round number (of $\pi$) that the player is currently simulating. Recall that a round corresponds to two timesteps where Alice transmits the first message and Bob the second message. As before, Alice maintains her round number as $r_A$ and Bob his as $r_B$.

The main difference between the coding scheme in this section and the previous regards the way parties signal that an erasure has happened. Here, whenever Alice or Bob receive an erasure, they simply remain silent.
This signals the other side that there was an erasure and that the last message should be re-transmitted.
Otherwise, they behave similar to the scheme in Section~\ref{sec:sim}. 
Namely,
when they receive a non-silence non-erased message 
 $(\Inf,\Par)$, they check the received parity: 
 if it corresponds to the next-expected round they update their partial transcript $\calT$ by appending the new information bit, and increase the round number~$r$ by one. If the parity indicates a mismatch to the current state, they just re-transmit the last message again (as they do in case they hear silence).

The scheme in this section has a semi-termination property, where Bob never terminates the protocol while Alice terminates once her partial transcript obtained the required length. For Bob, it is possible that  his partial transcript reaches the maximal length while Alice has not yet completed the simulation. Thus, 
Bob never exits and instead it remains silent waiting for a message from Alice. If indeed Alice hasn't completed the simulation such a message is bound to arrive sometime, and Bob will reply to it with the last message of the protocol. Otherwise, Bob will keep waiting. 
To achieve this behavior,
once Bob's transcript reaches the maximal length he transitions to what we call \emph{termination phase}.
In termination phase, Bob no longer responds to silence by re-transmitting his message, but instead he responds to silence with silence.

\IncMargin{1em}
\begin{algorithm}[ht]
\linespread{1}\selectfont
\small
\caption{Simulation in the AGS setting (Alice's side)}\label{alg:sima1}\DontPrintSemicolon
\KwData{An alternating binary protocol $\pi$ of length $N$ and an input $x$.}
\BlankLine
\nlset{A.1}\label{line:adap:inita}\emph{Initialize $\calT^{A} \gets \emptyword$, $r_A\gets 0$ and $m_{rec}\gets (0,0)$}\;
\nlset{A.2}\label{line:adap:AliceExit}\While{$r_{A} < \frac{N}{2}$}{

	\tcp*[l]{Odd timesteps}
	\nlset{A.3}\label{line:adap:inca}$r_A \gets r_A+1$\;
	\nlset{A.4}\label{line:adap:aif1}\uIf{$m_{rec}\neq \bot$}{
		\nlset{A.5}\label{line:adap:senda}$m_{send}\gets (\pi(x \mid \calT^{A}),r_A\bmod 2))$\;
		}
	\Else{
		\nlset{A.6}$m_{send}\gets \square$ \;
	}
	\nlset{A.7}send $m_{send}$\;
	\BlankLine
	\BlankLine
	\tcp*[l]{Even timesteps}
	\nlset{A.8}receive $m_{rec} = (b_{rec},r_{rec})$\;
	\nlset{A.9}\label{line:adap:aif2}\uIf{$r_{rec}= r_A\bmod 2$}{
		\nlset{A.10}\label{line:adap:appa}$\calT^{A}\gets \calT^{A}\circ \pi(x \mid \calT^{A})\circ b_{rec}$\;
	}
	\Else{	
		\nlset{A.11}\label{line:adap:deca}$r_A \gets r_A-1$\;	
	}
}
\nlset{A.12}Output $\calT^{A}$\;
\end{algorithm}
\DecMargin{1em}

\IncMargin{1em}
\begin{algorithm}[ht]
\linespread{1}\selectfont
\caption{Simulation in the AGS setting (Bob's side)}\label{alg:simb1}\DontPrintSemicolon
\small
\KwData{An alternating binary protocol $\pi$ of length $N$ and an input $y$.}
\BlankLine
\nlset{B.1}\label{line:adap:initb}\emph{Initialize $\calT^{B} \gets \emptyword$, $b_{send}\gets0$, $r_B\gets 0$, and $m_{rec}\gets (0,0)$}\;
\BlankLine
\nlset{B.2}\While{$r_B < \frac{N}{2}$}{

	\tcp*[l]{Odd timesteps}
	\nlset{B.3}receive $m_{rec} = (b_{rec},r_{rec})$\;
	\nlset{B.4}\label{line:adap:bif2}\If{$r_{rec}= r_B+1\bmod 2$}{
		\nlset{B.5}\label{line:adap:incb}$r_B \gets r_B+1$\;	
		\nlset{B.6}\label{line:adap:setb}$b_{send} \gets \pi(y\mid \calT^{B}\circ b_{rec})$\;
		\nlset{B.7}\label{line:adap:appb}$\calT^{B}\gets \calT^{B}\circ b_{rec}\circ b_{send}$\;
	}
	\BlankLine\BlankLine

	\tcp*[l]{Even timesteps}
	\nlset{B.8}\label{line:adap:bif1}\uIf{$m_{rec}\neq \bot$}{
		\nlset{B.9}\label{line:adap:sendb} $m_{send} \gets (b_{send},r_B~\bmod 2)$\;
		}
	\Else{
		\nlset{B.10}\label{line:adap:bsendsq}$m_{send}\gets \square$\;
	}
	\nlset{B.11}send $m_{send}$\;
}
\BlankLine\BlankLine
\tcp*[l]{Once $r_B$ reached $N/2$ switch to termination phase}
\nlset{B.12}\label{line:adap:terminationloop}\While{$r_B = \frac{N}{2}$}{
	\tcp*[l]{Odd timesteps}
	\nlset{B.13}receive $m_{rec} = (b_{rec},r_{rec})$\;
	\BlankLine\BlankLine
	
	\tcp*[l]{Even timesteps}
	\nlset{B.14}\label{line:adap:bif3}\uIf{$r_{rec}= r_B\bmod 2$}{
		\nlset{B.15}\label{line:adap:bobLast}$m_{send} \gets (b_{send},r_B\bmod~2)$\;
	}
	\Else
	{	\nlset{B.16}$m_{send} \gets \square$\;}
	\nlset{B.17}send $m_{send}$\;
}
\nlset{B.18}The output is $\calT^{B}$\;    
\end{algorithm}
\DecMargin{1em}

\subsection{Analysis}\label{sec:adaptive:analysis}
We now analyze the coding scheme~$\Pi$ described in Algorithms~\ref{alg:sima1} and~\ref{alg:simb1} and prove that it (1) simulates any $\pi$ even in the presence of~$T$ erasures (for any a priori unknown~$T$), and (2) the simulation communicates  $\CC^{sym}(\Pi)\leq \CC(\pi)+T$ non-silent transmissions. 
The above properties are formulated as Theorem~\ref{thm:final1} at the end of this subsection.

The approach of the analysis goes in large parts 
similar to the analysis of Algorithms~\ref{alg:sima} and~\ref{alg:simb} which appeared in Section~\ref{sec:corr}. 
First we show that at any given round, the difference between the round numbers maintained by Alice and Bob is at most one. Furthermore, Bob's partial transcript is always as long as the one of Alice, but no longer than Alice's partial transcript plus the information bits corresponding to the ongoing round. 
Then, we prove that the parties' partial transcript is always the correct one and that for any round in which there are no erasures---the length of the partial transcript increases.
It then follows that when Alice terminates the protocol, we are guaranteed that her partial transcript is the entire simulation. Since Bob's partial transcript can only be longer, this means he also correctly simulated $\pi$.

We start our analysis by fixing a run of the coding scheme $\Pi$.
We recall that round~$i$ (of~$\Pi$) corresponds to timesteps $(2i-1)$ and $2i$ 
and that  Alice and Bob send symbols in odd and even timesteps respectively. Since $\Pi$ is alternating, Alice begins a \sr by sending a message and the \sr ends after Alice receives Bob's response.
We define $t_A$
to be the termination round of Alice. 
Observe that $t_B=\infty$, i.e., Bob never terminates. 
Indeed, once Bob enters the second while loop of Algorithm~\ref{alg:simb1} (Line~\ref{line:adap:terminationloop}) 
he never changes his value of $r_B$ and continues to execute this loop. 
We borrow the definitions of $\calT^{A}(i),\calT^{B}(i),r_A(i)$ and $r_B(i)$ from Section~\ref{sec:corr}. Also, let $|\calT^{B}(i)|$ be the length of the string~$\calT^{B}(i)$.
Since $r_A$ is not defined after Alice exits in Algorithm~\ref{alg:sima}, we set $r_A(i)=r_A(t_A+1)$ for all~$i > t_A$.
Similarly, we set $\calT^{A}(i)=\calT^{A}(t_A+1)$ for all~$i > t_A$. 
If in round $i$ Alice or Bob do not remain silent then $m_A(i)=(b_A(i), \rho_A(i))$ and $m_B(i)=(b_B(i), \rho_B(i))$ be the messages sent by Alice or Bob, respectively (Line~\ref{line:adap:senda} and Line~\ref{line:adap:sendb}). In case Alice or Bob remain silent then we set $m_A(i)$ (or $m_B(i)$) to be $\square$ and $b_A(i)$ (or $b_B(i)$) to be the empty word~$\emptyword$.

\begin{lemma}[Analog of Lemma~\ref{lem:nd}]\label{lem:nd1}
For any $i \geq 0$, the following holds.
\begin{enumerate}
\item\label{it:1} $r_A(i+1) \in \{r_A(i),r_A(i)+1\}$ and $\calT^{A}(i+1) \in \left\{\calT^{A}(i),\calT^{A}(i)\circ \pi(x\mid \calT^{A}(i))\circ b_B(i)\right\}$ and, 
\item\label{it:2} $r_B(i+1) \in \{r_B(i),r_B(i)+1\}$ and $\calT^{B}(i+1) \in \left\{\calT^{B}(i),\calT^{B}(i)\circ b_A(i)\circ \pi(y\mid \calT^{B}(i) \circ b_A(i))\right\}$. 
\end{enumerate}
Furthermore, $r_A$ ($r_B$) changes if and only if $\calT^{A}$ ($\calT^{B}$) changes.
\end{lemma}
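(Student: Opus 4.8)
The plan is to prove the lemma by a direct, round-by-round inspection of Algorithms~\ref{alg:sima1} and~\ref{alg:simb1}, in the same spirit as the proof of Lemma~\ref{lem:nd}. The two items concern Alice's and Bob's state separately, so I would treat them one at a time, and within each I would split on whether the party takes its ``accept/append'' branch in the given round. For $i>t_A$ the first item is immediate from the convention $r_A(i)=r_A(t_A+1)$, $\calT^{A}(i)=\calT^{A}(t_A+1)$; for Bob, once $r_B$ reaches $N/2$ he runs only the termination loop of Lines~\ref{line:adap:terminationloop}--\ref{line:adap:bobLast}, which receives and (re)sends a symbol but never alters $r_B$ or $\calT^{B}$, so the second item also holds trivially there. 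Hence the substantive case is a round $i\le t_A$ in which both parties are still active.

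For Alice (Item~\ref{it:1}), in such a round she first increments $r_A$ (Line~\ref{line:adap:inca}), sends a message or a silence according to whether the previous reception was an erasure, and then receives $m_{rec}=(b_{rec},r_{rec})$. I would argue there are exactly two outcomes. Either $m_{rec}$ is a genuine (non-erased, non-silent) message with $r_{rec}=r_A\bmod 2$, in which case she executes Line~\ref{line:adap:appa} and keeps the incremented $r_A$; here $\calT^{A}$ has not yet been touched this round, so the appended suffix is exactly $\pi(x\mid\calT^{A}(i))\circ b_B(i)$ (note $b_{rec}=b_B(i)$, since a message that passes the parity test was necessarily sent by a non-silent Bob and not erased), and $r_A(i+1)=r_A(i)+1$. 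Or the parity test fails -- which I would point out lumps together the cases $m_{rec}=\bot$, $m_{rec}=\square$, and a correctly received but mismatched parity -- in which case she undoes the increment at Line~\ref{line:adap:deca} and leaves $\calT^{A}$ unchanged. In either case $r_A$ changes over the round if and only if the append branch fired, which is if and only if $\calT^{A}$ changed.

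For Bob (Item~\ref{it:2}) the argument is analogous. The even-timestep block (Lines~\ref{line:adap:bif1}--\ref{line:adap:bsendsq}) only sets the outgoing symbol and never modifies $r_B$ or $\calT^{B}$, so everything is controlled by the odd-timestep conditional at Line~\ref{line:adap:bif2}: if $r_{rec}=r_B+1\bmod 2$ with a valid message, Bob runs Lines~\ref{line:adap:incb}--\ref{line:adap:appb}, where $b_{rec}=b_A(i)$ and the freshly computed $b_{send}$ equals $\pi(y\mid\calT^{B}(i)\circ b_A(i))$, giving $r_B(i+1)=r_B(i)+1$ and $\calT^{B}(i+1)=\calT^{B}(i)\circ b_A(i)\circ\pi(y\mid\calT^{B}(i)\circ b_A(i))$; otherwise (erasure, silence, or wrong parity from Alice) no state changes. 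Since the same test governs both the increment of $r_B$ and the extension of $\calT^{B}$, the ``furthermore'' clause follows at once.

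I do not expect a genuine obstacle here -- this is a bookkeeping lemma -- but the points that need care are (i) checking that \emph{all} the ways the accept branch can fail to fire (a real erasure $m_{rec}=\bot$, a deliberate silence $m_{rec}=\square$ from the partner, or a mismatched parity) collapse into the single ``no change'' alternative, and (ii) making sure that when the append branch does fire the inserted information bit is the partner's actual round-$i$ bit (not $\emptyword$) and that the $\pi(\cdot\mid\cdot)$ evaluations use the beginning-of-round transcripts $\calT^{A}(i)$, $\calT^{B}(i)$ rather than partially updated ones. Establishing (ii) is purely a matter of tracking the order of assignments within a round; (i) is where the AGS-specific silence symbol enters, and it is essentially the only place the proof differs meaningfully from that of Lemma~\ref{lem:nd}.
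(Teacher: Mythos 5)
Your proposal is correct and follows essentially the same route as the paper's proof: a direct line-by-line inspection of the two algorithms, observing that Alice's fate is decided by whether she undoes her initial increment of $r_A$ at Line~A.11, and Bob's by whether the parity test at Line~B.4 fires, with the trivial post-termination cases handled via the freezing convention. Your version is somewhat more explicit than the paper's in enumerating the ways the accept branch can fail to fire (erasure, silence, wrong parity) and in checking that the appended bits are evaluated against the start-of-round transcripts, but these are elaborations of the same bookkeeping argument, not a different approach.
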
 
\begin{proof}
In Algorithms~\ref{alg:sima1} and~\ref{alg:simb1} we note that $r_A$ and $r_B$ respectively can increase by at most one in every round. 
From Lines~\ref{line:adap:incb}--\ref{line:adap:appb} it is clear that $\calT^{B}$ changes (according to Item~\ref{it:2}) if and only if $r_B$ changes. 
We note that Alice begins every round by incrementing $r_A$ (Line~\ref{line:adap:inca}) in the odd timestep. From Lines~\ref{line:adap:aif2}--\ref{line:adap:deca}, we see that $\calT^{A}$ changes (according to Item~\ref{it:1}) if and only if Alice does not decrement $r_A$ in the even timestep.
Lastly, whenever $\calT^{A}$ or $\calT^{B}$ changes it is clear from Lines~\ref{line:adap:appa} and~\ref{line:adap:appb} that $\calT^{A}(i+1)=\calT^{A}(i)\circ \pi(x\mid \calT^{A}(i))\circ b_B(i)$ and $\calT^{B}(i+1)=\calT^{B}(i)\circ b_A(i)\circ \pi(y\mid \calT^{B}(i) \circ b_A(i))$ respectively. 

Once Alice exits (or Bob enters termination phase), the values of $r_A,\calT^{A}$ (or $r_B,\calT^{B})$ never change.
\end{proof}

\begin{lemma}[Analog of Lemma~\ref{lem:shiftsync}]\label{lem:shiftsync1}
For $i\in [t_A+1]$, one of the following conditions holds:
\begin{enumerate}
\item\label{it:one}$r_B(i) = r_A(i)$ and $\calT^{B}(i) = \calT^{A}(i)$ or,
\item\label{it:two}$r_B(i) = r_A(i)+1$ and 
	$\calT^{B}(i) = \calT^{A}(i)\circ \pi(x\mid \calT^{A}(i))\circ \pi\left(y\mid \calT^{A}(i)\circ\pi\left(x\mid \calT^{A}(i)\right)\right)$.
\end{enumerate}
Furthermore, in either case, $\calT^A$ and $\calT^B$ are prefixes of $\calT^{\pi}$.
\end{lemma}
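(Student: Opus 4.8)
The plan is to prove Lemma~\ref{lem:shiftsync1} by induction on the round~$i$, following the template of the proof of Lemma~\ref{lem:shiftsync} but keeping careful track of the new role played by silence. For the base case $i=1$ the initializations (Lines~\ref{line:adap:inita} and~\ref{line:adap:initb}) give $r_A(1)=r_B(1)=0$ and $\calT^{A}(1)=\calT^{B}(1)=\emptyword$, so Item~\ref{it:one} holds and both (empty) transcripts are prefixes of $\calT^{\pi}$. For the inductive step I assume the invariant holds at the start of round $i\le t_A$ — so that both parties are still active — and trace the two timesteps of round~$i$, splitting as in Lemma~\ref{lem:shiftsync} into Case~1 ($r_A(i)=r_B(i)$, hence $\calT^{A}(i)=\calT^{B}(i)$) and Case~2 ($r_B(i)=r_A(i)+1$, hence $\calT^{B}(i)=\calT^{A}(i)\circ\pi(x\mid\calT^{A}(i))\circ\pi(y\mid\calT^{A}(i)\circ\pi(x\mid\calT^{A}(i)))$). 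Inside each case I further split on (i) whether Alice speaks or is silent in round~$i$ — by Lines~\ref{line:adap:aif1}--\ref{line:adap:senda} she is silent precisely when she received $\bot$ at the end of round $i-1$ — (ii) which of the two transmissions of round~$i$ are erased, and, in Case~2 only, (iii) whether Bob is still in the main loop or already in his termination loop.

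In Case~1 the parity Alice would transmit is $(r_A(i)+1)\bmod 2=(r_B(i)+1)\bmod 2$. If Alice speaks and neither transmission of round~$i$ is erased, Bob's test in Line~\ref{line:adap:bif2} passes, so Bob advances, appends $\pi(x\mid\calT^{A}(i))\circ\pi(y\mid\calT^{A}(i)\circ\pi(x\mid\calT^{A}(i)))$ to his transcript and echoes the matching parity; Alice's test in Line~\ref{line:adap:aif2} then passes as well and she appends the same two bits, so $r_A,r_B$ both increase and Item~\ref{it:one} is preserved. If instead only Bob's reply is erased, Alice decrements $r_A$ while Bob has already advanced, landing in Item~\ref{it:two}. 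In all remaining sub-cases (Alice's round-$i$ message is erased, or Alice is silent, or both transmissions are erased) Bob's test fails, so $r_B$ and $\calT^{B}$ are unchanged, and Alice receives either $\bot$, silence, or a \emph{stale} re-transmission from Bob carrying parity $r_B(i)\bmod 2=r_A(i)\bmod 2\ne(r_A(i)+1)\bmod 2$, so her parity test fails and she decrements; hence both parties stand still and Item~\ref{it:one} is preserved. The point to stress here is exactly the one flagged in the overview: when a ``retransmission request'' silence is itself erased, the party that should have re-transmitted instead hears silence and re-transmits a message, but that message carries the stale parity and is harmlessly discarded by the other party's parity check.

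In Case~2 the organizing observation is that Bob cannot advance in round~$i$: whether Alice speaks (her parity is $(r_A(i)+1)\bmod 2=r_B(i)\bmod 2\ne(r_B(i)+1)\bmod 2$), is silent, or her message is erased, the test in Line~\ref{line:adap:bif2} — or in Line~\ref{line:adap:bif3} if Bob is already in his termination loop, which by the invariant can happen at a round $i\le t_A$ only when $r_A(i)=N/2-1$ and $r_B(i)=N/2$ — fails, so $r_B(i+1)=r_B(i)$ and $\calT^{B}(i+1)=\calT^{B}(i)$ (Lemma~\ref{lem:nd1}). It therefore remains to follow Alice. If she receives $\bot$ or silence she decrements, leaving $r_A$ and $\calT^{A}$ unchanged, so Item~\ref{it:two} carries over. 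If she receives Bob's re-transmission $(b_{send},r_B(i)\bmod 2)$, its parity matches her current $r_A$, so she keeps $r_A$ incremented and appends $\pi(x\mid\calT^{A}(i))\circ b_{send}$, giving $r_A(i+1)=r_A(i)+1=r_B(i+1)$. To see that this restores Item~\ref{it:one}, note that $b_{send}$ and $\calT^{B}$ are updated only together (Lines~\ref{line:adap:setb}--\ref{line:adap:appb}), so whenever $\calT^{B}$ is nonempty its last bit equals Bob's stored $b_{send}$; the Item~\ref{it:two} formula in the induction hypothesis therefore already tells us $b_{send}=\pi(y\mid\calT^{A}(i)\circ\pi(x\mid\calT^{A}(i)))$, whence $\calT^{A}(i+1)=\calT^{A}(i)\circ\pi(x\mid\calT^{A}(i))\circ\pi(y\mid\calT^{A}(i)\circ\pi(x\mid\calT^{A}(i)))=\calT^{B}(i)=\calT^{B}(i+1)$. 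If one prefers an explicit intermediate step, this is the place for a short claim analogous to Claim~\ref{clm:ra_unchanged}, namely $r_A(i)=r_A(i-1)$ and $\calT^{A}(i)=\calT^{A}(i-1)$, obtained from the invariant at round $i-1$ together with Lemma~\ref{lem:nd1}.

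Finally, for the ``furthermore'' clause: in every sub-case where a transcript grows, the appended bits are the $\pi(x\mid\cdot)$ and $\pi(y\mid\cdot)$ continuations of a string that the induction hypothesis guarantees is a prefix of $\calT^{\pi}$, hence they are the correct next bits of $\calT^{\pi}$ and the updated transcript is still a prefix of $\calT^{\pi}$; in the sub-cases where a transcript is unchanged the property is inherited directly. I expect the only real difficulty to be bookkeeping rather than ideas: correctly handling the asymmetry between reacting to $\bot$ (go silent) and reacting to a heard silence (re-transmit), checking in every branch that the extra re-transmissions this creates carry stale parities and are therefore ignored, and not overlooking that for a round $i\le t_A$ Bob may already be executing his termination loop rather than the main loop.
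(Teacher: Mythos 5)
Your proof is correct, and the overall induction structure and case split mirror the paper's. The one place where you take a genuinely different (and cleaner) route is in Case~2, for showing that Bob's re-transmitted bit is the correct one when Alice catches up. The paper proves an inner claim by tracing backwards to the last round~$k$ where $r_A(k)=r_B(k)$, arguing that $r_A$, $\calT^{A}$, and Bob's stored $b_{send}$ are all frozen throughout rounds $[k,i]$, and then unfolding what was stored at round~$k$. You instead observe a standing invariant: Bob's stored $b_{send}$ and $\calT^{B}$ are updated only together at Lines~\ref{line:adap:setb}--\ref{line:adap:appb}, so whenever $\calT^{B}$ is nonempty its last bit equals $b_{send}$; the Item~\ref{it:two} formula in the induction hypothesis then reads off $b_{send}=\pi(y\mid \calT^{A}(i)\circ\pi(x\mid\calT^{A}(i)))$ directly. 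This replaces the paper's backwards induction with a one-line structural fact about the algorithm, which is both shorter and arguably more transparent; the price is that it is slightly less self-contained (you rely on the invariant, which is easy but must be noted), whereas the paper's argument stays entirely within the round-by-round bookkeeping. Both establish the same intermediate fact and the rest of your argument---base case, handling of silence-as-request and stale parities, the termination-phase sub-case, and the ``prefix of $\calT^{\pi}$'' clause---matches the paper.
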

\begin{proof}
We prove the lemma by induction on the round~$i$. 
In the first round, we initialize $r_A(1) = r_B(1)=0$ and $\calT^{B}(1)=\calT^{A}(1)=\emptyword$ (Lines~\ref{line:adap:inita},~\ref{line:adap:initb}), thus Item~\ref{it:one} is satisfied and moreover the transcripts are trivially correct prefixes of~$\calT^{\pi}$. Now assume that the conditions hold at (the beginning of) round $j$ for all $j\leq i$, and consider what happens during round~$i\le t_A$; note that Alice is active in this round. 

\begin{description}
\item [Case 1: $r_A(i) = r_B(i)$.]
If Alice does not increase $r_A(i)$ in this round, then either Item~\ref{it:one} or~\ref{it:two} 
hold via Lemma~\ref{lem:nd1}. Note that $\calT^{A}(i)=\calT^{B}(i)$ is a correct prefix of~$\calT^{\pi}$ by induction and Alice sends $b_A(i) = \pi(x\mid \calT^{A}(i))$ (Line~\ref{line:adap:senda}).

Now assume that Alice increases $r_A$ at the end of round $i$, i.e., she  
executes the \emph{if} part (Line~\ref{line:adap:appa}) rather than the \texttt{else} part (Line~\ref{line:adap:deca}). Therefore, Bob must have sent a message with parity $r_A(i)+1\equiv r_B(i)+1 \mod  2$. 
Since $r_B(i) \ne r_A(i)+1 \mod 2$, 
the above can only happen if Bob has executed Lines~\ref{line:adap:incb}--\ref{line:adap:appb} and increased his own $r_B$.  Hence, $r_A(i+1) = r_B(i+1)$.

Lemma~\ref{lem:nd1} indicates that both partial transcripts, $\calT^A$ and $\calT^B$ have increased during round $i$. We are left to show that they are identical and correct prefixes of~$\calT^{\pi}$. 
Indeed, both transcripts increase by appending $b_A(i)\circ b_B(i)$, so they remain identical.
Moreover, via Lemma~\ref{lem:nd1} we have that $b_A(i) = \pi(x\mid \calT^{A}(i))$ and that
$b_B(i)= \pi(y\mid \calT^{A}(i)\circ b_A(i))$. 
Hence, we have appended 
\[b_A(i)\circ b_B(i)=\pi \left(x\mid \calT^{A}(i)\right)\circ \pi \left(y\mid \calT^{A}(i)\circ\pi(x\mid \calT^{A}(i))\right)\]
 to both transcripts, which is exactly the correct continuation with respect to~$\calT^{\pi}$. Thus, both are still the correct prefixes of~$\calT^{\pi}$.

\item [Case 2: $r_B(i)=r_A(i)+1$.] In this case, no matter what Bob receives--silence, erasure or a message (with the same parity as $r_B(i)$)--he does not execute Lines~\ref{line:adap:incb}--\ref{line:adap:appb} and hence does not change $r_B,\calT^{B}$.  
As above there are two cases. If Alice does not increase her $r_A$, then by 
Lemma~\ref{lem:nd1} she doesn't change $\calT^{A}$ also and both transcripts remains the same,
which, by the induction hypothesis, satisfies Item~\ref{it:two}.

Now let's assume Alice increases~$r_A$, that is, $r_A(i+1)=r_A(i)+1$.
This can happen only if Bob sends a message $(b_B(i),r_B \bmod 2)$ with $r_B = r_A(i)+1 = r_B(i)$
(line~\ref{line:adap:aif2}) and Alice receives this correctly. 
Therefore, Alice sets $\calT^{A}(i+1) = \calT^A(i) \circ b_A(i) \circ b_B(i)$. 
We now claim that this is the correct continuation, i.e., that $\calT^{A}(i+1) = \calT^{B}(i)$. Clearly, $b_A(i) = \pi(x\mid \calT^A(i))$ which is the correct continuation of $\calT^{A}$ with respect to $\calT^{\pi}$.

The bit sent by Bob, $b_B(i)$, is also the correct continuation even if this message was generated at a previous round~$k<i$ and was re-transmitted at round~$i$. 
\begin{claim}If $r_A(i+1)=r_A(i)+1$ then, $b_B(i)=\pi(y\mid \calT^{A}(i)\circ\pi(x\mid \calT^{A}(i)))$. \end{claim}
\begin{proof}
Let $k\leq i-1$ be the largest integer such that $r_B(k)=r_A(k)$ but $r_B(k+1)=r_A(k+1)+1$ and thus, $r_
A(k+1)=r_A(k)$. 
By the induction hypothesis, such a round must exist since $r_A(1)=r_B(1)=0$ and $r_B(i)=r_A(i)+1$. 
Observe that there does not exist any round  $k'\in [k+1,i-1]$ satisfying $r_A(k')=r_B(k')$, since the existence of such~$k'$
contradicts the maximality of~$k$.  
Hence for all such $k'$, it holds that $r_B(k')=r_A(k')+1$. 
It follows that $r_B$ (and therefore $r_A$) cannot increase in any such round: as discussed in the beginning of \textbf{Case~2}, $r_B$ does not increase in this case; $r_A$ cannot increase for otherwise we will have $k'+1\in [k+2,i]$ for which $r_A(k'+1)=r_B(k'+1)$. In particular, 
$k'+1\in [k+1,i-1]$ since $r_B(i)=r_A(i)+1$. Recall that the existence of such $k'$ contradicts the maximality of $k$.

By the induction hypothesis, 
$\calT^{B}(k) = \calT^{A}(k)$.
Since $r_A$ does not change in rounds $[k,i-1]$, $\calT^{A}$ also does not change in those rounds (Lemma~\ref{lem:nd1}), and we get
$\calT^{B}(k) = \calT^{A}(k)=\calT^{A}(i)$.
Then,
\begin{align*}
\calT^{B}(k+1) &= \calT^{B}(k)\circ b_A(k)\circ b_B(k)\\
&=\calT^{A}(i)\circ b_A(k)\circ \pi\left(y\mid \calT^{A}(i)\circ b_A(k)\right)\\
&=\calT^{A}(i)\circ \pi\left(x\mid \calT^{A}(i)\right)\circ \pi\left(y\mid \calT^{A}(i)\circ\pi(x\mid \calT^{A}(i))\right),
\end{align*}
where the last equality follows because in round $k$, 
Alice must send a non-silent message with 
$b_{send}=b_{A}(k)=\pi(x\mid \calT^{A}(k)) = \pi(x\mid \calT^{A}(i))$  (line~\ref{line:adap:senda}).
Additionally, since Bob increases $r_B$ (line~\ref{line:adap:incb}) he also  sets $b_{send}=b_B(k)=\pi(y\mid \calT^{A}(i)\circ b_A(k))$ (line~\ref{line:adap:setb}), i.e.,
\[b_B(k)=\pi\left(y\mid \calT^{A}(i)\circ\pi(x\mid \calT^{A}(i))\right).\]
In rounds $[k+1,i]$ Bob does not increase $r_B$, hence he does not execute the if block (Lines~\ref{line:adap:incb}--\ref{line:adap:appb}) and does not changes his value of $b_{send}$. 
Hence, $b_{B}(i)=b_{B}(k)$, which completes the proof of the claim.
\end{proof}
\end{description}
\end{proof}

\begin{lemma}[Analog of Lemma~\ref{lem:end}]\label{lem:end1}
Any round~$i$, with $i > t_A$,  satisfies Item~\ref{it:one} of Lemma~\ref{lem:shiftsync1} and $r_A(i)=\frac{N}{2}$. Moreover, Bob remains silent in round $i$.
\end{lemma}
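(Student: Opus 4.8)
The plan is to mimic the structure of the proof of Lemma~\ref{lem:end} from the non-adaptive setting, adapting it to the subtleties introduced by silence and by Bob's termination phase. First I would recall that Alice exits in round $t_A$, which means that at the end of round $t_A$ we have $r_A = \frac{N}{2}$, while $r_A(t_A) = \frac{N}{2}-1$ (otherwise Alice would have exited the while loop one round earlier, at the end of round $t_A-1$). By Lemma~\ref{lem:shiftsync1} applied at round $t_A$, we know $r_B(t_A) \in \{\frac{N}{2}-1, \frac{N}{2}\}$. I would then split into the same two cases as in Lemma~\ref{lem:end}: if $r_B(t_A) = \frac{N}{2}-1$ then round $t_A$ is in Case~1 of Lemma~\ref{lem:shiftsync1}, and since $r_A$ does increase in round $t_A$ (that is exactly what makes Alice exit), the analysis of Case~1 forces $r_B$ to increase as well, so $r_B(t_A+1) = \frac{N}{2}$; if $r_B(t_A) = \frac{N}{2}$ then round $t_A$ is in Case~2, and there $r_B$ never changes, so again $r_B(t_A+1) = \frac{N}{2}$. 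In both situations $r_A(t_A+1) = r_B(t_A+1) = \frac{N}{2}$ and, by Corollary-type reasoning via Lemma~\ref{lem:shiftsync1}, $\calT^{A}(t_A+1) = \calT^{B}(t_A+1)$, establishing Item~\ref{it:one} and $r_A = \frac{N}{2}$ at round $t_A+1$.

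Next I would push this forward to all rounds $i > t_A$ by induction. Once $r_B$ reaches $\frac{N}{2}$, Bob is in the termination-phase loop (Line~\ref{line:adap:terminationloop}), where he never modifies $r_B$ or $\calT^{B}$ (as noted in Lemma~\ref{lem:nd1}'s proof), so $r_B(i) = \frac{N}{2}$ for all $i \ge t_A+1$; and $\calT^{B}$ is frozen at its value $\calT^{B}(t_A+1) = \calT^{A}(t_A+1)$. Meanwhile, after Alice exits we have set $r_A(i) = r_A(t_A+1) = \frac{N}{2}$ and $\calT^{A}(i) = \calT^{A}(t_A+1)$ for all $i > t_A$ by convention. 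Hence for every $i > t_A$ we have $r_A(i) = r_B(i) = \frac{N}{2}$ and $\calT^{A}(i) = \calT^{B}(i)$, which is precisely Item~\ref{it:one}, and $r_A(i) = \frac{N}{2}$.

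For the last sentence — Bob remains silent in every round $i > t_A$ — I would argue as follows. After Alice terminates she no longer sends anything, so whatever Bob receives in the odd timestep of round $i > t_A$ is either the implicit silence $\square$ (if the channel does not corrupt it) or an erasure $\bot$ (if it does); in particular Bob never receives a non-silent, non-erased message. In the termination-phase loop Bob sends a non-silent message only if $r_{rec} = r_B \bmod 2$ (Line~\ref{line:adap:bif3}); but this test is only satisfied when $m_{rec}$ is an actual message of that parity, not when it is $\square$ or $\bot$. Therefore in every round $i > t_A$ Bob falls into the \texttt{else} branch and sends $\square$, i.e., he remains silent. I would also double-check the boundary round $t_A+1$ itself: if $t_A+1 = t_A+1$ corresponds to the first iteration of the termination loop (which it does, by the case analysis above showing $r_B$ reaches $\frac N2$ by the end of round $t_A$), the same argument applies verbatim since Alice has already exited by then.

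The main obstacle, I expect, is the bookkeeping around when exactly Bob enters the termination phase relative to $t_A$: one must be careful that Bob's transition to the termination loop happens no later than the end of round $t_A$, so that there is no ``gap'' round $i > t_A$ in which Bob is still in the main loop and might re-transmit a real message upon hearing silence. This is handled by the case analysis on $r_B(t_A)$ above, which shows $r_B$ is already $\frac N2$ at the start of round $t_A+1$; but it relies crucially on Lemma~\ref{lem:shiftsync1} pinning $r_B(t_A)$ to within one of $r_A(t_A)$, and on the Case~1/Case~2 dichotomy behaving as in the non-adaptive proof. Everything else is a routine transcription of the argument of Lemma~\ref{lem:end}.
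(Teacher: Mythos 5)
Your proposal is correct and follows essentially the same approach as the paper: both rely on Lemma~\ref{lem:shiftsync1}, the freezing of $r_A$ after Alice exits, the fact that $r_B$ cannot exceed $N/2$ once Bob enters the termination phase, and the observation that Bob only hears $\square$ or $\bot$ afterward and so always falls into the \texttt{else} branch of Line~\ref{line:adap:bif3}. The one cosmetic difference is that you derive $r_B(t_A{+}1)=N/2$ by re-running the explicit Case~1/Case~2 analysis of Lemma~\ref{lem:end}, whereas the paper gets it more directly from $r_A\le r_B\le N/2$; both are valid and rest on the same facts.
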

\begin{proof}
Alice terminates at $t_A$, hence at the end of this round she holds $r_A=N/2$ (line~\ref{line:adap:AliceExit}).
By definition $r_A$ remains the same for any round~$i$ after Alice has terminated. 
Since $r_A\le r_B$  (Lemma~\ref{lem:shiftsync1}) we also have that $r_B(i)=N/2$ for any such round~$i$. 
Note that it cannot increase above $N/2$ since Bob enters the termination phase once $r_B=N/2$ and never increases it again.

After Alice has terminated, Bob either hears silence or an erasure.
Therefore, for all rounds $i > t_A$, he never executes the \texttt{if} block of Line~\ref{line:adap:bif3}, since he never receives a message with the parity same as $r_B$. It follows that Bob executes the \texttt{else} block and remains silent indefinitely.
\end{proof}

Next, we show that the simulation eventually reaches to an end and does not get stuck or hangs indefinitely.
\begin{lemma}[Analog of Lemma~\ref{lem:progress}]\label{lem:progress1}
Let $i$ be any round for which Alice has not yet terminated (i.e., $r_A(i)<N/2$).
If no erasures at all occur in rounds $i,i+1$, then $r_A(i+2)\ge r_A(i)+1$.
\end{lemma}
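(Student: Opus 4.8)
The plan is to track the progress of Alice's round counter $r_A$ across the two consecutive erasure-free rounds $i$ and $i+1$, distinguishing cases according to whether Alice has terminated by round $i+1$ and, if not, whether round $i$ satisfies Item~\ref{it:one} or Item~\ref{it:two} of Lemma~\ref{lem:shiftsync1}. First I would observe that since $r_A(i) < N/2$, Alice is active in round $i$; if she terminates at the end of round $i$ then $r_A(i+1)=N/2 \ge r_A(i)+1$ and $r_A$ never decreases afterwards (Lemma~\ref{lem:nd1}), so $r_A(i+2)\ge r_A(i)+1$ and we are done. Hence the interesting situation is when Alice is also active in round $i+1$.

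Next I would invoke Lemma~\ref{lem:progress} in spirit but re-derive the single-round statement in the AGS setting: in a round with no erasures where round $i$ satisfies Item~\ref{it:one} (so $\calT^{A}(i)=\calT^{B}(i)$ and $r_A(i)=r_B(i)$), Alice sends a non-silent message (since $m_{rec}\ne\bot$, as the previous receive was not an erasure — here one must be a little careful and use that no erasure occurred, so $m_{rec}$ at the start of round $i$ is whatever was genuinely sent), Bob receives it correctly with the expected parity, so Bob executes Lines~\ref{line:adap:incb}--\ref{line:adap:appb} and replies with parity $r_B+1 \equiv r_A(i)+1$; Alice receives this uncorrupted, matches the parity in Line~\ref{line:adap:aif2}, and does not decrement, so $r_A(i+1)=r_A(i)+1$. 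Symmetrically, if round $i$ satisfies Item~\ref{it:two} (so $r_B(i)=r_A(i)+1$), then Bob is one round ahead; Alice sends her bit with parity $r_A(i)+1 \equiv r_B(i)$, but Bob does not advance (Case 2 of Lemma~\ref{lem:shiftsync1}); however Bob does re-send his stored message $m_{send}=(b_B,r_B\bmod 2)$ with parity $r_B(i) \equiv r_A(i)+1$, which Alice receives uncorrupted and whose parity matches her incremented $r_A$, so again Alice does not decrement and $r_A(i+1)=r_A(i)+1$. Thus in either case, one erasure-free round in which Alice is active yields $r_A(i+1)=r_A(i)+1$.

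Then I would apply this same single-round fact to round $i+1$: if Alice is active in round $i+1$ (i.e. $r_A(i+1)<N/2$), then since round $i+1$ again satisfies one of the two items of Lemma~\ref{lem:shiftsync1} and has no erasures, the same argument gives $r_A(i+2)=r_A(i+1)+1=r_A(i)+2\ge r_A(i)+1$; and if instead $r_A(i+1)=N/2$, then Alice terminates and $r_A(i+2)=r_A(i+1)=N/2\ge r_A(i)+1$. Either way $r_A(i+2)\ge r_A(i)+1$, which is what we want.

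I expect the main obstacle to be the bookkeeping around the variable $m_{rec}$ and the silence logic: unlike the scheme of Section~\ref{sec:sim}, here Alice may have been silent at the start of round $i$ (if the message she received in round $i-1$ was an erasure), in which case the ``no erasures in round $i$'' hypothesis alone does not immediately tell us Alice speaks a valid bit in round $i$ — one must check that even a silent Alice in round $i$ still leads to $r_A$ increasing, or argue that the relevant case of Lemma~\ref{lem:shiftsync1} forces Bob's reply to carry the matching parity regardless. The cleanest fix is to note that whatever Alice sends in round $i$ (silence or a bit with parity $r_A(i)\bmod 2$), Bob's response in the erasure-free round $i$ carries parity equal to $r_A(i)+1 \bmod 2$ — in Case~1 because Bob advances, in Case~2 because Bob re-sends his stored message whose parity is $r_B(i)=r_A(i)+1$ — and in both cases this is exactly the parity Alice checks against after incrementing, so she does not decrement. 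I would therefore phrase the round-$i$ analysis entirely in terms of the parity Bob transmits, sidestepping the question of whether Alice spoke, and only then pass to round $i+1$.
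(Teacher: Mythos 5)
Your plan hinges on reducing the two-round statement to a single-round statement, and then applying it twice. But the single-round statement you propose is false, and the falsity is precisely the reason the lemma is stated with two consecutive erasure-free rounds.

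Concretely: your ``cleanest fix'' asserts that in an erasure-free round $i$, Bob's reply always carries parity $r_A(i)+1 \bmod 2$, and you justify this in Case~1 ($r_A(i)=r_B(i)$) by saying ``Bob advances.'' That justification is wrong when Alice is silent. If Alice was erased in the second timestep of round $i-1$, then $m_{rec}=\bot$ at the start of round $i$ and Alice sends $\square$. Bob then receives silence, which does not satisfy the test on Line~\ref{line:adap:bif2} (there is no $r_{rec}$ to compare), so Bob does \emph{not} execute Lines~\ref{line:adap:incb}--\ref{line:adap:appb} and does \emph{not} increment $r_B$. His reply at Line~\ref{line:adap:sendb} therefore carries parity $r_B(i)\bmod 2=r_A(i)\bmod 2$, not $r_A(i)+1\bmod 2$. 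Alice, having already incremented $r_A$ at Line~\ref{line:adap:inca}, sees a parity mismatch at Line~\ref{line:adap:aif2}, decrements, and ends round $i$ with $r_A(i+1)=r_A(i)$. So a single erasure-free round does not suffice in this sub-case; the progress only materializes in round $i+1$ (where Alice is guaranteed non-silent because her round-$i$ receive was not $\bot$). Your framing ``sidesteps the question of whether Alice spoke,'' but the question cannot be sidestepped, because Bob's reply parity depends on it in Case~1.

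A second gap: you never treat the case where Bob has already switched to termination phase. There, if Alice is silent, Bob does not re-send a stored message at all — he replies with silence (Line~\ref{line:adap:bif3} goes to the else branch) — so again $r_A$ does not advance in round $i$, and the argument must fall through to round $i+1$. The paper's proof splits explicitly on whether Alice is silent in round $i$ and, within the silent branch, on whether Bob is in termination phase and which item of Lemma~\ref{lem:shiftsync1} holds; only in the ``Alice silent, $r_A(i)=r_B(i)$'' and ``Alice silent, Bob in termination phase'' branches is the second erasure-free round actually consumed. You should adopt that decomposition rather than the uniform single-round reduction.
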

\begin{proof}
Assume that Alice is not silent in round~$i$.
Then, consider the two cases of Lemma~\ref{lem:shiftsync1}. 
If $\calT^{A}(i)=\calT^{B}(i)$ and Alice sends a valid message $m_A=(b_A,r_A)$, this message arrives at Bob's end uncorrupted, and since it carries the correct parity, Bob replies with the correct bit and parity and 
Alice will increase her  $\calT^{A}$ and  $r_A$, so $r_A(i+1)=r_A(i)+1$.
If $\calT^{B}(i) = \calT^{A}(i)\circ b_A(i)\circ b_B(i)$, Alice will send a message with parity that
is incompatible with what Bob is expecting, and he will reply with a saved message that has exactly the parity Alice is expecting. Then, Alice extends her transcript and $r_A$, and we have $r_A(i+1)=r_A(i)+1$.

However, it is possible that Alice remains silent in round~$i$ (due to an erasure in the previous timestep). 
In this case, we again have two cases: either Bob repeats with a saved message (with parity $r_B(i)$); or Bob remains silent (if he switched to termination phase). 

\begin{description}
\item [If Bob is not in termination phase:]
Consider the two cases of Lemma~\ref{lem:shiftsync1}. 
If $r_B(i)=r_A(i)+1$ then this message is accepted by Alice: At the beginning of round $i$ 
Alice increases $r_A = r_A(i)+1$ and then the condition of line~\ref{line:adap:aif2} is satisfied.
Alice extends her transcript and does not decrease $r_A$ back (the \texttt{else} block), so $r_A(i+1)=r_A(i)+1$.
In the other case, $r_B(i)=r_A(i)$. Here, Alice remains silent which indicates Bob's message was erased.
Then, Bob replies with a message with parity $r_B=r_A(i)$, which is ignored by Alice since it has an incorrect parity. Then, in round $i+1$ Alice sends a message (i.e., she is not silence).
As argued above, if Alice is not silent and there are no erasures during that round, then $r_A$ increases.
Hence, $r_A(i+2)=r_A(i)+1$.
\item [If Bob is in termination phase:]
Alice being silent in round~$i$ only causes Bob to remain silent as well. Then, in round~$i+1$ Alice cannot remain silent, since her received message in round~$i$ was not~$\bot$ (Line~\ref{line:adap:aif1}). Then, the claim holds via the same argument as above when Alice is not silent, applied to round~$i+1$.
\end{description}
\end{proof}

The following is a corollary of the above proof.
\begin{corollary}\label{cor:progressA}
Assume a round $i$ in which Alice hasn't terminated yet and Bob hasn't switched to termination phase. 
If there are no erasures in round~$i$ then $r_A(i+1)=r_A(i)+1$ except for the case where 
$r_A(i)=r_B(i)$ and the second timestep of round~$i-1$ was erased.
\end{corollary}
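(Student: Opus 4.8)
The statement follows by re-examining, inside the proof of Lemma~\ref{lem:progress1}, exactly which branch fails to advance Alice's round counter. The plan is first to observe that Alice remains silent in round~$i$ precisely when the value of $m_{rec}$ she holds at the start of round~$i$ equals~$\bot$ (the test of line~\ref{line:adap:aif1}), and this value is whatever the channel delivered in the second timestep of round~$i-1$. Hence ``Alice is silent in round~$i$'' is the same event as ``the second timestep of round~$i-1$ was erased'' (for $i=1$ neither occurs, since $m_{rec}$ is initialized to $(0,0)$). So it suffices to prove that $r_A(i+1)=r_A(i)+1$ whenever Alice is not silent in round~$i$, and also whenever Alice is silent in round~$i$ but $r_B(i)=r_A(i)+1$; the only remaining possibility, Alice silent together with $r_B(i)=r_A(i)$, is then exactly the exception in the statement.

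If Alice is not silent, the argument is the first part of the proof of Lemma~\ref{lem:progress1}: I would split on the two alternatives of Lemma~\ref{lem:shiftsync1}. Recall that Alice increments $r_A$ at the top of the round (line~\ref{line:adap:inca}), so she sends parity $(r_A(i)+1)\bmod 2$. When $r_B(i)=r_A(i)$, Bob's parity test (line~\ref{line:adap:bif2}) succeeds, so he advances and echoes the matching parity; when $r_B(i)=r_A(i)+1$, Bob's test fails, but since Alice's message is not~$\bot$ and Bob is not in termination phase, he re-transmits his stored message, which carries parity $r_B(i)\bmod 2=(r_A(i)+1)\bmod 2$. Either way, with no erasure in round~$i$, Alice receives a message whose parity matches $r_A\bmod 2$, so the test of line~\ref{line:adap:aif2} holds, she does not execute the decrement of line~\ref{line:adap:deca}, and $r_A(i+1)=r_A(i)+1$.

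If instead Alice is silent in round~$i$, then Bob receives $\square\neq\bot$, so he does not pass the test of line~\ref{line:adap:bif2}, and (not being in termination phase) he re-transmits his stored message with parity $r_B(i)\bmod 2$. The whole question is whether this parity agrees with the value $r_A\bmod 2=(r_A(i)+1)\bmod 2$ that Alice checks against in line~\ref{line:adap:aif2}: it does exactly when $r_B(i)=r_A(i)+1$, giving $r_A(i+1)=r_A(i)+1$, and it does not when $r_B(i)=r_A(i)$, in which case Alice decrements and $r_A(i+1)=r_A(i)$, which is the exceptional case. I expect the only point requiring care is the parity bookkeeping around Alice's opening increment of $r_A$ combined with the fact that a silent Alice still elicits a non-silent reply from a non-terminating Bob (because he receives~$\square$, not~$\bot$); beyond that, nothing new is needed over the proof of Lemma~\ref{lem:progress1}.
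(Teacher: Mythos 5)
Your proposal is correct and follows the same route the paper intends: the corollary is stated immediately after Lemma~\ref{lem:progress1} with the remark that it follows from that proof, and your case analysis is exactly the one embedded in that proof (Alice non-silent versus silent, split on the two alternatives of Lemma~\ref{lem:shiftsync1}, with the ``silent and $r_A(i)=r_B(i)$'' branch as the unique failure). The only thing you are adding is making explicit the (correct) equivalence between ``Alice is silent in round $i$'' and ``the second timestep of round $i-1$ was erased,'' which the paper leaves implicit.
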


Similar to the analysis in Section~\ref{sec:corr}, the above lemmas lead to the correctness of the scheme, which we complete in the proof of Theorem~\ref{thm:final1} below.
Before that, we turn to analyze the communication and round complexity of the scheme, which is a little more involved as compared to the Section~\ref{sec:corr}. To this end, we need a notion of (unit) costs incurred by Alice and Bob in order to send a message (recall that keeping silent comes with no cost), and (unit) cost incurred by the channel per erasure. We show in Lemma~\ref{lem:costbound} that the combined costs of Alice and Bob at any round does not exceed the length of Bob's partial transcript plus the cost incurred by the channel till that round.

In order to analyze the communication, we need the following definitions first. We define the (cumulative) cost incurred by Alice, till the beginning of round $i$ by $\ca(i)=|\{j\in [i-1]: m_A(j)\neq \square\}|$. Similarly, $\cb(i)=|\{j\in [i-1]: m_B(j)\neq \square\}|$. 
We also define the cost incurred by the channel, $\ch(i)$, as the total number of erasures that Alice and Bob both receive till the beginning or round $i$. 
\[
\ch(i) = |\{j \in [i-1]: m_{rec,A}(j)=\erasure  \} | + |\{j \in [i-1]: m_{rec,B}(j)=\erasure  \} |.
\]
We note that $\ca(t_A)+\cb(t_A)={\CC^{sym}(\Pi)}$ and $T\geq \ch(t_A)$. By definition, $\ca(i+1)\in\{\ca(i),\ca(i)+1\}$ and $\cb(i+1)\in\{\cb(i),\cb(i)+1\}$, i.e., each party speaks at most one symbol at every round.

The next lemma bounds the progress of the simulation as a function of the communication up to the observed noise.
\begin{lemma}\label{lem:costbound}
For any round~$i$ %
one of the following conditions holds:
\begin{enumerate}
\item\label{it:3}$r_B(i) = r_A(i)$ and $|\calT^{B}(i)| + \ch(i) \geq  \ca(i)+\cb(i)$ or,
\item\label{it:4}$r_B(i) = r_A(i)+1$ and $|\calT^{B}(i)| + \ch(i) \geq  \ca(i)+\cb(i)+1$.
\end{enumerate}
\end{lemma}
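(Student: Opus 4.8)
The plan is to prove Lemma~\ref{lem:costbound} by induction on the round~$i$, closely mirroring the structure of the proof of Lemma~\ref{lem:shiftsync1}, while carefully tracking how each of the three quantities $|\calT^B|$, $\ch$, and $\ca+\cb$ changes from round $i$ to round $i+1$. The base case $i=1$ is immediate: $r_A(1)=r_B(1)=0$, $|\calT^B(1)|=0$, and $\ca(1)=\cb(1)=\ch(1)=0$, so Item~\ref{it:3} holds with equality. For the inductive step I would fix a round~$i$, assume the statement holds at~$i$, and split into the two cases dictated by Lemma~\ref{lem:shiftsync1} (together with Lemma~\ref{lem:nd1} to control which variables move).

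The heart of the argument is a bookkeeping inequality: in every round, the increment to $\ca+\cb$ is at most the increment to $|\calT^B|+\ch$, \emph{except} possibly when a ``superfluous'' retransmission happens, and in that exceptional situation the round-number configuration flips from Item~\ref{it:4} to Item~\ref{it:3} (or stays consistent) so that the extra ``$+1$'' slack recorded in Item~\ref{it:4} absorbs the loss. Concretely, I would go through the sub-cases: (a) if no party speaks (both silent), then $\ca+\cb$ is unchanged while $|\calT^B|+\ch$ is nondecreasing, and the round-number relation is preserved by Lemma~\ref{lem:nd1}, so the hypothesis carries over. (b) If exactly one party speaks and its message is erased, then that speaker contributes $+1$ to $\ca+\cb$ but the erasure contributes $+1$ to $\ch$, so the two sides move together; one then checks (from the case analysis in Lemmas~\ref{lem:shiftsync1} and~\ref{lem:progress1}) that the round-number relation is preserved. (c) If a party speaks and its message is \emph{not} erased and actually advances the simulation (so $r_B$ or $r_A$ increments and $|\calT^B|$ grows by~2), then $|\calT^B|$ grows by~$2$ while $\ca+\cb$ grows by at most~$2$; the bulk of the care is in matching the parity of $r_A$ vs.\ $r_B$ on both ends of the round. (d) The genuinely delicate sub-case is the ``wasted retransmission'': we are in Item~\ref{it:4} at round~$i$ (so $r_B(i)=r_A(i)+1$ and there is a stored $+1$ of slack), Alice is silent because her previous receive was erased but Bob is \emph{not} in termination phase, so Bob re-sends his stored message with parity $r_B(i)$, which Alice ignores (wrong parity). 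Here Bob spends a symbol ($\cb$ increases by~$1$) with no progress ($|\calT^B|$ unchanged, $\ch$ unchanged), so the left side of the inequality in Item~\ref{it:4} decreases by~$1$ relative to the right side — but simultaneously $r_B$ and $r_A$ do not change, so we are \emph{still} in Item~\ref{it:4}, and we need to argue the slack was actually present. Re-reading the hypothesis: Item~\ref{it:4} at round~$i$ gives $|\calT^B(i)|+\ch(i)\ge \ca(i)+\cb(i)+1$, and after Bob's wasted send we get $|\calT^B(i+1)|+\ch(i+1) = |\calT^B(i)|+\ch(i) \ge \ca(i)+\cb(i)+1 = \ca(i+1)+\cb(i+1)$, which is exactly Item~\ref{it:3} — so in fact the configuration \emph{does} transition to Item~\ref{it:3}. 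So I must be more careful: the right invariant pairing is that a wasted retransmission is precisely what converts the ``$+1$-slack'' state (Item~\ref{it:4}) into the equality state (Item~\ref{it:3}), and conversely entering Item~\ref{it:4} from Item~\ref{it:3} happens exactly in a round where $|\calT^B|$ jumps by $2$ while $\ca+\cb$ jumps by at most $1$ (Alice's half of the advancing round hasn't been ``paid for'' on her side yet). The main obstacle, then, is enumerating all the ways round~$i$ can play out — silent/speaking $\times$ erased/not $\times$ which of the two Lemma~\ref{lem:shiftsync1} cases $\times$ Bob-in-termination-phase-or-not — and checking in each that the transition of the pair $(\text{Item index},\ |\calT^B|+\ch-\ca-\cb)$ respects the claimed invariant.

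To keep this manageable I would phrase the induction with the quantity $\Phi(i) \defeq |\calT^B(i)| + \ch(i) - \ca(i) - \cb(i)$ and prove: if round $i$ satisfies Item~\ref{it:3} then $\Phi(i)\ge 0$, and if it satisfies Item~\ref{it:4} then $\Phi(i)\ge 1$; equivalently $\Phi(i) \ge [\,r_B(i)=r_A(i)+1\,]$. Then the inductive step reduces to showing that $\Phi$ decreases by at most $1$ exactly when the round-index indicator also decreases (Item~\ref{it:4}$\to$Item~\ref{it:3}), and that $\Phi$ can only fail to be large enough in a state where the indicator is $0$ if $\Phi$ is already $\ge 0$ — all of which is local per-round reasoning. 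I expect to lean on Corollary~\ref{cor:progressA} and the case analysis already carried out inside the proofs of Lemma~\ref{lem:shiftsync1} and Lemma~\ref{lem:progress1} to identify, in each sub-case, exactly which of $\{r_A$ increments$, r_B$ increments$, \calT^B$ grows$, $ a party speaks$, $ an erasure occurs$\}$ happen together, rather than re-deriving the protocol semantics from scratch. The proof ends by noting the base case and that $\Phi$ is controlled through round $t_A$, which is all Theorem~\ref{thm:final1} will need.
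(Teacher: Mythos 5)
Your high-level plan (induct on $i$, track the potential $\Phi(i)=|\calT^B(i)|+\ch(i)-\ca(i)-\cb(i)$, show $\Phi(i)\ge[\,r_B(i)=r_A(i)+1\,]$) matches the paper's, but the inductive step you sketch is incomplete in a way that cannot be repaired by ``local per-round reasoning'' alone.

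The problematic configuration is not the one you describe in sub-case~(d), and your analysis of~(d) is actually wrong as a matter of protocol semantics. If $r_B(i)=r_A(i)+1$ (Item~\ref{it:4}) and Alice is silent at round~$i$, then Bob's retransmitted parity is $r_B(i)\bmod 2=(r_A(i)+1)\bmod 2$, which is exactly what Alice's post-increment counter expects, so when the round is erasure-free Alice \emph{accepts} the retransmission and the configuration flips to Item~\ref{it:3}; no slack is lost. The truly dangerous sub-case is the reverse: $r_A(i)=r_B(i)$ (Item~\ref{it:3}), Alice silent because of an earlier erasure, Bob retransmits with parity $r_B(i)\bmod 2$, which Alice (after incrementing) rejects. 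If that retransmission is not erased, then $\cb$ rises by~1, $\ca$, $|\calT^B|$ and $\ch$ are all unchanged, and $r_A(i+1)=r_A(i)=r_B(i+1)$. So $\Phi$ drops by~1 while the item-indicator stays at~0, and the induction hypothesis at round~$i$ gives only $\Phi(i)\ge 0$, which is not enough.

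The missing idea — which the paper supplies with two nested claims inside the proof — is that in precisely these wasted-retransmission sub-cases the hypothesis at round~$i$ is in fact \emph{strict}, but proving that requires a non-local argument: trace back to the last round $j<i$ where the now-silent party actually spoke, argue that in the intervening rounds $[j{+}1,i]$ every silence was caused by a counted erasure, and that at round~$j$ itself both transmissions were erased (contributing $+2$ to $\ch$); combining the induction hypothesis at round~$j$ with the accounting of erasures over $[j,i]$ yields the needed strict slack at round~$i$. (There is a symmetric non-local claim for Item~\ref{it:4} when both parties speak in round~$i$.) Without these backward-looking arguments your invariant ``$\Phi$ decreases by at most~1 exactly when the indicator decreases'' is simply false, so the inductive step as you state it would not close.
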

\begin{proof}
We will prove this lemma by induction on the round number.
In the first round, $|\calT^{B}(1)|$, $r_B(1),r_A(1)$, $\ch(1),\ca(1)$, and $\cb(1)$ are all zero which gives us Item~\ref{it:3}. 
Now we assume that the lemma holds for all rounds up to some $i\in [t_A]$ and prove that it must also hold at the end of round~$i$. Note that if the claim holds at the end of round~$t_A$, then it trivially holds afterwords, since Alice is not active anymore and Bob remains silent as given by Lemma~\ref{lem:end1}. 

For rounds where both parties are still running the scheme, we have the following case analysis.
\begin{description}
\item[Case 1: Item~\ref{it:3} holds in round $i$.] 
We start with the easiest case. For $i\le t_A$ we have $r_A(i) < N/2$, thus $r_B(i)<N/2$ and Bob is not yet in termination phase.
Suppose, $\cb(i+1)=\cb(i)$ which means Bob received an erasure in round~$i$ and kept silent. 
Bob does not change his~$r_B$, and thus Alice doesn't as well (otherwise $r_A$ will exceed~$r_B$).
It follows that $\calT^{B}(i+1)=\calT^{B}(i)$ while $\ca(i+1)\leq \ca(i)+1$ and $\ch(i+1)\geq \ch(i)+1$, hence Item~\ref{it:3} holds in round~$i+1$. 

Now consider that only Alice remains silent, that is, where $\cb(i+1)=\cb(i)+1$ but $\ca(i+1)=\ca(i)$. 
In this case $\calT^{B}$ does not change in round~$i$ since Bob either receives an erasure or silence, and thus $r_B$ and $r_A$ remain as is. 
On the surface of it, $\ca(i+1)+\cb(i+1)$ increased by one with respect to round~$i$ while
$|\calT^{B}(i+1)|+\ch(i+1)$ either increased by one (and then Item~\ref{it:3} holds), or did not change, which seems to be a problem. 
We show below that although such a case is possible, it can happen only if the inequality in Item~\ref{it:3} in round~$i$ was strict.
\begin{claim}
If $\ca(i+1)=\ca(i)$ and $\cb(i+1)=\cb(i)+1$ then   
$|\calT^{B}(i)| + \ch(i) \geq \ca(i)+\cb(i)+1$.
\end{claim}
\begin{proof}
Let $j\leq i-1$ be the largest round such that Alice chooses to speak in round~$j$.
Let $j'=j+1$. 
Since, for all $j' \leq k\leq i$, Alice remains silent in round $k$, she must have received an erasure in round $k-1$ and hence keeps $\calT^{A}$ unchanged in round $k-1$. Therefore, $\calT^{A}(j)=\calT^{A}(i)$ and by Lemma~\ref{lem:nd1}, $r_A(i)=r_A(j')=r_A(j)$. Moreover, for $k\geq j'$, Bob either receives an erasure in round $k$ or receives silence and consequently does not change $\calT^B$ whence $\calT^{B}(j')=\calT^{B}(i)$. 

By definition, Alice does not remain silent in round~$j$. Assume for the sake of contradiction that Bob receives Alice's message in round $j$ and changes $\calT^{B}$ based on this. Then he must increment $r_B$ and this implies $r_B(j')=r_B(j)+1$. By the above discussion and Lemma~\ref{lem:shiftsync1}, we have that 
\[
r_B(i)=r_B(j')=r_B(j)+1\geq r_A(j')+1=r_A(i)+1\text{,}
\] 
which gives us a contradiction because by assumption, $r_B(i)=r_A(i)$. Hence, Bob also receives an erasure in round $j$ which proves that $\ch(j')=\ch(j)+2$ and 
\begin{equation}\label{eq:case1:BobT}
\calT^{B}(j)=\calT^{B}(i). 
\end{equation}
As a result, Bob must remain silent in round $j$, $\cb(j)=\cb(j')$, and this silence must have been corrupted by the noise. Since Alice speaks at round $j$, $\ca(j')=\ca(j)+1$, with the above and the induction hypothesis we obtain
\begin{align*}
|\calT^{B}(j')| + \ch(j') &= |\calT^{B}(j)| + \ch(j)+2 \\
&\geq \ca(j)+\cb(j)+2 \\
&= \ca(j')+\cb(j')+1.
\numberthis
\label{eq:one}
\end{align*}
Since Alice keeps silent in rounds $[j',i]$ we know she must receive erasures in rounds $[j'-1,i-1]$. Recall that $\ch(t)$ includes all the erasures till the beginning of round $t$ and does not include the corruptions in round $t$. Thus,
 $\ch(i)\geq \ch(j')+i-j'$.  
 Then,
\begin{align*}
|\calT^{B}(i)| + \ch(i) &\geq |\calT^{B}(j')| + \ch(j')+ i-j' \\
&  \geq \ca(j')+\cb(j')+(i-j')+1, 
\numberthis\label{eq:two}
\end{align*}
where the first inequality is due to Eq.~\eqref{eq:case1:BobT} (and the monotony of the transcript's length), and the second inequality follows from Eq.~\eqref{eq:one}. 
Note that $\cb(i)\leq \cb(j')+i-j'$ as Bob may transmit at most one non-silence symbol from round $j'$ to round $i-1$; the transmission of round $i$ is not included in $\cb(i)$  
Together with Eq.~\eqref{eq:two}, we conclude that  $|\calT^{B}(i)| + \ch(i) \geq  \ca(i)+\cb(i)+1$.
\end{proof}
Going back to the proof of the lemma for the case of 
$\cb(i+1)=\cb(i)+1$ and $\ca(i+1)=\ca(i)$. As argued above, we have $r_A(i+1)=r_B(i+1)$ and the above claim gives that,  
\begin{align*}
|\calT^{B}(i+1)| + \ch(i+1) 
	&\geq |\calT^{B}(i)| + \ch(i) \\
	&\geq \ca(i)+\cb(i)+1 \\
	& =\ca(i+1)+\cb(i+1),
\end{align*}
which implies Item~\ref{it:3} holds for this case as well.

\goodbreak 
Lastly, if both $\ca(i)$ and $\cb(i)$ increase in round $i$ then Alice and Bob must have executed Lines~\ref{line:adap:senda} and~\ref{line:adap:sendb}, respectively. 
Therefore, Bob must have correctly received Alice's message and $|\calT^{B}(i+1)|=|\calT^{B}(i)|+2$ (Line~\ref{line:adap:appb}). If Alice received Bob's message correctly, it means there were no erasures in this round. So Alice extends her transcript as well, and increases $r_A(i)$ so now $r_A(i+1)=r_B(i+1)=r_A(i)+1$. Also, since there were no erasures, we have $\ch(i+1)=\ch(i)$. Putting them all together we have
\begin{align*}
|\calT^{B}(i+1)|+\ch(i+1) &= |\calT^{B}(i)|+\ch(i)+2 \\
&\ge \ca(i)+\cb(i)+2 \\
&= \ca(i+1)+\ca(i+1),
\end{align*}
where the inequality is the induction hypothesis. Hence, Item~\ref{it:3} holds in round $i+1$. 

However, if Alice received an erasure in round~$i$ she keeps $r_A(i)$ unchanged.
Then we have $\ch(i+1)=\ch(i)+1$ and $r_B(i+1) = r_A(i+1)+1$ while $r_A(i+1)=r_A(i)$. 
In this case Item~\ref{it:4} is satisfied:
\begin{align*}
|\calT^{B}(i+1)| + \ch(i+1)&=|\calT^{B}(i)|+2+\ch(i)+1 \\
&\geq \ca(i)+\cb(i) +3\\
&\geq \ca(i+1)+\cb(i+1)+1.
\end{align*}

\item[Case 2: Item~\ref{it:4} holds in round $i$.] 
First, observe that $r_B(i+1)=r_B(i)$ and so $\calT^{B}(i+1)=\calT^{B}(i)$. Again, we start with the easy case---when at most one of $\ca(i)$ and $\cb(i)$ increase in round $i$. If neither increase then Item~\ref{it:4} trivially holds in round $i+1$. 
Suppose, $\ca$ increases but $\cb$ does not then it must be the case that Bob received an erasure, or otherwise he would have replied with a message (even if Bob switched to termination phase).
Having an erasure means that $\ch(i+1)=\ch(i)+1$, so Item~\ref{it:4} continues to hold. Otherwise, $\cb$ increases but $\ca$ is unchanged. In this case, if $\ch(i)=\ch(i+1)$ then Alice receives Bob's message correctly and sets $r_A(i+1)=r_A(i)+1$. Hence, in round $i+1$,
\begin{align*}
|\calT^{B}(i+1)| + \ch(i+1)&=|\calT^{B}(i)|+\ch(i) \\
&\geq \ca(i)+\cb(i) +1\\
&\geq \ca(i+1)+\cb(i+1),
\end{align*} 
and Item~\ref{it:3} is satisfied.

We are now left with the case when both $\ca$ and $\cb$ increase. Here it is possible that the noise does not corrupt Alice's transmission in round $i$ but if this happens then the inequality in Item~\ref{it:4} in round~$i$ must be strict.
\begin{claim}
If $\ca(i+1)=\ca(i)+1$ and $\cb(i+1)=\cb(i)+1$ then 
$|\calT^{B}(i)| + \ch(i) \geq \ca(i)+\cb(i)+2$.
\end{claim}
\begin{proof}
The proof is very similar to that of the claim in case 1. Let $j\leq i-1$ be the largest round such that Bob chooses to speak in round $j$ and $j'=j+1$. 
Since, for all $j' \leq k\leq i$, Bob remains silent in round $k$, he must have received an erasure and hence keeps $\calT^{B}$ unchanged in round $k$. Therefore, $\calT^{B}(j')=\calT^{B}(i)$ and by Lemma~\ref{lem:nd1}, $r_B(i)=r_B(j')$. Moreover, for $k\geq j'$, Alice either receives an erasure in round $k$ or receives silence and consequently does not change $\calT^{A}$ which gives $\calT^{A}(j')=\calT^{A}(i)$.
By the above discussion we have 
\begin{align*}
r_B(j')=r_B(i)=r_A(i)+1=r_A(j')+1,
\end{align*}
hence,
\begin{align}
 |\calT^{B}(j')| + \ch(j') \geq  \ca(j')+\cb(j')+1. \label{eq:ind}
\end{align} 

Since Bob does not remain silent in round $j$, we conclude that Alice does not receive Bob's message correctly in round $j$. Since she receives an erasure, she sends silence in round~$j'$. However, Bob also remains silent in round $j'$ so Alice's silence must have been corrupted by the noise. That is, $\ch(j'+1)\geq \ch(j')+1$ but $\ca(j'+1)=\ca(j')$ and $\cb(j'+1)=\cb(j')$. By the above discussion and Eq.~\eqref{eq:ind},
\begin{align}
|\calT^{B}(j'+1)| + \ch(j'+1) &\geq |\calT^{B}(j')| + \ch(j')+1 \nonumber\\
&\geq \ca(j')+\cb(j')+2 \nonumber \\ 
&= \ca(j'+1)+\cb(j'+1)+2. \label{eq:three}
\end{align}
Since Bob keeps silent in rounds $[j',i-1]$ (also $[j'+1,i-1]$) we know he must receive erasures in rounds $[j',i-1]$ (also $[j'+1,i-1]$). We note again that $\ch(t)$ includes all the erasures till the beginning of round $t$ and does not include the corruptions in round $t$. Thus, $\ch(i)\geq \ch(j'+1)+i-(j'+1)$.
\begin{align}
|\calT^{B}(i)| + \ch(i) &\geq |\calT^{B}(j'+1)| + \ch(j'+1)+ i-(j'+1) \nonumber \\
&\geq \ca(j'+1)+\cb(j'+1)+(i-(j'+1))+2. \label{eq:four} 
\end{align}
We also know that $\ca(i)\leq \ca(j'+1)+i-(j'+1)$ (again $\ca(i)$ does not include the transmission in round $i$). Using Eq.~\eqref{eq:four}, we conclude that $|\calT^{B}(i)| + \ch(i) \geq  \ca(i)+\cb(i)+2$.
\end{proof}
Therefore, if $\cb(i+1)=\cb(i)+1$ and $\ca(i+1)=\ca(i)+1$, we have two cases. Either $\ch(i+1)=\ch(i)$ which implies $r_A(i+1)=r_B(i+1)$ and  then  Item~\ref{it:3} is satisfied, since 
\begin{align*}
|\calT^{B}(i+1)| + \ch(i+1) &\geq |\calT^{B}(i)| + \ch(i) \\
&\geq \ca(i)+\cb(i)+2 \\
&=\ca(i+1)+\cb(i+1).
\end{align*}
Otherwise, $\ch(i+1)=\ch(i)+1$ and $r_B(i+1)=r_A(i+1)+1$ and Item~\ref{it:4} is satisfied,
\begin{align*}
|\calT^{B}(i+1)| + \ch(i+1)&=|\calT^{B}(i)|+\ch(i)+1\\
&\geq \ca(i)+\cb(i) +3\\
&= \ca(i+1)+\cb(i+1)+1.
\end{align*} 
\end{description}
\end{proof}

We are ready to complete the main proof of this section, and show that the scheme correctly simulates any~$\pi$ with a low amount of communication. 
\begin{theorem}\label{thm:final1}
Let $\pi$ be an alternating binary protocol. 
There exists a coding scheme $\Pi$ with semi-termination over a $4$-ary alphabet such that for any instance of~$\Pi$ that suffers at most $T$ erasures overall (for $T\in \mathbb{N}$ an arbitrary integer), Alice and Bob output $\calT^{\pi}$. Moreover, $\CC^{sym}(\Pi)\leq\CC(\pi)+T$ and $\RC(\Pi)\le\RC(\pi)+4T$.
\end{theorem}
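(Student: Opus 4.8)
The plan is to derive the theorem from the structural lemmas of this subsection in three steps: (i) Alice always terminates and the scheme is correct, (ii) the bound on $\CC^{sym}(\Pi)$, and (iii) the bound on $\RC(\Pi)$.

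For correctness, I would first argue that on any instance with finitely many erasures Alice leaves her \texttt{while} loop: once the last erasure has occurred, Lemma~\ref{lem:progress1} shows that $r_A$ increases by at least one over every window of two consecutive rounds, so it reaches $N/2$ after finitely many further rounds, at which point Alice terminates; call this round $t_A$. At the end of round $t_A$ we have $r_A=N/2$, and since by Lemma~\ref{lem:nd1} each unit increase of $r_A$ appends exactly two bits to $\calT^{A}$, we get $|\calT^{A}(t_A+1)|=N$; as Lemma~\ref{lem:shiftsync1} guarantees $\calT^{A}$ is always a prefix of $\calT^{\pi}$ and $|\calT^{\pi}|=N$, this forces $\calT^{A}(t_A+1)=\calT^{\pi}$, so Alice outputs $\calT^{\pi}$. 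Lemma~\ref{lem:end1} tells us round $t_A+1$ satisfies Item~\ref{it:one} of Lemma~\ref{lem:shiftsync1}, hence $\calT^{B}(t_A+1)=\calT^{A}(t_A+1)=\calT^{\pi}$, that $r_A=r_B=N/2$ thereafter, and that Bob is silent in every round after $t_A$. Taking $t=t_A$ in Definition~\ref{def:semi-term}: after round $t_A$ both parties hold $\calT^{\pi}$ (Alice has output it; Bob's transcript equals it and never changes again, by Lemmas~\ref{lem:nd1} and~\ref{lem:end1}), and both are silent (Alice has terminated; Bob by Lemma~\ref{lem:end1}), which is exactly semi-termination.

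For the communication bound I would apply Lemma~\ref{lem:costbound} at round $t_A+1$. By Lemma~\ref{lem:end1} we have $r_A(t_A+1)=r_B(t_A+1)=N/2$, so Item~\ref{it:3} of Lemma~\ref{lem:costbound} applies and gives
\[
\ca(t_A+1)+\cb(t_A+1)\ \le\ |\calT^{B}(t_A+1)|+\ch(t_A+1).
\]
Now $|\calT^{B}(t_A+1)|=N=\CC(\pi)$ as above; $\ch(t_A+1)\le T$ since $\ch$ counts erased transmissions and at most $T$ erasures occur; and, because Alice has terminated and Bob is permanently silent after round $t_A$ (Lemma~\ref{lem:end1}), no symbol is sent after round $t_A$, so $\CC^{sym}(\Pi)=\ca(t_A+1)+\cb(t_A+1)$. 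Chaining these inequalities yields $\CC^{sym}(\Pi)\le\CC(\pi)+T$.

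The round complexity is where the work lies. Since the semi-termination round is $t=t_A$, we have $\RC(\Pi)=2t_A$, and $\RC(\pi)=N$, so it suffices to prove $t_A\le N/2+2T$. Over rounds $1,\dots,t_A$ the value $r_A$ climbs from $0$ to $N/2$ in steps of $0$ or $1$ (Lemma~\ref{lem:nd1}), so exactly $N/2$ of these rounds are \emph{productive} ($r_A$ increases) and the remaining $t_A-N/2$ are \emph{wasted}; I must show the number of wasted rounds is at most $2T$. The argument is a charging. For a wasted round $i$ in which Bob has not yet entered the termination phase, Corollary~\ref{cor:progressA} forces either an erasure within round $i$ itself, or $r_A(i)=r_B(i)$ together with an erasure in the second timestep of round $i-1$; for a wasted round $i$ in which Bob is already in the termination phase, a short direct check (Alice can only be idle in round $i$ if she received $\bot$ in the second timestep of round $i-1$, and if she is not idle and sees no erasure she terminates in round $i$) yields the same dichotomy. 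Charging each wasted round either to an erased transmission inside it or to the erased second timestep of the preceding round, each erased transmission is charged at most twice---once as ``the erasure inside round~$i$'' for $i$ its own round, and once as ``the erasure in the second timestep of round~$i-1$'' for $i$ the following round---so the number of wasted rounds is at most $2T$. Hence $t_A\le N/2+2T$ and $\RC(\Pi)=2t_A\le N+4T=\RC(\pi)+4T$. The main obstacle is precisely this charging step: it requires a careful case analysis of the per-round transitions and, in particular, of the rounds in which Bob has switched to the termination phase while Alice is still active, a regime not covered directly by Corollary~\ref{cor:progressA}.
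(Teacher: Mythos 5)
Your proof is correct and follows the same overall decomposition as the paper: Lemmas~\ref{lem:shiftsync1}, \ref{lem:end1}, \ref{lem:nd1}, and~\ref{lem:progress1} for correctness and semi-termination, Lemma~\ref{lem:costbound} at round $t_A+1$ for the communication bound, and Corollary~\ref{cor:progressA} for round complexity. The only genuine divergence is in the round-complexity argument. The paper proves two separate lemmas (\ref{lem:RC} and \ref{lem:RCterm}) giving one wasted round per erasure before Bob enters termination phase and two per erasure afterwards, and then argues informally that the worst case is when all erasures fall in the termination phase, yielding $t_A\le N/2+2T$. Your unified charging argument --- each wasted round charges either an erasure inside itself or the erased second timestep of the preceding round, and every erased transmission is charged at most twice (once per role) --- reaches the same $2T$ bound in one pass and avoids the case-split and the informal ``worst case'' step; the ``short direct check'' you defer for the termination-phase regime is exactly the content of the paper's Lemma~\ref{lem:RCterm} and does hold (when Bob has $r_B=N/2$ and Alice is still active, Lemma~\ref{lem:shiftsync1} forces $r_A(i)=N/2-1=r_B(i)-1$, so an erasure-free round in which Alice speaks terminates her). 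What your approach buys is uniformity and a slightly tighter logical chain; what the paper's buys is the tighter ratio of one-per-erasure in the pre-termination phase, which you forgo since only the $2T$ aggregate is needed.
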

\begin{proof}
Lemma~\ref{lem:shiftsync1} guarantees that at every given round, Alice and Bob hold a correct prefix of~$\calT^{\pi}$. Moreover, we know that by the time Alice terminates, her transcript (and hence Bob's transcript) is of length at least~$N$, which follows from Lemmas~\ref{lem:end1} and~\ref{lem:nd1}, i.e., from the fact that $r_A=N/2$ at the termination, and that every time~$r_A$ increases by one, the length of~$\calT^{A}$ increases by two. Finally, note that if the number of erasures is bounded by~$T$, then Alice will eventually reach termination. This follows since after $T$~erasures have happened, $r_A$ increases by one in every round (Lemma~\ref{lem:progress1}, Corollary~\ref{cor:progressA}),
maybe up to the last round where Bob switches to termination phase, which may take another round of communication to increase~$r_A$. If so, $r_A$ eventually reaches~$N/2$ and Alice terminates. 

Regarding the communication complexity, note that
when Alice exits, 
we have that $r_A(t_A)=r_B(t_A)=N/2$ as well as 
$|\calT^{A}(t_A)|=|\calT^{B}(t_A)|=N$; recall that $N=\CC(\pi)=\RC(\pi)$ is the length of the noiseless protocol~$\pi$ that we want to simulate; this follows from the above correctness argument. 
From Item~\ref{it:3} of Lemma~\ref{lem:costbound}, we know that $|\calT^{B}(t_A)| + \ch(t_A) \geq  \ca(t_A)+\cb(t_A)$. We also know from Lemma~\ref{lem:end1} that Bob remains silent after Alice exits,
thus $\CC^{sym}(\Pi)=\ca(t_A)+\cb(t_A)$. Therefore, 
\[
\CC^{sym}(\Pi)=\ca(t_A)+\cb(t_A) \leq  |\calT^{B}(t_A)| + \ch(t_A)  \leq \CC(\pi)+T.
\]

Finally, let us analyze the round complexity, $\RC(\Pi)$, defined as the number of timesteps until  Alice terminates 
(recall that Bob never terminates).
Lemmas~\ref{lem:RC} and \ref{lem:RCterm} below suggest that
the worst case (with respect to the round-complexity)
is when all the erasures occur after Bob has switched to termination phase,
while Alice hasn't yet simulated the last bit of the protocol.
In this case we have that $r_A$ reaches $N/2$ after at most
$N/2 + 2T$ rounds, and since every round has two timesteps, we have
\(
\RC(\Pi) = 2t_A \le N+4T.
\)
\end{proof}

\begin{lemma}\label{lem:RC}
If Bob hasn't switched to termination phase, at round~$i$, $r_A(i) \ge (i-1)-\ch(i)$. 
\end{lemma}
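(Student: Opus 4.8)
The plan is to prove the claim by strong induction on the round number~$i$, which I find cleanest to phrase through the potential $\phi(i)\defeq r_A(i)+\ch(i)-i$; note that $r_A(i)\ge(i-1)-\ch(i)$ is precisely $\phi(i)\ge-1$. The base case is immediate: $\phi(1)=0+0-1=-1$. For the inductive step, assume $\phi(i')\ge-1$ for every round $i'\le i$ at which Bob is not yet in termination phase, and suppose Bob is not in termination phase at round~$i+1$. Since termination phase is absorbing, Bob is not in termination phase at any round~$\le i$ either; in particular Alice has not terminated by round~$i$ (otherwise $r_B(i)=N/2$ by Lemma~\ref{lem:end1} and Bob would be in termination phase), so Corollary~\ref{cor:progressA} applies at round~$i$.

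I split into three cases according to round~$i$. (i) If round~$i$ contains at least one erasure, then $\ch(i+1)\ge\ch(i)+1$ while $r_A(i+1)\ge r_A(i)$ by Lemma~\ref{lem:nd1}, so $\phi(i+1)\ge\phi(i)\ge-1$. (ii) If round~$i$ is erasure-free and $r_A(i+1)=r_A(i)+1$, then $\ch(i+1)=\ch(i)$ and $\phi(i+1)=\phi(i)\ge-1$. (iii) Otherwise round~$i$ is erasure-free but (by Lemma~\ref{lem:nd1}) $r_A(i+1)=r_A(i)$; call this ``the stall''. By Corollary~\ref{cor:progressA} a stall forces $r_A(i)=r_B(i)$ and the erasure of the second timestep of round~$i-1$. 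Here $\phi(i+1)=\phi(i)-1$, so it suffices to improve the bound at round~$i$ to $\phi(i)\ge0$, i.e.\ $r_A(i)\ge i-\ch(i)$.

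To recover this extra unit I trace the stall back to its origin. Let $j\le i-1$ be the most recent round in which Alice is not silent; such $j$ exists with $j\ge1$ since Alice always speaks in round~$1$. By choice of $j$, Alice is silent in every round $j+1,\dots,i$, which means she received~$\bot$ at the end of each of rounds $j,\dots,i-1$, so the second timesteps of rounds $j,\dots,i-1$ are all erased. In each of those rounds Alice increments $r_A$ at the start and decrements it back on receiving~$\bot$, hence $r_A(j)=r_A(j+1)=\cdots=r_A(i)=r_A(i+1)$. Since $r_B$ is non-decreasing and $r_A(j)=r_A(i)=r_B(i)\ge r_B(j)\ge r_A(j)$ (the stall condition and Lemma~\ref{lem:shiftsync1}), $r_B$ is constant and equal to $r_A(j)$ over rounds $j,\dots,i$; in particular $r_B(j)=r_A(j)$ and Bob does not increment $r_B$ in round~$j$. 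But Alice \emph{does} speak in round~$j$, and since $r_A(j)=r_B(j)$ she sends exactly the parity Bob is waiting for, so had her message arrived uncorrupted Bob would have incremented $r_B$; hence the first timestep of round~$j$ is erased as well, and round~$j$ carries two erasures. Therefore $\ch(i)-\ch(j)\ge 2+(i-1-j)=i-j+1$, and invoking the induction hypothesis at round~$j$,
\[
r_A(i)=r_A(j)\ge (j-1)-\ch(j)\ge (j-1)-\bigl(\ch(i)-(i-j+1)\bigr)=i-\ch(i),
\]
which gives $\phi(i)\ge0$, hence $\phi(i+1)\ge-1$, completing the induction.

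I expect the stall case to be the real obstacle: a single erasure in round~$i-1$ appears to let round~$i$ waste a timestep without a matching erasure, which would break a naive one-step induction. The structural observation that rescues it is that a stall can only occur at the tail of a maximal run of consecutive silent rounds for Alice, whose very first round must itself carry \emph{two} erasures --- so the count is balanced after all. The remaining points that need attention are the boundary case $j=1$ (Alice's first message and Bob's first reply both erased, which the same argument covers), and the routine check that ``Bob not in termination phase'' and ``Alice not terminated'' propagate to all earlier rounds, so that Corollary~\ref{cor:progressA}, Lemma~\ref{lem:shiftsync1}, and the induction hypothesis are legitimately applicable at rounds~$i$ and~$j$.
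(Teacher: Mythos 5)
Your proof is correct and follows essentially the same route as the paper: you trace the ``stall'' round back through the maximal run of Alice-silent rounds to the round $j$ where she last spoke, observe that $r_B$ is frozen across that run so round $j$ must contain two erasures (one in each timestep), and then balance the erasure count against the lost progress. The paper argues this informally by describing the ``chain of erased rounds that begins with 2 erasures''; your version just makes the accounting precise with the potential $\phi(i)=r_A(i)+\ch(i)-i$ and a strong induction that lets you invoke the bound at round~$j$ directly, which is a clean way to close what the paper leaves as a sketch.
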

\begin{proof}
Consider $r_A(i)$ for round~$i\in [t_A]$. 
If there was an erasure, then we know that $r_A(i+1)=r_A(i)$. 
If there were no erasures at round~$i$, then by Corollary~\ref{cor:progressA} we know that $r_A(i+1)=r_A(i)+1$ except for a single case we discuss below.

The only case where no erasures happen yet $r_A(i)$ does not increase is when the second timestep of round $i-1$ is erased conditioned on having $r_A(i-1)=r_B(i-1)$. Notice that since Bob's message at round $i-1$ is erased, we must have that $r_A(i)=r_B(i)$. 
Indeed, only Bob can increase his $r_B$ in round $i-1$ (Alice receives an erasure and cannot increase~$r_A$). Suppose $r_A(i-1)+1=r_B(i-1)$, then we cannot have $r_A(i)=r_B(i)$ unless Alice increases her $r_A$, which is a contradiction.

Next, note that $r_A(i-1)=r_B(i-1)$ and $r_A(i)=r_B(i)$ can happen only if the \emph
{first} timestep of round $i-1$ was either erased or a silence was transmitted---otherwise, Bob would have increased $r_B$ (but Alice wouldn't have) and so $r_B(i)=r_A(i)+1$. 
If the first timestep was erased, then in round $i-1$ there were two erasures rather than one.
If the first timestep was a silence, then the second timestep of round $i-2$ must have been erased, and we can apply the same argument on round $i-2$ inductively, until we reach a round where both timesteps were erased.

We conclude that if there are no erasures then $r_A$ increases every round, hence at the beginning 
of round~$i$, $r_A(i)$ would be at least~$i-1$. 
A single erasure halts the increase of~$r_A$ by a single round, unless there is a chain
of erased rounds and that chain begins with 2 erasures.
In this case $r_A$ does not increase throughout the chain as well as during the round following the chain. The number of erasures in the chain is also at least the length of the chain plus one. 
\end{proof}
\begin{lemma}\label{lem:RCterm}
If Bob has switched to termination phase each erasure causes at most two rounds in which $r_A$ does not increase.
\end{lemma}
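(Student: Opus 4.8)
The plan is to walk through the termination phase round by round, characterise exactly which rounds are \emph{productive} (meaning $r_A$ increases in them), and then run a charging argument that blames each unproductive round on one erasure in such a way that no erasure is blamed more than twice.

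First I would pin down the state during the termination phase, before Alice terminates. Combining the loop guards of Algorithms~\ref{alg:sima1} and~\ref{alg:simb1} with Lemma~\ref{lem:shiftsync1}, we have $r_B(i)=N/2$ and $r_A(i)=N/2-1$ at the start of every such round~$i$, so each one falls in the second case of Lemma~\ref{lem:shiftsync1} ($r_B(i)=r_A(i)+1$). In particular, after Alice's increment in Line~\ref{line:adap:inca} her parity equals $r_B\bmod 2$, so in the even timestep Bob sends a non-silent reply exactly when he received Alice's non-silent message uncorrupted, and Alice's received message carries the matching parity exactly when that reply reaches her uncorrupted.

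Next I would prove the characterisation. Let $e^A_i$ (resp.\ $e^B_i$) denote the event that Alice's (resp.\ Bob's) transmission in round~$i$ is erased; recall a silence can be erased too, and that in every termination-phase round Bob transmits \emph{something} in the even timestep (a message or~$\square$). A short case check on the algorithms shows that round~$i$ is productive if and only if all of: (a)~Alice's received message in round~$i-1$ was not~$\bot$; (b)~$e^A_i=0$; (c)~$e^B_i=0$. Indeed, if (a) fails Alice is silent in round~$i$ and Bob answers with~$\square$; if (a) holds but (b) fails Bob receives~$\bot$ and answers with~$\square$; if (a) and (b) hold but (c) fails Alice receives~$\bot$; in all three cases Alice takes the \texttt{else} branch (Line~\ref{line:adap:deca}) and $r_A$ drops back to $N/2-1$, whereas if all three hold Alice sees the matching parity, keeps $r_A=N/2$ and terminates. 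Finally, since Bob always transmits in round~$i-1$, condition~(a) fails precisely when $e^B_{i-1}=1$, so round~$i$ is unproductive exactly when $e^B_{i-1}\vee e^A_i\vee e^B_i$ holds.

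The last step is the charging argument. Assign each unproductive round~$i$ to an erasure: to $e^A_i$ if it occurred, otherwise to $e^B_{i-1}$ if it occurred, otherwise to $e^B_i$. An erasure of an Alice-transmission in a round~$j$ is charged only by round~$j$, hence at most once; an erasure of a Bob-transmission in a round~$j$ is charged only by round~$j$ (third rule) and round~$j+1$ (second rule), hence at most twice. Thus every erasure is blamed for at most two rounds in which $r_A$ fails to increase, which is the claim. The only boundary point to check is the first termination-phase round, whose predecessor~$i-1$ may be an ordinary round, but the derivation of~(a) used only that Bob transmits something in round~$i-1$, which holds equally there. I expect the main obstacle to be the case analysis behind the ``productive $\iff$ (a)$\wedge$(b)$\wedge$(c)'' equivalence — in particular tracking what each party sends and receives when a silence is itself erased — after which the charging step is immediate.
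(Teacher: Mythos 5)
Your proof is correct and essentially formalizes the same mechanism the paper gestures at, but it does so in a genuinely tighter way. The paper's proof exhibits the worst-case scenario (Alice's valid message reaches Bob, Bob's reply is erased, Alice goes silent, Bob answers silence with silence, so two rounds are lost to one erasure) and then simply asserts ``it is easy to verify all other cases cause at most the above delay.'' You instead characterize productivity exactly --- round $i$ is productive iff $e^B_{i-1}\vee e^A_i \vee e^B_i$ is false --- using the fact that in the termination phase every round satisfies $r_B(i)=r_A(i)+1=N/2$, so parity checks succeed precisely when the relevant transmission is unerased, and then run an explicit charging argument (charge $e^A_i$ first, then $e^B_{i-1}$, then $e^B_i$) that bounds the blame on each erasure by two. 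This replaces the paper's informal ``all other cases'' claim with a short case-exhaustive argument, so it is a strict improvement in rigor while yielding the same bound; the only mild overhead is that you must note Bob always transmits something (message or $\square$) in the even timestep so that condition (a) is exactly $e^B_{i-1}=0$, which you handle, including the boundary where round $i-1$ precedes the termination phase.
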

\begin{proof}
The worst case is the following. Alice sends a valid message and this message arrives at Bob's side correctly. Now Bob replies with the last bit of the simulation (line~\ref{line:adap:bobLast}). Assume this message is being erased. Alice, seeing an erasure, remains silent (the \texttt{if} of line~\ref{line:adap:aif1} goes into the \texttt{else} block). When Bob sees a silence in termination phase, he remains silent as well (line~\ref{line:adap:bif3} goes into the \texttt{else} block). Hence, two rounds have passed without Alice increasing~$r_A$, due to a single erasure. In the next round she will again try to send a message.

It is easy to verify all other cases cause at most the above delay of two rounds per erasure.
\end{proof}

\subsection{Reducing the alphabet size} \label{sec:unary}
The above scheme uses an alphabet of size 4 (in addition to silence), which respectively increases the communication complexity (measured in bits). We now show how to reduce the alphabet size so it is unary, that is, the parties either send ``energy'' (1) or remain silent~(0). The communication complexity for this case is defined to be the \emph{energy complexity}---the number of rounds in which energy was transmitted.

Towards this goal, the parties carry out a certain type 
of temporal encoding~\cite{AGS16} described below.
We argue that Algorithms~\ref{alg:sima1} and~\ref{alg:simb1} concatenated with the temporal encoding satisfy Theorem~\ref{thm:main2}. 
\begin{theorem}\label{thm:final2}
Let $\pi$ be an alternating binary protocol. 
There exists an (AGS) adaptive coding scheme $\Pi_1$ with semi-termination using a unary alphabet such that for any instance of~$\Pi_1$ that suffers at most $T$ erasures overall (for $T\in \mathbb{N}$ an arbitrary integer), Alice and Bob output~$\calT^{\pi}$.

Moreover, it holds that  $\CC(\Pi_1)\leq \CC(\pi)+T$ and $\RC(\Pi_1)  \leq 4(\RC(\pi)+4T)$.
\end{theorem}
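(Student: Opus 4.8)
The plan is to reduce Theorem~\ref{thm:final2} to Theorem~\ref{thm:final1} by composing the $4$-ary scheme $\Pi$ of Algorithms~\ref{alg:sima1} and~\ref{alg:simb1} with a temporal (unary) encoding of each $4$-ary symbol. The idea is that a symbol from $\Sigma=\{0,1\}\times\{0,1\}$, together with the option of silence, gives $5$ possibilities, but a party only ever needs to distinguish its own ``I sent a real message $(\Inf,\Par)$'' from ``I stayed silent''. I would encode each round of $\Pi$ by a short \emph{block} of consecutive rounds in $\Pi_1$, in which the sending party places a single unit of energy in one of a fixed number of time-slots to indicate which of the four symbols it means, and places \emph{no} energy in the block to indicate silence. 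Concretely, since each of Alice's and Bob's turns must encode one of $4$ symbols or silence, a block of $4$ slots per turn suffices (energy in slot $j$ means ``symbol $j$'', no energy anywhere in the block means ``silence''). This multiplies the round complexity by a constant factor (one turn becomes $4$ time-slots, a round becomes $8$ time-slots, hence the factor $4$ applied to $\RC(\pi)+4T = N/2+\ldots$, giving $\RC(\Pi_1)\le 4(\RC(\pi)+4T)$), while the \emph{energy} used is exactly one unit per non-silent symbol of $\Pi$.

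\textbf{The key steps, in order.} First I would describe the temporal encoding formally: fix the per-turn block length (I will use $4$), specify the map from $\{(\Inf,\Par)\}\cup\{\square\}$ to ``which slot, if any, carries energy'', and specify the decoding: the receiver listens to all $4$ slots and outputs the symbol corresponding to the unique energized slot, or $\square$ if none is energized. Second, and this is the crux, I would analyze how channel erasures in the unary model translate to erasures in the simulated $4$-ary model. An erasure in the unary channel replaces the transmitted unary symbol ($0$ or $1$) by $\bot$. The point to verify is that a single unary erasure, when it matters, corrupts at most one simulated $4$-ary symbol, \emph{and} that the receiver can detect this corruption (i.e. decode a $\bot$ in the $4$-ary sense) — because $\Pi$'s guarantees (Lemmas~\ref{lem:shiftsync1}, \ref{lem:progress1}, \ref{lem:costbound}, Theorem~\ref{thm:final1}) are stated for a $4$-ary erasure channel and I want to invoke them as a black box with the erasure count preserved. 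If energy is placed in slot $j$ and that slot's transmission is erased, the receiver sees no energy in any slot and would decode $\square$ — but we need it to decode $\bot$. The fix is to reserve an extra slot or use a redundant encoding so that ``silence'' and ``a corrupted real symbol'' are distinguishable; e.g. always send a unit of energy in a dedicated ``activity'' slot when sending any real symbol, so that seeing the activity slot but an erased data slot decodes to $\bot$, while seeing nothing in the activity slot decodes to $\square$. I would pick the smallest such gadget (this is why a handful of slots, and the constant $4$, appear) and check that each unary erasure causes at most one $4$-ary-level erasure, so that $T$ unary erasures yield at most $T$ simulated erasures.

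\textbf{Finishing.} With the encoding/decoding fixed and the erasure-count-preservation established, I would argue: running $\Pi$ on top of this encoded channel is exactly an execution of $\Pi$ over a $4$-ary erasure channel with at most $T$ erasures, so by Theorem~\ref{thm:final1} both parties output $\calT^{\pi}$ and $\CC^{sym}(\Pi)\le \CC(\pi)+T$. Since each non-silent $4$-ary symbol costs exactly one unit of energy in $\Pi_1$ and silent rounds cost zero energy, $\CC(\Pi_1)=\CC^{sym}(\Pi)\le\CC(\pi)+T$. For the round complexity, each round of $\Pi$ expands to a constant number (namely $8$) of unary time-slots, so $\RC(\Pi_1)=8\cdot t_A \le 8\cdot(N/2+2T)$... wait, matching the stated bound $4(\RC(\pi)+4T)=4(N+4T)$ means each \emph{round} of $\Pi$ expands to $8$ time-slots and $t_A\le N/2+2T$ gives $\RC(\Pi_1)\le 8(N/2+2T)=4N+16T=4(N+4T)=4(\RC(\pi)+4T)$, as required. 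Semi-termination carries over verbatim from $\Pi$: after Alice terminates, Bob stays silent (Lemma~\ref{lem:end1}), which in the unary encoding is simply transmitting no energy, so the semi-termination round $t$ of $\Pi$ maps to the corresponding $t$ in $\Pi_1$.

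\textbf{Expected main obstacle.} The delicate point is the erasure-accounting in the encoding: I must ensure that (i) the decoder never silently mistakes a corrupted real symbol for genuine silence (handled by the dedicated activity slot described above), (ii) one unary erasure never inflates into two or more simulated $4$-ary erasures, and (iii) the block structure does not desynchronize Alice and Bob even when erasures straddle block boundaries — this is why I keep blocks aligned to $\Pi$'s rounds and have the receiver always consume a full block before decoding. Once these are nailed down the rest is a direct invocation of Theorem~\ref{thm:final1}.
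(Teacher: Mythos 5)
Your overall plan — simulate each timestep of the 4-ary scheme $\Pi$ by a block of $4$ unary timesteps (energy in one of the $4$ slots encodes the symbol, an all-silent block encodes $\square$), then invoke Theorem~\ref{thm:final1} as a black box — is exactly the paper's proof, and your arithmetic ($8$ slots per round of $\Pi$, $t_A\le N/2+2T$, $\RC(\Pi_1)\le 8t_A\le 4(N+4T)$, and $\CC(\Pi_1)=\CC^{\mathrm{sym}}(\Pi)\le\CC(\pi)+T$) is the right calculation. The gap is in your decoding discussion, where you raise a concern that is not actually present in the model and then propose a ``fix'' that would invalidate the claimed bounds.

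You write: ``If energy is placed in slot $j$ and that slot's transmission is erased, the receiver sees no energy in any slot and would decode $\square$.'' This is false in the erasure model used throughout the paper (and which you yourself correctly state two sentences earlier): the receiver does \emph{not} merely see ``no energy,'' it sees $\bot$ in slot $j$. An erasure is a visible mark, not a deletion. Hence the paper's decoder — ``single $1$ present $\Rightarrow$ decode the symbol; all slots silent $\Rightarrow$ decode $\square$; otherwise $\Rightarrow$ decode $\bot$'' — already distinguishes a corrupted symbol (some $\bot$ present, no $1$) from genuine silence (all silent), without any extra structure. The paper even gains from this: if the erased slot is one of the silent slots while the $1$-slot survives, the block still decodes correctly, so $T$ unary erasures induce at most $T$ erasures in the simulated $\Pi$ run.

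Your proposed ``activity slot'' is therefore not only unnecessary but would break the theorem. If you add a fifth slot you get $10$ rather than $8$ timesteps per $\Pi$-round, so $\RC(\Pi_1)\le 5(\RC(\pi)+4T)$, not $4(\RC(\pi)+4T)$ as claimed; and if the activity slot itself carries energy (which it must, to be informative) then every non-silent turn costs $2$ units of energy, giving $\CC(\Pi_1)\le 2(\CC(\pi)+T)$, not $\CC(\pi)+T$. Drop the activity slot, rely on the observability of $\bot$, and the rest of your argument goes through verbatim and matches the paper.
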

\begin{proof}
In the coding scheme $\Pi_1$, Alice and Bob will simulate each timestep of Algorithms~\ref{alg:sima1} and~\ref{alg:simb1} (coding scheme $\Pi$) as a  ``block'' that contains four timesteps that all belong to the same party. 
When a party wishes to send a message $(b,\rho)\in\{0,1\}\times\{0,1\}$ it simply transmits a ``1'' in the $(b+2\rho+1)$-th timestep of that block and remains silent in the other timesteps that belong to the same block. 
If a party wishes to remain silent, it just keeps silent throughout the entire block.
Each block of the run of $\Pi_1$ is decoded to a timestep of $\Pi$ in a manner such that if the block contains a single $1$, then $(b,\rho)$ can be recovered according to its position within the block. If all timesteps are silent, then the block decodes to~$\silence$. In all other cases, the block decodes to~$\erasure$.

Any instance of $\Pi_1$---specified by a given erasure pattern and inputs---can be directly mapped to an instance of $\Pi$ using the above encoding.
Moreover, the resulting instance of $\Pi$ has at most $T$~erasures. This is because if any block in $\Pi_1$ has more than one erasure, the corresponding timestep in $\Pi$ has a single erasure only. From Theorem~\ref{thm:final1}, we know that in the resulting instance of $\Pi$ having at most of $T$ erasures, Alice and Bob both compute the correct output~$\calT^{\pi}$. 
Therefore, we can conclude that in the given instance of $\Pi_1$ both players compute the correct output~$\calT^{\pi}$. 

From the above mapping, we see that each symbol of~$\Pi$ translates to a single (non-silent) transmission in~$\Pi_1$. Then we obtain $\CC(\Pi_1)=\CC^{sym}(\Pi)\leq \CC(\pi)+T$. Lastly, each timestep in $\Pi$ is mapped to four timesteps in $\Pi_1$, hence, $\RC(\Pi_1) = 4\RC(\Pi) \leq 4(\RC(\pi)+4T)$. 
\end{proof}

\section*{Acknowledgements}
We thank Amir Leshem for plenty helpful discussions.
Research supported in part by the Israel Science Foundation (ISF) through grant No.\@ 1078/17.

\bibliographystyle{alphabbrv-doi}
\bibliography{coding}

\newpage
\appendix
\section*{Appendix}

\section{Unsynchronized Termination}\label{app:modeldetails}
In this section we discuss termination and the effect noise has on termination.
We argue that, in the presence of noise, no protocol can terminate in a ``coordinated way'', that  is, with both parties terminating in exactly same round. This is an artifact of
attaining common knowledge, and is closely related to the ``coordinated attack problem''~\cite{HM90}.

In particular, we show that an unbounded noise can always lead to a situation where one party terminates
while the other does not.  
Noise that occurs after this point where only a single party has terminated,
can still cause damage to the other party (i.e., prevent it from terminating).
Hence, such noise must be counted towards the adversary's noise budget. 
This intuition is formalized in the following lemma.

\begin{lemma}\label{lem:NoiseAfterTerm}
If  corruptions made after one party terminates are not counted towards the adversary's budget, then there exists a function $f$ such that 
for any protocol $\Pi$ that computes~$f$ and is resilient to any amount of noise $T\in \N$, at least one of the parties never terminates.
\end{lemma}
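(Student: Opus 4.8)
The plan is to prove this impossibility by a standard indistinguishability (``fooling'') argument in the spirit of the coordinated attack problem. First I would fix a concrete function $f$ that is sensitive to both inputs---the simplest choice is the two-bit AND, $f(x,y)=x\wedge y$ with $x,y\in\{0,1\}$, or even just a one-bit identity function on Bob's input, so that the correct output genuinely depends on information that must cross the channel. Suppose toward a contradiction that some protocol $\Pi$ computes $f$, is resilient to every finite $T\in\N$, and always has both parties terminate. By resilience with $T=0$, on the noiseless run with inputs $(x,y)$ both parties terminate with $f(x,y)$; let $t^\ast$ be the (finite) round by which both have terminated on input, say, $(1,1)$.

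The key step is the adversarial construction. Consider the run on inputs $(1,1)$ with \emph{no} noise: by round $t^\ast$ both parties have terminated holding output $1$. Now consider the run on inputs $(1,0)$ where the adversary corrupts every transmission Alice receives up to round $t^\ast$, replacing it with whatever symbol Alice received in the $(1,1)$ noiseless run at that round (for an erasure channel, the adversary would instead erase each of these transmissions, and I'd need to check that Alice's view under ``all her received symbols erased'' is still consistent with \emph{some} noiseless-looking behaviour of Bob on $y=1$; more carefully, the clean statement is that the adversary makes Alice's transcript in the $(1,0)$-run identical to her transcript in some run where Bob holds $y=1$). Since Alice's entire view through round $t^\ast$ is identical in the two scenarios, and she terminates by round $t^\ast$ in the first, she also terminates by round $t^\ast$ in the second---with output $1$, which is \emph{wrong} for $(1,0)$ since $1\wedge 0=0$. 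The number of corruptions used is at most $t^\ast$, a finite quantity; crucially, by hypothesis corruptions after Bob has terminated are ``free,'' so I only need to pay for corruptions up to $\min(t_A,t_B)$, and the construction keeps that bounded.

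This gives the contradiction: $\Pi$ was assumed resilient to all finite noise amounts, yet with finitely many (paid) corruptions Alice outputs the wrong value, or---if one instead insists Alice never outputs a wrong value---Alice must fail to terminate, contradicting the assumed universal termination. To make the bookkeeping work with the erasure model specifically, the cleaner framing is: corruptions are only counted up to the first termination, so the adversary can corrupt \emph{everything} after one party quits; I would use this to argue that whichever party the protocol designates to ``wait'' at the end can be kept waiting forever by erasing the other party's post-termination silence, so that party never hears the univocal silence signal and never terminates---directly yielding the lemma's conclusion that at least one party never terminates. The main obstacle I anticipate is getting the indistinguishability bookkeeping exactly right for the \emph{erasure} channel (as opposed to substitution), namely arguing that ``all of party $P$'s received messages are erased'' is a noise pattern that $P$ cannot distinguish from a legitimate situation in which the other party had already terminated; handling this cleanly is what forces the quantifier ``corruptions after one party terminates are not counted,'' which is precisely the hypothesis of the lemma.
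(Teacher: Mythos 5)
Your proposal identifies the right \emph{conclusion} of the argument---once one party has terminated, the adversary can erase everything the other party receives for free and freeze it forever---but it skips the main difficulty, which is to show that such a moment must exist in the first place. You write that ``whichever party the protocol designates to `wait' at the end can be kept waiting forever by erasing the other party's post-termination silence,'' but this presupposes that there is a run in which one party terminates strictly before the other. A priori, a protocol could be \emph{coordinated}: on every input and every noise pattern, both parties terminate in the same round, leaving no post-termination window to exploit. The entire content of the paper's argument (Lemma~\ref{lem:termination}, proved via Claims~\ref{clm:term-base} and~\ref{clm:term-induc}) is to rule this out: by an induction on the termination round, one shows that no protocol for bit-exchange that is resilient to unbounded erasures can have \emph{coordinated termination} in any round $i$. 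The base cases $i=1,2$ use the fact that after one or two rounds with erasures a party has no usable information, and the inductive step shows that any protocol with a coordinated termination in round $i+1$ could be modified into one with coordinated termination in round $i$. Only after this is established does the ``erase everything after the first termination'' attack apply.

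Your first indistinguishability argument (the $(1,1)$ vs.\ $(1,0)$ run) also has a flaw beyond the erasure-vs-substitution issue you flagged: it treats $t^\ast$ as a fixed horizon by which the protocol must stop. But the protocols here are \emph{adaptive}---their length depends on the observed noise---so a run where Alice sees $t^\ast$ erasures is not required to end by round $t^\ast$; the protocol can simply keep going. You therefore cannot conclude that Alice outputs a wrong value; she may legitimately still be listening. This is why the paper does not argue by fooling a fixed-length transcript, but rather by showing that no \emph{round} can be a coordinated termination round and then leveraging the free post-termination corruptions. To repair your proof, you would need to supply an argument with the force of the paper's Claims~\ref{clm:term-base} and~\ref{clm:term-induc}.
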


In order to prove the lemma, we
consider the simple \emph{bit-exchange} task, where Alice and Bob hold a single bit each, 
and they wish learn the bit of the other party. Assuming a noiseless channel, this can be done 
via two transmissions of one bit each. 
We try to construct a protocol for the exchange-bit task which is resilient to erasures.
On top of exchanging the input bits, we require a special property from our protocol, which we call
\emph{coordinated termination}---%
we require that in any instance of the protocol the parties terminate in the same round regardless of the observed noise. 

\newcommand{\cterm}{coordinated-terminaiton\xspace}
\begin{definition}
A coordinated protocol is one in which Alice and Bob \emph{always} terminate at the same round.
\end{definition}
\begin{definition}
We say that Alice and Bob have \cterm at round~$i$ if there exist $x,y$ and a finite noise pattern~$P$
for which both Alice and Bob terminate in round~$i$ given that their input is~$(x,y)$ and the noise is~$P$.
\end{definition}

The protocols we consider must be resilient to an unbounded (yet, finite) amount of noise.
\begin{definition}
A protocol is \emph{resilient to an unbounded amount of noise} if
for all $(x,y)$ and any finite noise pattern~$P$, the parties terminate and output the correct
output given they have the inputs~$(x,y)$ and the observed noise is~$P$. 
\end{definition}
We note that any function~$f$ can be computed in~$N$ rounds, (with $N=N(f)$), as long as at most $N/2$ of the transmissions are erased~\cite{FGOS15,EGH16}. Such protocols are coordinated since both parties terminate at round~$N$. 
However, in our setting where unbounded number of erasures may occur, 
the parties cannot predetermine their termination time to~$N$
since the number of corruptions $T$ may exceed~$N$. 
In a hindsight, the fact that termination time cannot be predetermined and depends on the observed noise,
in addition to the fact that parties observe different noise pattern, implies that no coordinated termination
is possible.

Our main lemma argues that no protocol for the bit-exchange task is both coordinated and resilient to unbounded number of erasures.
\begin{lemma}\label{lem:termination}
Let $\Pi$ be a coordinated protocol for bit-exchange, which is resilient to an unbounded amount~$T$ of erasures.
There always exists a (finite) noise pattern for which Alice and Bob don't terminate in a coordinated way.
\end{lemma}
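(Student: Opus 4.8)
The plan is to run a standard indistinguishability / adversary argument, of the same flavor as the classical impossibility proof for the coordinated attack problem. Suppose toward a contradiction that $\Pi$ is a coordinated protocol for bit-exchange that is resilient to an unbounded number of erasures, and that the parties always terminate in a coordinated way regardless of the noise. First I would fix inputs, say $x=y=0$, and consider the noiseless execution; since $\Pi$ is resilient, both parties terminate and output the correct bits, and by the coordination assumption they terminate at the same round, call it $r^\star$. The idea is now to perturb this execution by a carefully chosen finite noise pattern so that one party is ``fooled'' into thinking the execution has not ended, while the other party's view is unchanged, thereby forcing them to terminate in different rounds — contradicting coordination.

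The key steps, in order: (1) Establish that in the noiseless run on $(0,0)$ there is some last round $r^\star$ and that, by coordination, any noise pattern must also make both parties stop together. (2) Use the asymmetry of who-speaks-when: in round $r^\star$ (wlog the last message is Bob's, as the paper assumes), consider erasing Bob's final transmission(s) to Alice in the rounds near $r^\star$. From Alice's local view after this erasure, she cannot distinguish ``Bob spoke but it was erased'' from other scenarios, so to be resilient she must keep running (she cannot safely terminate on an erasure, since if she did, an adversary with a larger budget could erase arbitrarily much and strand her with a wrong or incomplete output). Hence Alice continues past $r^\star$. (3) Meanwhile, arrange the noise so that Bob's view is identical to some execution in which he legitimately terminates at round $r^\star$ — e.g. Bob hears exactly what he hears in the noiseless run, or hears silence consistent with Alice having terminated. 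Then Bob stops at $r^\star$ but Alice does not, contradicting the coordination requirement. (4) Finally check that the noise pattern used is finite (it erases only finitely many transmissions — in fact only those in a bounded window of rounds around $r^\star$), so it is a legitimate pattern against which $\Pi$ is required to be coordinated.

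The main obstacle I expect is step (2)–(3): making precise the claim that a resilient party \emph{cannot} terminate upon seeing an erasure (or a silence), and doing so without already having the ``silence means termination'' convention available — here silence is corruptible and the model of this lemma does not give the parties that crutch. The clean way to phrase it is a hybrid/indistinguishability argument: take the would-be termination round of the perturbed execution, and show that an adversary with one more unit of budget can extend the erasure pattern to make the terminating party wrong, contradicting resilience to \emph{unbounded} noise; therefore no party can ever terminate immediately after an ``ambiguous'' observation, which propagates to show the termination round is not a fixed function of a single party's view and cannot be synchronized. One has to be a little careful to keep all constructed noise patterns finite throughout this inductive pushing-forward, but since each step only adds a bounded number of erasures in a bounded window, finiteness is preserved. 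I would also note that bit-exchange is the natural minimal $f$ to instantiate Lemma~\ref{lem:NoiseAfterTerm}, since it already requires genuine two-way dependence of outputs on inputs, which is what makes premature termination fatal.
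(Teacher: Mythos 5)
Your proposal identifies the right starting move (erase Bob's final transmission so Alice's view and Bob's view diverge, while Bob's view stays identical to the noiseless run), but the crucial justification in step~(2)---that ``to be resilient she must keep running''---has a genuine gap. Alice might already hold all the information she needs to output correctly before round~$r^\star$: Bob's last transmission could carry no new information about $y$, only an acknowledgment. If Alice terminates after seeing the erasure, she is not ``stranded with a wrong output,'' because her output does not depend on the erased symbol; and once she has terminated, later erasures cannot make her wrong, since she is no longer listening. So resilience alone does not force Alice past round~$r^\star$, and your one-shot perturbation does not yield a contradiction. The scenario you need to rule out---both parties still terminating at $r^\star$ after the erasure---is consistent with both coordination and resilience \emph{at this round}.

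What actually closes this gap in the paper is a reverse induction with a protocol-shortening construction, which your sketch does not contain. The paper proves (Claim~\ref{clm:term-induc}) that if any coordinated, resilient $\Pi$ had a coordinated termination at round $i+1$ with the last message $m_{i+1}$, then one can build a new protocol $\Pi'$ that halts at round $i$, by having each party, upon reaching a view compatible with such a terminating instance, output what $\Pi$ would have output with $m_{i+1}$ replaced by $\erasure$. Resilience of $\Pi$ (it must be correct when $m_{i+1}$ is erased) is exactly what makes $\Pi'$ correct; coordination of $\Pi$ makes $\Pi'$ coordinated. This pushes the purported coordinated-termination round backward until it hits the base case (rounds $1$ and $2$, Claim~\ref{clm:term-base}), where one genuinely derives incorrectness because not enough information could have been exchanged. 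Your remark about ``inductive pushing-forward'' points at the difficulty, but pushing the termination round \emph{forward} runs into the problem that the accumulated noise pattern becomes infinite (and the model only promises resilience against finite noise); the paper's backward induction avoids this by never growing the noise pattern, only shrinking the protocol. In short: the idea of erasing the last message is right, but without the $\Pi'$ shortening argument and the reverse induction down to the base case, there is no contradiction, and the proposal as written does not prove the lemma.
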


We prove the lemma via the following claims.
\begin{claim}\label{clm:term-base}
Any coordinated protocol for the bit-exchange task which is resilient to an unbounded amount of erasures can never have a \cterm in rounds $i=1$ or $i=2$.
\end{claim}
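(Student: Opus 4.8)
The plan is to prove Claim~\ref{clm:term-base} directly, by exhibiting for each of the two cases ($i=1$ and $i=2$) a pair of inputs and a noise pattern under which coordinated termination at that round forces the protocol to produce a wrong output on some other input/noise combination. First I would recall the basic structure: the protocol is alternating (say Alice speaks in odd rounds, Bob in even), and by the \cterm hypothesis there exist $x,y$ and a finite noise pattern $P$ for which both parties terminate exactly at round~$i$. The key principle I will use repeatedly is the standard indistinguishability argument: if a party's view up to the point of its decision is identical under two scenarios, it must act identically (terminate at the same round with the same output) in both.

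For $i=1$: at the end of round~$1$ only one transmission has occurred (Alice$\to$Bob). Alice, who has received nothing, cannot yet know $y$, yet coordinated termination at round~$1$ would force her to terminate and output her guess of $y$. I would fix $x$ and let the \cterm occur on inputs $(x,y)$ with some pattern~$P$; since~$P$ is finite and only round~$1$ is relevant to Alice's round-$1$ decision, Alice's view is a function of $x$ and at most the (erased-or-not) status of her round-$1$ receive slot --- but she receives nothing in round~$1$, so her view is just $x$ together with whatever she may have observed, which is independent of~$y$. Hence on input $(x,1-y)$ with the same first-round noise she behaves identically, terminates at round~$1$, and outputs the same (now wrong) value for Bob's bit, contradicting resilience. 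The case $i=2$ is symmetric but requires one extra step: after round~$2$ Alice has received exactly one symbol from Bob. I would erase that symbol (one erasure, still finite and within the ``unbounded~$T$'' budget), so Alice's view is again independent of~$y$; coordinated termination at round~$2$ then forces Alice to output a $y$-independent guess, and flipping~$y$ yields the contradiction as before. Symmetrically one can argue about Bob if it is Bob whose view is input-independent; I would phrase it so that in round~$i\le 2$ at least one party has received at most zero (for $i=1$, Alice) or can be made to have received only erasures (for $i=2$), and that party cannot know the other's bit.

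The main obstacle --- really a bookkeeping subtlety rather than a deep difficulty --- is making the indistinguishability airtight given adaptivity: I must ensure the alternative scenario (flipped input, same or slightly-augmented noise pattern) is still a \emph{finite} noise pattern and still forces termination at round~$i$, so that it genuinely contradicts the conjunction of ``coordinated'' and ``resilient to unbounded noise.'' Concretely I will argue: the victim party's transcript/view through round~$i$ is identical in both scenarios (nothing received, or only~$\bot$ received), so it terminates at round~$i$ in both; by the coordinated property the \emph{other} party also terminates at round~$i$ in both; but the victim's output is a deterministic function of its view and hence the same, so it is wrong on one of the two inputs --- contradicting resilience, which demands correctness for \emph{all} inputs and \emph{all} finite noise patterns. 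I expect the write-up to be short; the remaining rounds $i \ge 3$ are presumably handled by an inductive argument in a separate claim that builds on this base case, so here I only need the two base cases cleanly.
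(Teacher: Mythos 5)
Your plan for $i=1$ matches the paper's argument exactly: Alice has received nothing, her view is just $x$, so her round-$1$ output cannot depend on $y$, and flipping $y$ breaks correctness.

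For $i=2$ your approach is the right one (add an erasure on Bob's round-$2$ message so that Alice's view collapses to $(x,\bot)$ and becomes $y$-independent), but your write-up elides the one step that is actually the crux, and the way you phrase the bridging argument makes it circular. You say ``the victim party's transcript/view through round~$i$ is identical in both scenarios \dots so it terminates at round~$i$ in both; by the coordinated property the \emph{other} party also terminates at round~$i$ in both.'' The first inference is not an inference: identical views imply identical behavior, but they do not imply that the shared behavior is ``terminate at round~$2$'' --- you still owe a proof that Alice terminates at round~$2$ in at least one of the two flipped-input runs. And you cannot get it from indistinguishability with the original \cterm instance, because in that instance (with no erasures, which the paper first shows is WLOG the case) Alice's view is $(x,m_2)$, not $(x,\bot)$; adding the erasure \emph{changes} her view, so a priori she may not terminate there at all. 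The paper closes this gap by arguing from \emph{Bob's} side: Bob's round-$2$ view is $(y,m_1)$, which is unaffected by erasing the message \emph{he} sends, so Bob still terminates at round~$2$ under the augmented noise; coordination then forces Alice to terminate at round~$2$ as well, and only then does her $(x,\bot)$ view give you the $y$-independence you need. Your outline never invokes the unchanged-Bob-view step and instead tries to deduce Bob's termination from Alice's, which is the wrong direction. Adding that one explicit sentence (Bob's view is unchanged by erasing his own outgoing message, so he terminates, so by coordination Alice terminates) would make your $i=2$ argument sound and essentially identical to the paper's.
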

\begin{proof}
For $i=1$ it is trivial that the parties cannot be correct if they terminate at $i=1$ since Alice never learns Bob's bit. For $i=2$, suppose that we have
a protocol~$\Pi$ that satisfies the statement, where the  \cterm is witnessed by some $x,y$ and $P=P_1P_2$; that is, both parties terminate at $i=2$ with the output $x,y$ when the noise is~$P$. 
We argue that $P=00$ or otherwise the protocol cannot be correct. 
In particular, if $P_1\ne 0$ Bob never gets any information from Alice and therefore he must give a wrong output given an instance where Alice holds any other $x'\ne x$ with the same noise $P$. The case for $P_2\ne0$ is similar; Hence $P=00$.

To ease notation, we write $\Pi(x,y \| P)$ to denote an instance of the protocol on inputs~$(x,y)$ with observed noise pattern~$P$. Recall that the parties speak alternatingly and denote by $m_1$ the message (that Alice sends) in round $i=1$, and by $m_2$ the message (that Bob sends) in $i=2$, etc.
The \emph{view} of a party up to some round~$i$, includes its input as well as all the messages received by round~$i$. Note that the behaviour of a party at round~$i$ is a deterministic function of its view at the beginning of that round. 

Now, we claim that Bob cannot terminate at $i=2$ because at that point in  the protocol he doesn't know whether Alice has received his message~$m_2$, or not. 
Suppose towards contradiction that Bob terminates in round $i=2$ given the above instance;
due to the coordinated termination, Alice must also terminate at $i=2$ given the view $(x, m_2=\bot)$.

Now, consider a run on the inputs $(x,y')$ with noise $P'=01$. 
If Bob still terminates in round $i=2$ on $\Pi(x,y'\| P')$ 
then Alice must be wrong on
either $x,y$ or $x,y'$ since she never receives any information from Bob.
If Bob doesn't terminate in $i=2$ then $\Pi$ is not coordinated (and does not satisfy the conditions in the Claim's statement). This is because,  Alice \emph{does} terminate in
$i=2$, since her view in $\Pi(x,y'\| P')$ is $(x,m_2=\bot)$. 
This view is identical to her view in $\Pi(x,y\| P')$ and as mentioned above, given this view Alice terminates in round $i=2$.
\end{proof}

\begin{claim}\label{clm:term-induc}
Suppose 
that any coordinated and resilient to an unbounded number of erasures
protocol~$\Pi$ 
cannot have \cterm in rounds $1,\ldots,i$. Then, any such protocol cannot have a \cterm in round $i+1$.
\end{claim}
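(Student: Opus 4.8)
The plan is to follow the template of Claim~\ref{clm:term-base}: erase the last message of the purported round‑$(i+1)$ \cterm so that its sender cannot tell whether it was delivered, invoke coordination to pin down the other party's behaviour, and then either vary an input to break correctness or observe that coordination is violated. Concretely, assume toward contradiction that some coordinated, unbounded‑erasure‑resilient $\Pi$ has no \cterm in rounds $1,\dots,i$ yet has a \cterm in round $i+1$, witnessed by inputs $(x,y)$ and a finite noise pattern $P$ which (dropping irrelevant tail erasures) we take to have length $i+1$. Without loss of generality Bob speaks in round $i+1$; the symmetric case is handled identically, since in the bit‑exchange task each party must learn the other's bit. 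In round $i+1$ Bob only sends, so his view at the end of round $i+1$ equals his view at the end of round $i$ regardless of whether $m_{i+1}$ is erased; hence if $P_{i+1}=0$ we may replace $P$ by $P_{1..i}\circ 1$ and Bob still terminates in round $i+1$, and by coordination so does Alice. So we may assume $P_{i+1}=1$: in the witness run Alice receives $\bot$ in round $i+1$, so her decision to terminate in that round and her output are a deterministic function of her view $W_A$ at the end of round $i$ alone; since round $i$ is odd, Alice is its sender, so $W_A$ is already her view at the end of round $i-1$, and by correctness her output must equal $y$.

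First I would dispose of the easy sub‑case: if the channel delivered no message at all in rounds $1,\dots,i$ of the witness run, then $W_A$ is a function of $x$ together with a fixed pattern of $\bot$'s, and is independent of $y$. Then for any $y'\neq y$ the run $\Pi(x,y'\,\|\,P)$ presents Alice with the same view $W_A$ at the end of round $i$ and the same $\bot$ in round $i+1$, so she terminates in round $i+1$ with output $y$. If Bob also terminates in round $i+1$ on that run, correctness on $(x,y')$ fails; if he does not, $\Pi$ is not coordinated. Either way we get a contradiction, exactly as in Claim~\ref{clm:term-base}.

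The remaining—and main—difficulty is when some information‑bearing message \emph{was} delivered during rounds $1,\dots,i$, so $W_A$ genuinely depends on $y$ and the naive input swap is unavailable; this is where the induction hypothesis must be used. Let $\ell\le i$ be the last round in which a message was delivered in the witness run. By maximality every message in rounds $\ell+1,\dots,i+1$ is erased, so in the run $\Pi(x,y\,\|\,P_{1..\ell}\circ 1^{\,i+1-\ell})$—identical to the witness through round $\ell$ and then all‑erased—Alice again has view $W_A$ at the end of round $i$ (her receptions after round $\ell$ are $\bot$'s in both runs), terminates in round $i+1$, and by coordination so does Bob. Thus in this run both parties receive nothing but $\bot$'s from round $\ell$ onward yet still terminate in round $i+1$: the block of rounds $\ell+1,\dots,i+1$ is pure ``clock‑running'' that carries no information. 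The plan is to excise this block via a reduction to a strictly shorter coordinated resilient protocol $\Pi'$ that halts as soon as no further delivery can precede termination, argue that $\Pi'$ remains coordinated and unbounded‑erasure‑resilient because nothing informative is lost, and conclude that $\Pi'$ exhibits a \cterm in round $\ell\le i$, contradicting the induction hypothesis. Giving a definition of $\Pi'$ that works uniformly on \emph{all} runs (not only the frozen one above) and checking that the excision preserves coordination and correctness everywhere is the step I expect to require the most care; in the boundary cases $\ell\in\{i-1,i\}$, where the excised block is short, one can instead close the argument by the direct input‑swap of the previous paragraph, which together completes the induction.
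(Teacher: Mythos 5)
Your high-level plan — erase the final transmission of the putative round-$(i+1)$ \cterm so its sender cannot distinguish delivery from erasure, then reduce to a strictly shorter coordinated resilient protocol on which to apply the induction hypothesis — is the right one and matches the structure of the paper's argument. But the central step, the construction of the modified protocol $\Pi'$, is exactly what you leave as ``the step I expect to require the most care,'' and it is the entire content of the claim. As written, the argument is incomplete.

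Furthermore, your plan to excise the whole block of rounds $\ell+1,\dots,i+1$ back to the last delivered message is more ambitious than necessary, and that ambition is precisely what makes a uniform definition of $\Pi'$ hard to give (in other runs, messages in $\ell+1,\dots,i$ may well be delivered and informative). The paper only shaves off \emph{one} round. It takes $E$ to be the set of triples $(u,v,T)$ for which $\Pi(u,v\|T)$ terminates in round $i+1$, and defines $\Pi'$ to behave identically to $\Pi$, except that whenever a party's view agrees with that of some $E$-instance through round $i$ (through round $i-1$ for the receiver of $m_{i+1}$), the party halts in round $i$, with the receiver of $m_{i+1}$ outputting whatever it would output in $\Pi$ upon receiving $m_{i+1}=\bot$. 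This conditional one-round cut is what makes the coordination and correctness checks tractable: for correctness one appeals to the resilient instance $\Pi(u,v\|T\vee 0^i1)$, where $m_{i+1}$ is erased yet both parties still terminate correctly in round $i+1$. Since the induction hypothesis forbids \cterm at \emph{all} rounds $1,\dots,i$, landing at round $i$ is already a contradiction — there is no need to jump further back to round $\ell$. Indeed, your boundary case $\ell=i$ is really the generic case, and it cannot be dispatched by an input swap when a genuine message was delivered in round $i$; it requires the protocol surgery. Your ``easy sub-case'' (no deliveries at all) is a red herring: it is subsumed by the one-round construction and need not be treated separately.
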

\begin{proof}
Assume towards contradiction that a coordinated protocol~$\Pi$ that is resilient to an unbounded number of erasures does have \cterm in round $i+1$, while it is guaranteed that any other such protocol can never have a coordinated termination in rounds~$1,\ldots,i$.
Since $\Pi$ has a \cterm in $i+1$ we know that 
there exist $(x,y)$ and $P$ such that in $\Pi(x,y\| P)$, both parties terminate at $i+1$ with the correct output.

Without loss of generality assume $m_{i+1}$ is sent by Alice.
Consider $P'= P \vee 0^{i}1$. 
We examine  the two different cases of whether Bob halts in round $i+1$, or not.
\begin{description}
\item[1. Bob does not halt in $i+1$.]
In this case  $\Pi$ is not coordinated:  Alice will terminate in round $i+1$ since her view 
in $\Pi(x,y \| P')$ is identical to her view in~$\Pi(x,y\|P)$ in which she terminates in round~$i+1$ by  assumption.
\item[2. Bob halts in $i+1$.]
In this case 
we claim that there exists a coordinated and resilient protocol $\Pi'$ that has \cterm in round~$i$, in contradiction to the claim's assumption. 

Let 
\(
E=\{ (u,v,T) \mid \Pi(u,v\| T) \text{ terminates in~$i+1$} \}
\) 
be the set of all the triples $(u,v, T)$ such that $\Pi$ terminates in~$i+1$ (note that both parties terminate in round $i+1$ due to $\Pi$ being coordinated).
Define a new protocol, $\Pi'$ that behaves exactly like $\Pi'$, except for the case where a party has a view that is identical to $\Pi(u,v\|T)$ with $(u,v,T)\in E$ up to round $i$ (for Alice) or up to round $i-1$ (for Bob). 
In such cases, the parties in $\Pi'$ halt in round~$i$;  Alice outputs whatever she outputs in~$\Pi$, and Bob outputs the same value he outputs in~$\Pi$ given that $m_{i+1}=\bot$.

We claim that $\Pi'$ computes the bit-exchange task, and furthermore, 
$\Pi'$ is coordinated and resilient to an unbounded number of erasures.
First, let us argue $\Pi'$ is coordinated if $\Pi$ is. Assume towards contradiction 
that there is an instance $\Pi'(u,v \| T)$ for which the parties don't have a \cterm.
Since $\Pi'$ behaves like $\Pi$ (and $\Pi$ is normally coordinated), 
this is possible only if one of the parties (but not both!), has a view that is identical
to a view of some instance $\Pi(u',v'\|T')$ with $(u',v',T')\in E$. 

Assume (without loss of generality) that this party is Alice.
Note that if Alice terminates in round~$i$ in~$\Pi'(u,v \| T)$ 
then she terminates in round~$i+1$ in $\Pi(u,v \|T)$ by the definition of $\Pi'$.
Since $\Pi$ is coordinated, Bob terminates in round $i+1$ in $\Pi(u,v\|T)$ as well. 
It follows that $(u,v\|T)\in E$. Hence, Bob also terminates in round~$i$ in $\Pi'(u,v \| T)$
by definition, and we reached a contradiction.

Now let us argue that $\Pi$ is correct and computes the same output as~$\Pi$ for any inputs and noise.
To see this, assume toward contradiction an instance
$\Pi'(u,v \| T)$ for which one of the parties is incorrect, i.e., it outputs a different value than in~$\Pi(u,v\|T)$. 
Since $\Pi'$ usually performs exactly like $\Pi'$,
it must be the case where either Alice or Bob has terminated at round $i$ in $\Pi'(u,v \|T)$ while they don't do so in $\Pi(u,v \| T)$, i.e., $(u,v,T)\in E$. 
Assume Alice terminates in round~$i$. 
This means that in $\Pi'(u,v \| T)$ Alice terminates in round $i+1$, and since she is the sender of $m_{i+1}$, she must have the correct output by round~$i$.
On the other hand, if Alice terminates in round~$i$ without sending~$m_{i+1}$, maybe Bob is missing the information of this last message in order to give the correct output. This can't be the case, by considering the instance $\Pi(u,v \| T \vee 0^i1)$ in which Bob never receives $m_{i+1}$ but still terminates in round $i+1$.
Indeed, 
Alice terminates in~$i+1$ in $\Pi(u,v \| T)$ as argued above; from Alice's point of view, $\Pi(u,v \| T)$ is identical to her view in~$\Pi(u,v \| T \vee 0^i1)$, thus she must terminate in round~$i+1$ in~$\Pi(u,v \| T \vee 0^i1)$ as well. Since $\Pi$ is coordinated,
Bob terminates in round~$i+1$ in~$\Pi(u,v \| T \vee 0^i1)$ as argued. We know that Bob's view till round $i+1$ in $\Pi(u,v \| T \vee 0^i1)$ is the same as Bob's view till round $i-1$ in $\Pi'(u,v \| T)$. Since Bob correctly terminates in in $\Pi(u,v \| T \vee 0^i1)$, he must also correctly terminate in $\Pi'(u,v \| T)$.

The analysis of the other case, where Bob terminates in~$i$, is similar (and in any case we have proved that $\Pi'$ is coordinated).

Finally, note that $(x,y,P)\in E$, i.e., $E$ is non-empty.
Since for any $(u,v,T) \in E$, $\Pi'$ terminates in round~$i$,
we have that $\Pi'$ has a \cterm in round~$i$ witnessed by $(x,y,P)$ contradicting the assumption.
\end{description}
\end{proof}

We can now complete the proof of Lemma~\ref{lem:termination}.
\begin{proof}[Proof of Lemma~\ref{lem:termination}.]
The lemma is an immediate corollary of the above claims. 
A coordinated and resilient protocol (for the bit-exchange task) cannot terminate in rounds $i=1$ or $i=2$ as stated by
Claim~\ref{clm:term-base}.
Furthermore, Claim~\ref{clm:term-induc} serves as the induction step and proves that 
no such protocol
can have a \cterm in any round~$i>2$. Hence, if we have a resilient protocol it cannot be coordinated. 
Namely, there will be a certain situation (inputs and noise) in which the parties terminate in different rounds.
\end{proof}

Finally, we can complete the proof of Lemma~\ref{lem:NoiseAfterTerm}.
\begin{proof}[Proof of Lemma~\ref{lem:NoiseAfterTerm}]
Assume a protocol in which each party terminates at the first round where they are certain that both parties have computed the correct output.
Lemma~\ref{lem:termination} prove that there is always a noise pattern that causes one party (without loss of generality, Alice) to terminate while the other party does not terminate. At the termination time of Alice, Bob is still uncertain that both sides have learned the output (otherwise, he would have terminated as well).
Now, assume that Bob keeps receiving erasures---his information remains the same and thus his uncertainty  remains.
It follows that Bob could not terminate as long as it sees erasures.
Hence, if erasures that arrive after Alice has terminated are not counted towards the noise budget, then the channel may produce erasures indefinitely, and prevent Bob from terminating indefinitely.
\end{proof}

\end{document}